			\g@addto@macro\normalsize{%
			  \setlength\abovedisplayskip{8pt plus 3pt minus 4pt}
			  \setlength\belowdisplayskip{8pt plus 3pt minus 4pt}
			  \setlength\abovedisplayshortskip{6pt plus 2pt minus 3pt}
			  \setlength\belowdisplayshortskip{6pt plus 2pt minus 3pt}
			}
			\def\thm@space@setup{\thm@preskip = 9pt plus 2pt minus 4pt
			\thm@postskip = \thm@preskip}
\tikzstyle{morphism}=[fill=white, draw=black, shape=rectangle]
\tikzstyle{medium box}=[fill=white, draw=black, shape=rectangle, minimum width=0.8cm, minimum height=0.9cm]
\tikzstyle{large morphism}=[fill=white, draw=black, shape=rectangle, minimum width=1.7cm, minimum height=1cm]
\tikzstyle{bn}=[fill=black, draw=black, shape=circle, inner sep=1.5pt]
\tikzstyle{state}=[fill=white, draw=black, regular polygon, regular polygon sides=3, minimum width=0.8cm, shape border rotate=180, inner sep=0pt]
\tikzstyle{medium state}=[fill=white, draw=black, regular polygon, regular polygon sides=3, minimum width=1.3cm, inner sep=0pt, shape border rotate=180]
\tikzstyle{large state}=[fill=white, draw=black, regular polygon, regular polygon sides=3, minimum width=2.2cm, shape border rotate=180, inner sep=0pt]
\tikzstyle{wn}=[fill=white, draw=black, shape=circle, inner sep=1.5pt]
\tikzstyle{blue morphism}=[fill=white, draw={rgb,255: red,15; green,0; blue,150}, shape=rectangle, text={rgb,255: red,15; green,0; blue,150}, tikzit category=blue]
\tikzstyle{blue state}=[fill=white, draw={rgb,255: red,15; green,0; blue,150}, shape=circle, regular polygon, regular polygon sides=3, minimum width=0.8cm, shape border rotate=180, inner sep=0pt, text={rgb,255: red,15; green,0; blue,150}, tikzit category=blue]
\tikzstyle{blue node}=[fill={rgb,255: red,15; green,0; blue,150}, draw={rgb,255: red,15; green,0; blue,150}, shape=circle, tikzit category=blue, inner sep=1.5pt]
\tikzstyle{blue}=[text={rgb,255: red,15; green,0; blue,150}, tikzit draw={rgb,255: red,191; green,191; blue,191}, tikzit category=blue, tikzit fill=white, inner sep=0mm]
\tikzstyle{red node}=[fill={rgb,255: red,150; green,0; blue,2}, draw={rgb,255: red,150; green,0; blue,2}, shape=circle, inner sep=1.5pt]
\tikzstyle{Purple node}=[fill={rgb,255: red,150; green,0; blue,150}, draw={rgb,255: red,150; green,0; blue,150}, shape=circle, inner sep=1.5pt]
\tikzstyle{red}=[text={rgb,255: red,150; green,0; blue,2}, inner sep=0mm, tikzit fill=white, tikzit draw={rgb,255: red,191; green,191; blue,191}]
\tikzstyle{purple}=[text={rgb,255: red,150; green,0; blue,150}, inner sep=0mm, tikzit fill=white, tikzit draw={rgb,255: red,191; green,191; blue,191}]
\tikzstyle{wide state}=[fill=white, draw=black, shape=isosceles triangle, minimum width=0.8cm, shape border rotate=270, inner sep=1.4pt, minimum height=0.5cm, isosceles triangle apex angle=80]
\tikzstyle{white morphism}=[fill=white, draw=white, shape=rectangle, tikzit draw={rgb,255: red,139; green,139; blue,139}]
\tikzstyle{node}=[fill=none, draw=black, shape=circle, inner sep=1.5pt, tikzit fill=white]
\tikzstyle{node_fix}=[fill=none, draw=black, shape=circle, tikzit fill=white, minimum size=15pt, inner sep=0.5pt]
\tikzstyle{node_med}=[fill=none, draw=black, shape=circle, tikzit fill=white, minimum size=17pt, inner sep=1pt]
\tikzstyle{triangle_fix}=[regular polygon, regular polygon sides=3, fill=none, draw=black, minimum size=25pt, inner sep=0.5pt, tikzit fill=white]
\tikzstyle{green point}=[fill={rgb,255: red,33; green,153; blue,51}, draw=black, shape=circle, inner sep=1.5pt]
\tikzstyle{small node}=[fill=black, draw=black, shape=circle, inner sep=1pt]
\tikzstyle{grey}=[text={rgb,255: red,64; green,64; blue,64}, inner sep=0mm, shape=circle, tikzit fill=white, tikzit draw={rgb,255: red,64; green,64; blue,64}]
\tikzstyle{arrow}=[->]
\tikzstyle{dashed box}=[-, dashed]
\tikzstyle{blue arrow}=[-, draw={rgb,255: red,15; green,0; blue,150}, tikzit category=blue]
\tikzstyle{mapsto}=[{|->}]
\tikzstyle{double wire}=[-, double]
\tikzstyle{curly brace}=[decorate,decoration={brace,amplitude=10pt}]
\tikzstyle{curly brace 2}=[decorate,decoration={brace,amplitude=5pt}]
\tikzstyle{fill_ar}=[->, >=stealth']
\tikzstyle{fill_ar2}=[<->, >=stealth']
\tikzstyle{white}=[-, draw=white, tikzit draw={rgb,255: red,141; green,141; blue,141}]
\tikzstyle{white_ar}=[tikzit draw={rgb,255: red,141; green,141; blue,141}, draw=white, ->, >=stealth']
\tikzstyle{ellipse}=[-Ellipse]
\tikzstyle{segment}=[{|-|}]
\tikzstyle{thick_ar}=[->, thick]
\tikzstyle{orange edge}=[-, draw={rgb,255: red,215; green,143; blue,0}, thick]
\tikzstyle{purple edge}=[-, draw={rgb,255: red,125; green,0; blue,250}, thick]
\tikzstyle{thick edge}=[-, draw={rgb,255: red,145; green,145; blue,145}, fill={rgb,255: red,220; green,220; blue,220}, tikzit draw=black, tikzit fill={rgb,255: red,198; green,198; blue,198}]
\tikzstyle{thick grey}=[-, draw={rgb,255: red,145; green,145; blue,145}, fill={rgb,255: red,249; green,245; blue,183}, tikzit draw=black, tikzit fill={rgb,255: red,249; green,245; blue,183}]
\tikzstyle{protected}=[-, preaction={{ultra thick,white,draw}}]
\tikzstyle{blue fill}=[-, draw=none, fill={rgb,255: red,233; green,244; blue,255}, tikzit draw={rgb,255: red,0; green,125; blue,178}]
	\newlength{\myparskip}
	\newlength{\scale}
	\newcommand{\red}[1]{\textcolor{red}{#1}}
	\newcommand{\rob}[1]{{\color{OliveGreen!80!black} #1}}
		\newtheorem{theorem}{Theorem}
		\newtheorem{corollary}[theorem]{Corollary}
		\newtheorem{lemma}[theorem]{Lemma}
			\newenvironment{lemma'}[1]{\addtocounter{theorem}{-1}\begin{lemma}}{\end{lemma}}		
		\newtheorem{proposition}[theorem]{Proposition}
	\theoremstyle{definition}
		\newtheorem{definition}[theorem]{Definition}
		\newtheorem{example}[theorem]{Example}
			\newenvironment{example'}[1]{\addtocounter{theorem}{-1}\begin{example}}{\end{example}}		
		\newtheorem{framework}{Framework}
			\newenvironment{framework'}[1]{\addtocounter{framework}{-1}\begin{framework}}{\end{framework}}		
		\newtheorem*{namedthm}{\namedthmname}	
			\newcounter{namedthm}
			\newenvironment{named}[1]
				{\def\namedthmname{#1}%
					\refstepcounter{namedthm}%
					\namedthm\def\@currentlabel{#1}}
				{\endnamedthm}
		\newtheorem{remark}[theorem]{Remark}
	\newtheoremstyle{question}{\myparskip}{\myparskip}{\color{BrickRed}\normalfont}{0pt}{\bfseries}{.}{5pt plus 1pt minus 1pt}{}
	\theoremstyle{question}
	\newtheoremstyle{answer}{\myparskip}{\myparskip}{\color{PineGreen}\normalfont}{0pt}{\bfseries}{.}{5pt plus 1pt minus 1pt}{\thmname{#1}\thmnote{ \bfseries #3}}
	\theoremstyle{answer}
	\theoremstyle{remark}
	\newtheoremstyle{theoremrepeat}{\myparskip}{\myparskip}{\itshape}{0pt}{\bfseries}{.}{5pt plus 1pt minus 1pt}{\thmname{#1}\thmnote{ \bfseries #3}}
	\theoremstyle{theoremrepeat}
	\DeclarePairedDelimiter{\abs}{\lvert}{\rvert}
	\DeclarePairedDelimiter{\norm}{\lVert}{\rVert}
	\DeclarePairedDelimiterXPP{\pnorm}[2]{}{\lVert}{\rVert}{_{#1}}{#2}
		\let\oldabs\abs
		\def\abs{\@ifstar{\oldabs}{\oldabs*}}
		\let\oldnorm\norm
		\def\norm{\@ifstar{\oldnorm}{\oldnorm*}}
		\let\oldpnorm\pnorm
		\def\pnorm{\@ifstar{\oldpnorm}{\oldpnorm*}}
	\newsavebox{\numbox}%
	\newsavebox{\slashbox}%
	\newsavebox{\denbox}%
	\DeclareDocumentCommand{\newfaktor}{m O{0.5} m O{-0.5}}{
		\savebox{\numbox}{\ensuremath{#1}}
		\savebox{\slashbox}{\ensuremath{\diagup}}
		\savebox{\denbox}{\ensuremath{#3}}
		\raisebox{#2\ht\slashbox}{\usebox{\numbox}}
		\mkern-5mu%
		\rotatebox{-44}{\rule[#4\ht\denbox]{0.4pt}{-#4\ht\denbox+#2\ht\numbox+\ht\numbox}}
		\mkern-4mu%
		\raisebox{#4\ht\denbox}{\usebox{\denbox}}
	}
	\providecommand{\given}{}			
	\newcommand{\SetSymbol}[1][]{%
		\nonscript\;\,#1\vert
		\allowbreak
		\nonscript\;\,
		\mathopen{}
	}
	\DeclarePairedDelimiterX{\Set}[1]{\{}{\}}{%
		\renewcommand{\given}{\SetSymbol[\delimsize]}
		#1
	}
		\let\oldSet\Set
		\def\Set{\@ifstar{\oldSet}{\oldSet*}}
		\renewcommand*\env@matrix[1][\arraystretch]{%
			\edef\arraystretch{#1}%
			\hskip -\arraycolsep
			\let\@ifnextchar\new@ifnextchar
			\array{*\c@MaxMatrixCols c}
		}
	\newcommand{\Id}{\mathsf{Id}}
	\newcommand*{\coloniff}{\mathrel{\vcentcolon\Longleftrightarrow}}
	\def\Tr{\operatorname{Tr}}
	\renewcommand{\geq}{\geqslant}
	\renewcommand{\leq}{\leqslant}
	\newcommand{\bea}{\begin{eqnarray}}
	\newcommand{\eea}{\end{eqnarray}}
	\newcommand{\be}{\begin{equation}}
	\newcommand{\ee}{\end{equation}}
	\newcommand{\ba}{\begin{equation}\begin{aligned}}
	\newcommand{\ea}{\end{aligned}\end{equation}}
	\newcommand{\bit}{\begin{itemize}}
	\newcommand{\eit}{\end{itemize}\par\noindent}
	\newcommand{\ben}{\begin{enumerate}}
	\newcommand{\een}{\end{enumerate}\par\noindent}
	\newcommand{\beq}{\begin{equation}}
	\newcommand{\eeq}{\end{equation}\par\noindent}
	\newcommand{\beqa}{\begin{eqnarray*}}
	\newcommand{\eeqa}{\end{eqnarray*}\par\noindent}
	\newcommand{\beqn}{\begin{eqnarray}}
	\newcommand{\eeqn}{\end{eqnarray}\par\noindent}
	\newcommand{\bra}[1]{\langle #1|}
	\newcommand{\ket}[1]{|#1\rangle}
	\newcommand{\down}{\mathord{\downarrow}}
	\newcommand{\up}{\mathord{\uparrow}}
	\newcommand{\downD}{\mathord{\downarrow_D}}
	\newcommand{\upD}{\mathord{\uparrow_D}}
	\newcommand{\downwrt}[1]{\mathord{\downarrow_{#1}}}
	\newcommand{\upwrt}[1]{\mathord{\uparrow_{#1}}}
	\newcommand{\cost}[2]{#1\textup{-}\mathsf{cost}_{#2}}
	\newcommand{\yield}[2]{#1\textup{-}\mathsf{yield}_{#2}}
	\newcommand{\fmax}{f\textup{-}\mathsf{max}}
	\newcommand{\fmin}{f\textup{-}\mathsf{min}}
	\newcommand{\xmin}[1]{#1\textup{-}\mathsf{min}}
	\newcommand{\calmin}{\mathsf{cva}\textup{-}\mathsf{min}}
	\newcommand{\fWmax}{f_W\textup{-}\mathsf{max}}
	\newcommand{\fWmin}{f_W\textup{-}\mathsf{min}}
	\newcommand{\R}[1]{{R}^{\scriptscriptstyle{(#1)}}}
	\newcommand{\Rfreebasic}[1]{\mathcal{R}^{\scriptscriptstyle{(#1)}}_{\rm free}}
	\newcommand{\Rfree}[1]{\left( \mathcal{R}_{\rm free} \right)^{\sqcuptimes k}}
	\newcommand{\Rcons}[1]{\mathcal{R}^{\scriptscriptstyle{(#1)}}_{\rm cons}}
	\newcommand{\Rconsfree}[1]{\mathcal{R}^{\scriptscriptstyle{(#1)}}_{\rm cons,free}}
	\newcommand{\Interesting}[2]{#1\textup{-}\mathsf{Interesting}\left( #2 \right) }
	\newcommand{\freeconv}{\succeq}
	\newcommand{\reals}{\overline{\mathbb{R}}}
	\newcommand{\ordreals}{(\reals,\geq)}
	\newcommand{\RT}[1]{({#1}_{\rm free}, {#1}, \boxtimes)}
	\newcommand{\res}{R}
	\newcommand{\resfree}{\res_{\rm free}}
	\newcommand{\ordres}{(\res,\freeconv)}
	\newcommand{\tabitem}{\hspace{.2em}\textbullet\hspace{.4em}}
	\newcommand{\enhgeq}{\geq_{\rm enh}}
	\newcommand{\deggeq}{\geq_{\rm deg}}
	\newcommand{\enhconv}{\freeconv_{\rm enh}}
	\newcommand{\degconv}{\freeconv_{\rm deg}}
	\DeclareRobustCommand{\bitcoin}{{%
		\normalfont\sffamily
		\raisebox{-.05ex}{\makebox[.1\width][l]{-\kern-.2em-}}B%
	}}
		\newcommand{\raisedtimes}[1]{\mathpalette\do@raisedtimes{#1}\relax}
		\newcommand{\do@raisedtimes}[2]{\raisebox{#2}{$\m@th#1\times$}}
		\newcommand{\scaledtimes}[2]{\scaleobj{#1}{\raisedtimes{#2}}}
		\newcommand{\sqcuptimes}{\mathrel{\mathpalette\do@sqcuptimes\relax}}
		\newcommand{\do@sqcuptimes}[2]{%
			\ooalign{%
				\hidewidth$#1\m@th\scaledtimes{0.7}{2pt}$\hidewidth\cr
				$#1\m@th\sqcup$\cr
			}%
		}
		\newcommand{\bigsqcuptimes}{\mathop{\mathpalette\do@bigsqcuptimes\relax}}
		\newcommand{\do@bigsqcuptimes}[2]{%
			\ooalign{%
				\hidewidth$#1\m@th\scaledtimes{1.1}{0pt}$\hidewidth\cr
				$#1\m@th\bigsqcup$\cr
			}%
		}
	\newcommand{\hyphenationsetting}{%
		\emergencystretch=0pt	
		\tolerance=2000			
		\pretolerance=1000		
		\righthyphenmin=4		
		\lefthyphenmin=4			
	}
\begin{document}

	\title{Monotones in General Resource Theories}
	\date{}
	\author{Tomáš Gonda}
	\email{tomas.gonda@uibk.ac.at}
	\affiliation{University of Innsbruck, Innrain 52, 6020 Innsbruck, Austria}
	\author{Robert W. Spekkens}
	\email{rspekkens@perimeterinstitute.ca}
	\affiliation{Perimeter Institute for Theoretical Physics, 31 Caroline St N, Waterloo, ON N2L 2Y5, Canada}
	
	\maketitle

	\begin{abstract}
		\noindent
		A central problem in the study of resource theories is to find functions that are nonincreasing under resource conversions---termed monotones---in order to quantify resourcefulness.  
		Various constructions of monotones appear in many different concrete resource theories.
		How general are these constructions?
		What are the necessary conditions on a resource theory for a given construction to be applicable?
		To answer these questions, we introduce a broad scheme for constructing monotones. 
		It involves finding an order-preserving map from the preorder of resources of interest to a distinct preorder for which nontrivial monotones are previously known or can be more easily constructed; these monotones are then pulled back through the map. 
		In one of the two main classes we study, the preorder of resources is mapped to a preorder of sets of resources, where the order relation is set inclusion, such that monotones can be defined via maximizing or minimizing the value of a function within these sets.
		In the other class, the preorder of resources is mapped to a preorder of tuples of resources, and one pulls back monotones that measure the amount of distinguishability of the different elements of the tuple (hence its information content). 
		Monotones based on contractions arise naturally in the latter class, and, more surprisingly, so do weight and robustness measures. 
		In addition to capturing many standard monotone constructions, our scheme also suggests significant generalizations of these.
		In order to properly capture the breadth of applicability of our results, we present them within a novel abstract framework for resource theories in which the notion of composition is independent of the types of the resources involved (i.e., whether they are states, channels, combs, etc.).
	\end{abstract}

	\tableofcontents
	
	\setlength{\parskip}{\myparskip}				
	\setlength{\parindent}{0pt}					
		

	\section{Introduction}
\label{sec:Introduction}
	
	In physics there is a long tradition of taking a pragmatic perspective on physical phenomena, and more specifically of focussing on how certain physical states or processes can constitute \emph{resources}.
	A prominent example is the study of heat engines and the advent of thermodynamics.
	Here, one seeks to determine what work can be achieved given access to a heat bath and systems out of thermal equilibrium, such as a compressed gas.
	It was eventually understood that resources of thermal nonequilibrium could be of informational character. 
	One of the most famous examples is the Szilard engine \cite{Szilard1929}, which uses information about the state of a system in order to perform work.	
	Resource-theoretic thinking has expanded with the development of information theory.
	The pioneering work of Claude Shannon \cite{Shannon1948} is centered around questions regarding the convertibility of communication resources.	
	It is not surprising, therefore, that with the rising prominence of information theory in physics in recent years, the application of resource-theoretic ideas in physics has also been on the rise.  
	
	The most relevant of these efforts for us is the development of quantum information theory. 
	Once it was understood that entangled quantum states constitute a resource for communication tasks, the property of entanglement began to be studied as a resource.
	Specifically, in circumstances wherein quantum communication is expensive while local operations and classical communication (LOCC) are free, one can conceptualize the distinction between entanglement and lack thereof  
		in terms of whether a state can be generated by LOCC.
	Furthermore, one can define an ordering over all states wherein one state is above (and hence more entangled than) another if the first can be converted to the second by LOCC.
	The resource theory of entanglement is now very well developed and continues to find a variety of applications in quantum information theory~\cite{Horodecki2009,Devetak2008}.
	We return to this theory in \Cref{ex:rt_quantum} \ref{ex:rt_entanglement} and other examples throughout the text.

	The success of entanglement theory inspired researchers to study other properties of quantum states and channels as resources relative to a set of operations considered to be free.
	For example, symmetry-breaking states can be characterized as resources relative to symmetric operations \cite{Marvian2013} and states of thermal nonequilibrium can be characterized as 
		resources relative to thermal operations \cite{Brandao2013}. 
	These resource theories have led to surprising and important conceptual insights, such as a generalization of Noether's theorem \cite{Marvian2014} and a refinement of our understanding of the second law of 
		thermodynamics \cite{Brandao2015}.
	Many other examples of the use of resource theories within quantum information theory have followed \cite{Chitambar2018}. 

	In the approach to conceptualizing resources that we have just outlined, a choice of free operations defines a preorder relation on resource objects given by the existence of a conversion between the resources in question 
		by the free operations.
	Measures of resourcefulness can then be defined in terms of monotones, that is, real-valued functions that respect the order relation.
	In earlier works, researchers sometimes took the goal of the resource-theoretic endeavour to be the identification of \emph{the} correct measure of the resource, and worked under the mistaken impression that there was 
		``one measure to rule them all''. 
	However, if the preorder is not a total order, then there will in general be many independent measures.
	In our view, the preorder is the fundamental structure, while any particular measure is typically a coarse-grained and incomplete description thereof.

	One useful tool for characterizating the preorder is an algorithm which can solve the decision problem associated to a particular ordering relation, that is, one which takes as input a description of any pair of 
		resources and outputs whether or not the first can be converted to the second by the free operations. 
	However, even if one has identified such an algorithm, it might still be difficult to answer simple questions about the global structure of the preorder.
	For instance, the following properties of the preorder cannot easily be determined in this way (as noted in~\cite{Wolfe2019}): its height (cardinality of the largest chain), its width (cardinality of the largest antichain), 
		whether or not it is totally ordered (i.e., fails to have any incomparable elements), whether or not it is weak (i.e., the incomparability relation is transitive), and whether or not it is locally finite 
		(i.e., with finite number of inequivalent elements between any two ordered elements).
	Monotones can provide a better route to answering such questions. 
	In order to learn all of these properties, it sometimes suffices to find a few nontrivial monotones~\cite{Wolfe2019}. 
	Another point in favour of looking for monotones arises when these quantify the usefulness of resources for a given task in a direct manner \cite{Takagi2019,Ducuara2019}. 

	Consequently, constructing useful monotones is indeed a critical part of developing a concrete resource theory.
	In this article, we approach the problem of devising schemes for constructing useful monotones within an abstract framework for resource theories, designed to be sufficiently general to capture a large variety of 
		concrete resource theories of interest. 
	The details of abstract framework we use here (see \Cref{sec:ResourceTheory}) are chosen with two desiderata in mind, besides the obvious requirement that the framework should be applicable to the study of resource theories.
	On the one hand, the framework should only contain structure relevant to the monotone-construction-schemes we study.
	That is why we deliberately do not use the language of quantum theory (or process theories more broadly) and we do not assume the existence of convex combination (unless necessary).
	On the other hand, we do not wish to obscure the concepts presented by needless abstraction. 
	A more general framework that aims to broaden the scope of future investigations of resource theories is presented in the PhD thesis of one of the authors \cite{Gonda2021} instead.
	There, one can also find an extensive analysis of the connection of our framework for resource theories presented here to those of ordered commutative monoids~\cite{Coecke2016,Fritz2017} and partitioned process theories~\cite{Coecke2016}.
	
	One of the key simplifications we introduce is that we omit any explicit sequential information about resource interactions.
	For example, in the context of partitioned process theories, we thus model combining resources neither in terms of the parallel composition of processes ($\otimes$) nor in terms of their sequential composition ($\circ$).
	This is because both are \emph{non-commutative} operations and therefore carry sequential information.
	A commutative operation that fits our purposes is one called ``universal combination'' \cite{Coecke2016} and denoted by $\boxtimes$.
	It is called universal because it specifies the combinations of processes in arbitrary fashion.
	Motivated by this connection, we refer to a general commutative binary operation on resources as a \emph{universal combination} and denote it by $\boxtimes$ throughout this article even if it no longer applies to resources that are processes in some process theory.
	
	Among the many process combinations, there are some that generate higher-order processes \cite{wilson2022mathematical}.
	For instance, one may combine two channels, $f$ and $g$, into a higher-order process that implements the following map of channels to channels:
	\begin{equation}
		\tikzfig{comb_from_channels}
	\end{equation}
	The specific higher-order process is a 1-comb in this case and it is indicated by the blue background in the above diagram.
	Universal combination thus naturally leads to many different types of resources. 
	Specifically, even if we are interested in the resources in the form of quantum states (processes with trivial input), we will incorporate not just states, but also channels and quantum combs \cite{Chiribella2008} within the same theory.
	
	The first motivation for such type-independence is that one \emph{is} generally interested in more than one type of resource.
	This becomes explicit if we interpret the \emph{tasks} we want to use the resources for just as another type of a \emph{resource}.
	Whereas many monotones in quantum resource theories are commonly defined for states only, the resourcefulness of channels \cite{Liu2019,liu2020operational,gour2019quantify} and combs is also of interest. 
	For instance, situations where combs play a central role include adaptive strategies in quantum games \cite{Gutoski2007} and algorithms for a measurement-based quantum computer \cite{Raussendorf2003}.
	Resource-theoretic studies of these therefore require a framework that can handle more than just states and channels.

	Furthermore, type-independence is important because it allows one to consider conversions between different types of resources (e.g., see \cite{schmid2020type}).
	In the context of quantum entanglement, for example, one may be interested in understanding the possibilities for interconversion between entangled states and quantum channels (exemplified by the teleportation protocol).
	If one seeks to understand the interplay between states and channels in terms of measures of resourcefulness, it is necessary that these measures be applicable to both types of resources.  
	This motivates our choice that type-independence is built into the abstract framework for resource theories we use---schemes for constructing monotones within such a framework necessarily yield type-independent monotones.
	
	In spite of the attractiveness of resource theories with universal composition, a restriction on the types of resources being considered is often helpful when we aim to answer concrete questions of limited scope.
	For example, if we are only ever interested in transitions between states, it suffices to use a resource theory of states that may be modelled as a partitioned process theory \cite[Section 3.2]{Coecke2016} or a quantale module \cite[Example 3.13]{Gonda2021}, the latter of which is closer to the analysis presented here. 
	The key point 
		is that answers to the questions that we investigate here do not hinge on the assumption of universal composition.
	Indeed, one can develop them in the more general setting of quantale modules \cite{Gonda2021} as well as other approaches.
	The minimal framework we use allows us to describe the results in a clear manner devoid of unnecessary structures, but the ideas can equally well be applied in the context of type-restricted resource theories such as 
		resource theories of states.  
	In particular, any reasonable model of a resource theory that gives rise to a resource order relation can accommodate the ideas we propound here.
	
	We introduce a general scheme for constructing monotones, which we term the \ref{broad scheme} and which applies to any such notion of a resource theory with an order relation 
		among the resources.
	The idea is that the problem of finding monotones for the preordered set of resources can be conceptualized as a two-step procedure that consists of identifying
	\begin{itemize}
		\item an order-preserving map from the set of resources to a \emph{distinct} preordered set, and
		\item a monotone for the latter preorder which is then translated to the preorder of interest by precomposition.
	\end{itemize}
	Details are provided in \Cref{sec:Order theory}.
	We refer to the three elements in this \ref*{broad scheme} as the mediating order-preserving map, the mediating preordered set, and the root monotone.  
	Various monotone constructions then correspond to different choices of these three elements. 
	To cast an existing monotone construction into the mold of the \ref*{broad scheme}, one merely identifies what choices of each of these elements yield the given construction. 
	Alternative constructions can be obtained immediately by simply varying any of the elements.
	In this way, the \ref*{broad scheme} provides a means of classifying the set of all such constructions as well as generalizing the known ones.
	
	\begin{example}[translating measures of nonuniformity to measures of entanglement]
	\label{ex:Nielsen's Theorem}
		As was shown by Nielsen in \cite{Nielsen1999}, entanglement properties are connected to majorization that is well-understood \cite{Marshall1979}.
		Specifically, the ordering of pure quantum states in the resource theory of entanglement (see \Cref{ex:rt_quantum} \ref{ex:rt_entanglement}) is equivalent to the (reverse) ordering of probability distributions by majorization. 
		Majorization monotones are given by Schur-convex functions.
		It follows that all such measures of ``nonuniformity'' of distributions (see \Cref{ex:rt_classical} \ref{ex:rt_nonuniformity}) can be translated to measures of entanglement via the mediating map found by Nielsen, as pointed out in \cite{Coecke2010} where the mediating map was termed a qualitative measure of entanglement.
	\end{example}
	\begin{example}[translating measures of distinguishability to measures of asymmetry]
	\label{ex:Bridge Lemma}
		In Lemma 1 of reference \cite{Marvian2013}, it is shown that there is a mediating order isomorphism from the partial order of quantum states, ordered with respect to convertibility under covariant operations (see \Cref{ex:rt_quantum} \ref{ex:rt_asymmetry}), to the partial order of tuples of quantum states (specifically, the orbits of a quantum state under the symmetry group), ordered with respect to convertibility under arbitrary operations.
		In this way, one can translate (root) monotones from the resource theory of distinguishability (see \Cref{sec:Monotones from Information Theory}) to the resource theory of asymmetry.
	\end{example}

	The art of constructing useful monotones, therefore, is not simply the art of finding specific order-preserving functions for the resource theory of interest.  
	It is also the art of identifying (mediating) order-preserving maps from the theory under investigation to a distinct preorder for which the problem of finding good monotones is easier, or for which many good monotones are 
		already known. 

	The \ref*{broad scheme} provides a useful perspective on monotone constructions based on resource yield (see Section~\ref{sec:yield and cost for subset}).
	Yield constructions for some given function do not specify the value of the function on the resource itself, but rather find the highest value that the function can take among all resources achievable from the resource in question 
		by the free operations.
	The perspective is that there is a mediating order isomorphism between the partial order of resources and the partial order of free images of resources, where the free image of a resource is the set of all resources that can be 
		obtained from it for free. 
	Root monotones can then be easily defined on the latter partial order, for instance in terms of the supremum of the given function over these sets. 
	A similar interpretation is possible for the dual notion of a cost construction.
	This perspective suggests many natural generalizations of such cost and yield constructions, which are the subject of \Cref{sec:yield and cost} of this article.

	If the mediating preorder itself arises from some resource theory, such as in \Cref{ex:Nielsen's Theorem}, then the \ref*{broad scheme} describes a method for translating monotones from one resource theory to another. 
	We explore similar situations in greater detail in \Cref{sec:Translating Monotones}.
	In particular, in \Cref{sec:Monotones from Information Theory}, we consider schemes wherein the mediating preordered set describes tuples of resources from the original resource theory and their convertibility 
		under processes that act identically on each element of the tuple.
	Such a mediating preorder can be understood in terms of a resource theory of distinguishability \cite{Wang2019}. 
	The order-preserving map from the partial order of asymmetric states to the partial order of group orbits of these states, described in Example \ref{ex:Bridge Lemma}, is an instance of this method. 
	Among other instances of this scheme we show how standard monotone constructions, such as weight, robustness and relative entropy based measures, can be reconceived in this way, and this recasting leads us to propose natural generalizations thereof. 
	
	An overview of the most directly applicable ideas for constructing monotones from \Cref{sec:yield and cost,sec:Translating Monotones} is presented as a table in \Cref{sec:monotone_overview}.
	Given basic familiarity with resource theories and some of the notation introduced in \Cref{sec:ResourceTheory}, the table can be also understood prior to reading \Cref{sec:yield and cost,sec:Translating Monotones}.

	Finally, in \Cref{sec:Ordering monotones}, we introduce an ordering among monotones that captures their relative informativeness about the preorder of resources. 
	Such an ordering relation could conceivably also aid in the project of classifying monotone constructions, and we make some progress toward this goal by proving 
		\Cref{thm:yield and cost from more informative functions}, which asserts that ``more informative functions generate more informative monotones''.
	\subsection*{Acknowledgements}	
		
		We would like to thank Elie Wolfe, Denis Rosset, Tobias Fritz, and Bob Coecke for useful discussions.
		This research was supported by a Discovery grant of the Natural Sciences and Engineering Research Council of Canada and by the Perimeter Institute for Theoretical Physics.
		Research at Perimeter Institute is supported in part by the Government of Canada through the Department of Innovation, Science and Economic Development Canada and by the Province of Ontario through the Ministry of Colleges and Universities.
	\section{General Resource Theories}
\label{sec:ResourceTheory}
	
	Before presenting our working abstract definition of a resource theory, we outline the features we would like it to have. 
	First of all, a resource theory should describe a collection $R$ of resources---the objects of study.
	Secondly, it should describe a way of combining these together, which we model by an associative, binary operation $\boxtimes$ and call the \textbf{universal combination}. 
	Finally, it must incorporate a structure that specifies what conversions between the resources are possible and under what conditions.
	This is achieved by specifying a subset $R_{\rm free}$ of resources deemed to be \emph{free}.
		
	\begin{figure}[tbh!]
		\centering
		\tikzfig{universal_composition_6}
		\caption{Universal combination of two processes in a process theory.
			For simplicity, the input and output wires of $f$ and $g$ are of the same type here.
			We can read the right-hand side as a set of five diagrams, the $\cup$ symbol indicates union of sets and each diagram in the union is thought of as a singleton set of diagrams.
			The light blue background indicates the type of the resulting process---a channel with two inputs and two outputs in the first case, a channel with a single input and single output in the next two and a 1-comb for the last two diagrams.}
		\label{fig:Universal composition 2}
	\end{figure}
	\begin{example}[resource theory of universally combinable processes]
		\label{ex:RT of states 1}
		In a resource theory of universally combinable processes, defined in \cite{Coecke2016}, $R$ corresponds to the set of all processes of interest and $\boxtimes$ describes the combination of these 
			via wirings.
		In Figure~\ref{fig:Universal composition 2}, we illustrate the universal combination of two processes, $f$ and $g$ (see Example~\ref{ex:Star operation for universally combinable processes} for more details).
		A concrete resource theory is then given by specifying a set of free processes $R_{\rm free}$ that is closed under the $\boxtimes$ operation.
		Recall that there are various types of processes: 
		States are processes with trivial inputs, channels are processes that have an input and an output, and combs are processes with many inputs and many outputs, interleaved over `time' (such as the last two elements 
			in Figure~\ref{fig:Universal composition 2}). 
		In a theory of universally combinable processes, the set of resources $R$ in general includes all of these types of processes. 
		Similarly, the set of \emph{free} resources $R_{\rm free}$ may include representatives of each type.
		
		Many concrete \emph{quantum} resource theories have been studied in recent years~\cite{Chitambar2018}. 
		For each of these, one can hope to define a choice of $R_{\rm free}$ which corresponds to the free operations that define the resource theory.
		Note, however, that most previous work did not take universal combination as the relevant notion of composition and often took the resources of interest to be processes of a specific type (most commonly states).
		Processes that model the conversions between resources are then typically of a different type (most commonly channels).
		The framework we use here presumes universal combination and therefore does not require any distinction to be made a priori between the resources and the processes achieving conversions---each object is 
			conceptualized \emph{both} as a resource in its own right \emph{and} as something that can be used to achieve a conversion of one resource into another.
	\end{example}

	Our philosophy is that universal combination is a natural notion of combination in a resource theory, which is why we introduce a binary operation $\boxtimes$ that satisfies properties of universal combination.
	However, concrete resource theories that have been studied previously typically consider only specific types of resources and are therefore \emph{not} universally combinable in the sense of \Cref{def:resource theory} below.
	Rather, they are instances of frameworks in which there \emph{is} a conceptual distinction between the objects that constitute resources and those that constitute transformations.
	Such frameworks include resource theories of states arising from partitioned process theories \cite{Coecke2016} as well as resource theories defined through quantale modules \cite{Gonda2021}.
	Nonetheless, such concrete type-specific resource theories can in general be embedded into a universally combinable one. 
	Specifically, the embedding makes sense if (in the case of a resource theory of states) both 
	\begin{itemize}
		\item the set of free operations is the restriction of a set of free resources closed under some notion of universal combination to channels, and
		\item the set of free states is the restriction of the same set of free resources closed under the same notion of universal combination to states.
	\end{itemize}
	In such cases, the definition of the concrete type-specific resource theory is judged to be consistent with our approach. 
	This notion of embedding implies that results derived in the framework of universally combinable resource theories can be carried over to address questions in concrete type-specific resource theories. 
	See \cite[Chapter 4]{Gonda2021} for more details on these generalizations.

	Generically, processes can be wired together in multiple ways (see \Cref{fig:Universal composition 2}).
	In order to capture this feature, the object $r \boxtimes s$ that represents the universal combination of resources $r$ and $s$ is not another resource.
	Rather, it is a \emph{collection} of resources each describing a particular way of combining $r$ and $s$.
	We can thus view $r \boxtimes s$ as an element of $\mathcal{P}(R)$, the power set of the underlying set $R$ of all individual resources. 
	
	The interpretation of $r \boxtimes s$ with respect to the individual resources $r$ and $s$ is that an agent has access to \emph{both} $r$ and $s$ (and they can be combined in various ways).
	On the other hand, the union of $\{r\}$ and $\{s\}$---the set $\{r,s\} \in \mathcal{P}(R)$---represents an agent having access to \emph{either} resource $r$ or resource $s$, but not both.
	Because of this interpretation, we require that the $\boxtimes$ operation distributes over unions, just like conjunction distributes over disjunction.
	That is, for any two sets of resources $S,T \in \mathcal{P}(R)$, we have
	\begin{equation}
		\label{eq:Combination of sets of resources}
		S \boxtimes T = \bigcup_{s\in S, t \in T} s \boxtimes t,
	\end{equation}
	where $s \boxtimes t$ is a shorthand notation for $\{s\} \boxtimes \{t\}$.
	As mentioned beforehand, in this article $s \boxtimes t$ corresponds to \emph{all} valid compositions of $s$ with $t$, whence $s \boxtimes t = t \boxtimes s$ for all $s,t \in R$.
	That is, universal combination is \emph{commutative}. 
	Furthermore, we assume that there is a \textbf{neutral set} of resources denoted by $0 \in \mathcal{P}(R)$ that satisfies
	\begin{equation}
		0 \boxtimes S = S = S \boxtimes 0
	\end{equation}
	for all $S \in \mathcal{P}(R)$.
	In the resource theories of universally combinable quantum processes from example~\ref{ex:RT of states 1}, the neutral set $0$ consists of all identity processes.
	More generally, we would interpret the neutral set as consisting of resources that cannot be used for non-trivial conversions.

	Notice that the neutral set $0$ differs from the empty set $\emptyset \in \mathcal{P}(R)$, which satisfies
	\begin{equation}
		\emptyset \boxtimes S = \emptyset = S \boxtimes \emptyset
	\end{equation}
	and which is the bottom element of the complete Boolean lattice $(\mathcal{P}(R), \cup)$.
	If $r \boxtimes s = \emptyset$ holds, then the interpretation is that the resources $r$ and $s$ are mutually incompatible---there is no way to combine them.
	For example, in the context of deterministic computation, if $r$ and $s$ denote two states of the same register, then they cannot coexist and therefore they cannot be combined.
	On the other hand, if $r \boxtimes s \subseteq 0$ holds, then we would say that any way to combine $r$ and $s$ produces a resource in the neutral set, thus effectively discarding them.
	
	Altogether, we get a commutative monoid $(\mathcal{P}(R), \boxtimes, 0)$ with a monoidal operation $\boxtimes$ that distributes over unions.
	A natural way to understand this structure is in terms of a quantale \cite{rosenthal1990quantales}.
	However, in this article we avoid using the language of quantales since it is not necessary as far as the results presented here are concerned.
	
	When defining a resource theory, we 
		also identify a distinguished subset of resources that are \emph{free}, denoted by $R_{\rm free}$.
	These resources are free in the sense that one can access them in unlimited supply without restrictions, whence we impose that combining free resources together cannot yield a non-free resource.
	That is, we require that
	\begin{equation}
		\label{eq:Compatibility of free set and star 3}
		R_{\rm free} \boxtimes R_{\rm free} \subseteq R_{\rm free}.
	\end{equation}
	holds, leading to the following definition.
	\begin{definition}
		\label{def:resource theory}
		A \textbf{universally combinable resource theory}\footnotemark{} 
			(in this article also referred to as a \textbf{resource theory} for short) $\mathcal{R} = \RT{R}$ consists of a set of resources $R$, a subset of free resources $R_{\rm free} \subseteq R$ and a binary operation $\boxtimes  \colon \mathcal{P}(R) \times \mathcal{P}(R) \to \mathcal{P}(R)$ such that
	\footnotetext{Following \cite{Gonda2021}, we use the terminology ``universally combinable resource theory''  in order to distinguish resource theories considered here from a broader class of resource theories (cf.\ Section 3.3 of \cite{Gonda2021}).}%
		\begin{enumerate}
			\item $(\mathcal{P}(R), \boxtimes, 0)$ is a commutative monoid with a submonoid $(\mathcal{P}(R_{\rm free}), \boxtimes, 0)$, and
			
			\item the operation $\boxtimes$ distributes over the union operation $\cup$ as described by equation~(\ref{eq:Combination of sets of resources}).
		\end{enumerate}
	\end{definition}
	Since in \Cref{def:resource theory} we impose that neutral resources are free (i.e., $0 \subseteq R_{\rm free}$) we can express the closure property \eqref{eq:Compatibility of free set and star 3} as $R_{\rm free} \boxtimes R_{\rm free} = R_{\rm free}$.

	The order relation between resources is induced by the choice of the submonoid of free resources.
	The allowed conversions are those that arise via a composition with elements of the free set $R_{\rm free}$.
	Given a resource theory $\RT{R}$, we define the \textbf{resource ordering}, denoted $\succeq$, by
	\begin{equation}
		\label{eq:Order of resources}
		r \succeq s  \quad \iff\quad  s \in R_{\rm free} \boxtimes r
	\end{equation}
	for any $r, s \in R$, where $R_{\rm free} \boxtimes r$ is a shorthand notation for $R_{\rm free} \boxtimes \{r\}$.
	The order relation captures whether $r$ can be converted to $s$ by means of composition with free resources.
	It can be used to determine the value of resources with respect to the choice of the partition of $R$ into free and non-free resources.
	If $r$ can be converted to $s$ for free, i.e., if $r \succeq s$ holds, then we say that $r$ is better than (or equivalent to) $s$ as a resource in the resource theory $\RT{R}$.
	With this order relation, the set of resources becomes a preordered set $\ordres$.
	
	Similarly, we can define the ordering of \emph{sets} of resources by
	\begin{equation}
		\label{eq:Order of sets of resources}
		S \succeq T \quad \iff\quad T \subseteq R_{\rm free} \boxtimes S
	\end{equation}
	for any $S,T \in \mathcal{P}(R)$.
	Again, $(\mathcal{P}(R), \succeq)$ is a preordered set. 
	
	In the framework for resource theories in terms of ordered commutative monoids \cite{Fritz2017}, one requires a compatibility between the order relation and the monoidal operation as one of the axioms.
	Here, we can derive a corresponding property from Definition~\ref{def:resource theory}.
	\begin{lemma}[compatibility of $\boxtimes$ and $\succeq$]
		\label{thm:Compatibility of star and order}
		Let $\RT{R}$ be a resource theory with the corresponding order relation $\succeq$ defined as in~\eqref{eq:Order of sets of resources}.
		For any three subsets $S,T,U$ of $R$, we have
		\begin{equation}
			S \succeq T \implies S \boxtimes U \succeq T \boxtimes U.
		\end{equation}
	\end{lemma}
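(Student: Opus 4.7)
The plan is to unfold the definitions on both sides and reduce the claim to two elementary properties of $\boxtimes$: its monotonicity with respect to $\subseteq$ (which follows from distributivity over unions) and its associativity (part of the commutative monoid structure).

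First I would unpack the hypothesis: by \eqref{eq:Order of sets of resources}, $S \succeq T$ means $T \subseteq R_{\rm free} \boxtimes S$. The goal, again by \eqref{eq:Order of sets of resources}, is to show $T \boxtimes U \subseteq R_{\rm free} \boxtimes (S \boxtimes U)$. So the task is to transport the inclusion $T \subseteq R_{\rm free} \boxtimes S$ across the operation $(\ph) \boxtimes U$ and then rearrange the parentheses.

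The intermediate lemma I would state and prove first is that $\boxtimes$ is monotone in each argument with respect to $\subseteq$: if $A \subseteq B$ in $\mathcal{P}(R)$, then $A \boxtimes C \subseteq B \boxtimes C$. This is immediate from distributivity over unions \eqref{eq:Combination of sets of resources}: writing $B = A \cup B$ yields
\begin{equation}
    B \boxtimes C = (A \cup B) \boxtimes C = (A \boxtimes C) \cup (B \boxtimes C) \supseteq A \boxtimes C.
\end{equation}
Applying this monotonicity to the hypothesis $T \subseteq R_{\rm free} \boxtimes S$ with $C = U$ gives $T \boxtimes U \subseteq (R_{\rm free} \boxtimes S) \boxtimes U$.

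Finally, by associativity of $\boxtimes$ (from the commutative monoid structure of $(\mathcal{P}(R), \boxtimes, 0)$ in \cref{def:resource theory}), $(R_{\rm free} \boxtimes S) \boxtimes U = R_{\rm free} \boxtimes (S \boxtimes U)$. Combining the two yields $T \boxtimes U \subseteq R_{\rm free} \boxtimes (S \boxtimes U)$, which is exactly $S \boxtimes U \succeq T \boxtimes U$. There is no real obstacle here; the only subtle point is recognizing that monotonicity of $\boxtimes$ is not assumed directly but is a consequence of the union-distributivity axiom, so I would make sure to invoke \eqref{eq:Combination of sets of resources} explicitly rather than treat monotonicity as given.
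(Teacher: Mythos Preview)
Your proposal is correct and follows essentially the same approach as the paper: unfold the definition of $\succeq$, pass to an inclusion, apply $(\ph)\boxtimes U$ to both sides, and use associativity. The only difference is that you explicitly derive the monotonicity of $\boxtimes$ with respect to $\subseteq$ from the union-distributivity axiom~\eqref{eq:Combination of sets of resources}, whereas the paper uses this step tacitly; your version is slightly more self-contained in that regard.
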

	
	\begin{proof}
		By the definition of $\succeq$, we have $S \succeq T \iff R_{\rm free} \boxtimes S \supseteq T $, which implies 
		\begin{equation}
			(R_{\rm free} \boxtimes S) \boxtimes U \supseteq T \boxtimes U.
		\end{equation}
		Via the associativity of $\boxtimes$, we can then conclude that $S \boxtimes U \succeq T \boxtimes U$ must hold whenever $S$ is above $T$ according to the order relation $\succeq$.
	\end{proof}
	
	With respect to the preorder $\ordres$, two resources $r$ and $s$ are said to be \emph{equivalent} (denoted $r \sim s$) if both $r \succeq s$ and $s \succeq r$ hold. 
	Similarly, two sets of resources $S$ and $T$ are said to be equivalent if both $S \succeq T$ and $T \succeq S$ are true, which we denote by $S \sim T$.
	However, as we will see later, there are multiple relevant order relations on $\mathcal{P}(R)$, each of which defines a distinct notion of equivalence for sets of resources.
	One should therefore be careful to attach the right interpretation to $S \sim T$, depending on the order relation used.

	Equivalent resources can be freely converted one to another.
	Therefore, from the point of view of resource convertibility, there is no need to distinguish them unless distinguishing them provides a more convenient representation.
	When we remove this degeneracy we obtain a quotient resource theory $\mathcal{R} / {\sim}$ wherein the resources are equivalence classes of resources in $\mathcal{R}$.
	The quotient need not be a universally combinable resource theory in the strict sense of \Cref{def:resource theory}.
	It does, however, fit into a more general framework in terms of quantale modules \cite{Gonda2021}, which we can think of as replacing the complete Boolean lattice $(\mathcal{P}(R), \cup)$ in   \Cref{def:resource theory} with an arbitrary suplattice.
		
	\subsection{Examples of General Resource Theories}
	\label{sec:Examples of General Resource Theories}
		
		A resource theory satisfying \Cref{def:resource theory} embodies the idea of resources without arbitrary restrictions on the allowed combinations $\boxtimes$.
		It only has a restriction in the form of the set of free resources that generate the resource ordering, representing a restriction on the capabilities of certain agents or on the abundance of certain resources.
		As such, it can naturally accommodate resource theories of universally combinable processes \cite{Coecke2016}, which are defined in the same spirit.
		
		
		\begin{example}[universally combinable processes]
			\label{ex:Star operation for universally combinable processes}
			In a resource theory of universally combinable processes, the set $R$ corresponds to the set of all diagrams with no loops and for any two processes $f$ and $g$, their universal combination $f \boxtimes g$ is the set of all processes one can obtain by ``wiring'' $f$ and $g$ together as depicted in \Cref{fig:Universal composition 2}.
			In order to understand it better, we can consider processes, not all of whose inputs and outputs coincide---e.g.\ by virtue of having multiple output wires:

			\begin{equation}\label{eq:universal_composition_4_v3}
				\tikzfig{universal_composition_4_v3}
			\end{equation}	
			We can then infer the universal combination $f \boxtimes g \boxtimes h$ of three processes, as long as we also specify the universal composition of a channel and a comb:
			\begin{equation}\label{eq:universal_composition_5}
				\tikzfig{universal_composition_5}
			\end{equation}
			In \Cref{eq:universal_composition_5}, the box labelled as $f \boxtimes g$, for instance, does not refer to an individual diagram.
			Instead, it is used as a shorthand for the collection of diagrams (both channels and combs) as specified by $f \boxtimes g$ in \Cref{fig:Universal composition 2}.
		\end{example}
		
		As we mentioned before, resource theories studied at present rarely follow the exact structure of a universally combinable resource theory according to \Cref{def:resource theory}.
		\Cref{ex:counterexample} below explicates this fact in a more concrete manner.
		
		\begin{remark}[\Cref{def:resource theory} excludes type-restricted theories]
			\label{ex:counterexample}
			Note that in a resource theory of universally combinable processes, the channel given by the sequential composition $f \circ h \circ g$ \emph{is} an element of $f \boxtimes g \boxtimes h$.
			This fact can be established by a successive application of the equation shown in \Cref{fig:Universal composition 2}. 
			However, if we impose the restriction that the set of resources $R$ only includes channels (and states as a special case) but not combs, then we reach a contradiction with the assumption that $\boxtimes$ is an associative, commutative and binary operation.
			For instance, restricting the right-hand side of \Cref{fig:Universal composition 2} to exclude the last two, higher-order, processes leads one to conclude that $f \circ h \circ g$ \emph{is not} an element of $(f \boxtimes g) \boxtimes h$.
			This is despite the fact that it would be an element of $f \boxtimes (g \boxtimes h)$.
		\end{remark}
		
		Nevertheless, type-restricted resource theories can be modelled similarly if we introduce a distinction between objects representing resources and objects representing their transformations.
		The resulting structure is one of quantale modules. 
		It allows one to translate the results described in this article to a setting closer to the practice of resource theories \cite{Gonda2021}.
		
		This article is inspired by the study of quantum information theory, which is why most of our examples are of quantum resource theories and related ones.
		While we invite the reader to think of examples they are familiar with and use them to understand the concepts presented, we also provide short background material on quantum resource theories now.
		
		\begin{example}[quantum resource theories]\label{ex:rt_quantum}
			Whenever the underlying process theory that generates $R$ is quantum theory \cite{coecke2018picturing}, we speak of a quantum resource theory \cite{Chitambar2018}.
			More precisely, quantum states are density operators---unit trace positive semi-definite operators on a given complex Hilbert space. 
			Pure states are those of rank 1, so that all other states arise as convex mixtures thereof.
			First-order quantum processes are given by completele positive trace preserving (CPTP) maps.
			The input and output types of these can be labelled by the relevant Hilbert spaces and are associated to a concrete physical system.
			Quantum processes thus map quatum states of one system to quantum states of another system.
			We can think of states themselves as processes with trivial input system $I$ which is given by the $1$-dimensional Hilbert space.
			
			There are many quantum resource theories that correspond to distinct choices of the set of free resources $R_{\rm free}$.
			Let us mention two that feature repeatedly in our examples.
			\begin{compactenum}
				\item \label{ex:rt_entanglement} One of the most studied is the resource theory of quantum entanglement \cite{Horodecki2009}. 
					There, each resource type (i.e., a wire label in diagrams as above) specifies two systems of two distinct agents respectively.
					One thus studies the entanglement between these two parties as a property of \emph{bipartite} states (and other processes).
					
					Traditionally, free resources are generated through composition of local operations and classical communications (LOCC) \cite{Chitambar2014}.
					Local operation refers to an arbitrary bipartite quantum process that factorizes into a tensor product of two independent processes for the two parties.
					A communication channel is a process that maps system of one party to the other party.
					It is classical if it is implemented by a (monopartite) channel $\mathcal{E}$ that acts on a standard basis of operators in the ``bra-ket'' notation via
					\begin{equation}
						\mathcal{E} \bigl( \ket{i}\bra{j} \bigr) = \delta_{ij} \sum_{k} \epsilon_{ik} \ket{k} \bra{k} 
					\end{equation}
					or any other one that can be obtained from it by applying arbitrary quantum pre- and post-processing.					
						
				\item \label{ex:rt_asymmetry} 
					Given an unitary action of a group $G$ on each quantum system (i.e., each Hilbert space considered), one can define a quantum resource theory of asymmetry \cite{Marvian2012,Marvian2014}.
					A quantum channel $\mathcal{E} \colon A \to B$ is free in this resource theory if it is covariant with respect to these group actions, i.e., if it satisfies:
					\begin{equation}
						\mathcal{E} \bigl( U_g \rho U_g^{\dagger} \bigr) = V_g \mathcal{E} (\rho) V_g^{\dagger}
					\end{equation}
					for every state $\rho \colon I \to A$ and every $g \in G$, where $U_g$ and $V_g$ are the relevant representations of $G$ on $A$ and $B$ respectively.
					Consequently, free states are the $G$-invariant ones.
			\end{compactenum}
		\end{example}
		
		\begin{example}[classical resource theories]\label{ex:rt_classical}
			Besides quantum resource theories, we also use examples where the physical systems exhibit classical, stochastic, behavior.
			Quantum states are (in the discrete case) replaced by probability distribution and quantum processes by stochastic maps.
			While there are many interesting classical resource theories, in this article we only refer to the following ones.
			\begin{compactenum}
				\item \label{ex:rt_athermality} Classical resource theory of athermality \cite{Janzing2000,Brandao2013} describes resources of thermal non-equilibrium. 
					Free states are those which are in thermal equilibrium with a given heat bath at a fixed temperature, i.e., thermal states.
					Free processes can be characterized as those stochastic maps that preserve thermal states. 
					The resulting order of resources can be interpreted as describing how far a given state is from the equilibrium, and thus how useful it is in thermodynamic protocols for the purposes of work extraction for example.
						
				\item \label{ex:rt_nonuniformity} A special case of the above in the case of infinite temperature bears the name of a resource theory of nonuniformity \cite{Gour2015,Horodecki2003}. 
					In the limit of infinite temperature, all thermal states become uniform probability distributions.
					Thus the free processes with identical input and output systems are the doubly stochastic maps, and the resource ordering is the famous majorization preorder \cite{Marshall1979}. 
			\end{compactenum}
		\end{example}
		
	\subsection{Useful Order-Theoretic Notions}
	\label{sec:Order theory}
			
		Note that Lemma~\ref{thm:Compatibility of star and order} can be viewed as saying that the map $S \mapsto S \boxtimes U$ is order-preserving with respect to $(\mathcal{P}(R), \succeq)$.
		Order-preserving functions are the key structure-preserving maps between ordered sets.
		\begin{definition}
			\label{def:order preserving}
			Let $(\mathcal{A}, \succeq_{\mathcal{A}})$ and $(\mathcal{B},\succeq_{\mathcal{B}})$ be two arbitrary preordered sets. 
			A function $M \colon \mathcal{A} \to \mathcal{B}$ is \textbf{order-preserving} if the implication
			\begin{equation}
				a_1 \succeq_{\mathcal{A}} a_2 \implies M(a_1) \succeq_{\mathcal{B}} M(a_2)
			\end{equation}
			holds for all $a_1, a_2 \in \mathcal{A}$.
		\end{definition}
		We can use order-preserving functions to learn about the preordered set of resources $\ordres$.
		One of the most common practices is to find so-called resource monotones.
		They are order-preserving maps from $\ordres$ to the totally ordered set of extended real numbers $\ordreals$, by which we mean the set $\mathbb{R} \cup \{-\infty,\infty\}$ ordered as usual.
		\begin{definition}
			Let $\RT{R}$ be a resource theory.
			A \textbf{resource monotone} (or \textbf{monotone} for short) is a function $M \colon R \to \reals$ such that for all $r, s \in R$, we have
			\begin{equation}
				\label{eq:monotone definition}
				s \in R_{\rm free} \boxtimes r \quad \implies \quad M(r) \geq M(s).
			\end{equation}
			For the preorder $\ordres$ defined by \eqref{eq:Order of resources}, condition~\eqref{eq:monotone definition} can be expressed as \mbox{$r \succeq s \implies  M(r) \geq M(s)$}, so that such an 
				$M$ is indeed an order-preserving map from $\ordres$ to $\ordreals$.
		\end{definition}
		
		\begin{example}\label{ex:monotones}
			Since resource theories are traditionally studied with a particular focus of a certain type of resources, which are most commonly states, examples of monotones from the literature are typically type-specific---their domain consists of only certain resources.
			\begin{compactenum}
				\item \label{ex:monotones_entanglement} In the resource theory of quantum entanglement, there is a number of famous monotones with domain restricted to pure bipartite quantum states $\psi_{AB}$.
					One is the Schmidt rank defined as the rank of its reduced density matrix $\psi_A \coloneqq \Tr_A(\psi_{AB})$ obtained by partial trace.
					The entanglement rank (introduced as Schmidt number in \cite{Terhal2000}) of an arbitrary bipartite quantum state $\rho$ is then defined as the largest integer $k$ such that in any convex decomposition
					\begin{equation}
						\rho = \sum_{j} \lambda_j \psi_j
					\end{equation}
					of $\rho$ in terms of pure states, there is a $j$ such that the Schmidt rank of $\psi_j$ is at least $k$.
						
				\item \label{ex:monotones_nonuniformity} An extensive overview of monotones for states in the classical resource theory of nonuniformity can be found in \cite{Gour2015}.
					For instance, since uniform distributions maximize Shannon entropy $H$, it is not surprising that it appears as a monotone under the name Shannon nonuniformity.
					More precisely Shannon nonuniformity $I_H$ is a monotone defined as
					\begin{equation}
						I_H (p) \coloneqq \ln(d) - H(p)
					\end{equation}
					where $p$ is a probability distribution over sample space with $d$ elements.
			\end{compactenum}
		\end{example}
	
		In this work, we study ways in which resource monotones can be constructed.
		We look at examples of common constructions of monotones appearing in the literature on resource theories and identify more general procedures, which they are instances of.
		This helps us organize various monotones, understand the connections between them, and obtain generally applicable methods for generating new interesting monotones in any resource theory of interest.
		All of the monotone constructions we discuss fall within the following general scheme.
		\begin{named}{Broad Scheme}
			\label{broad scheme}
			We identify a preordered set $(\mathcal{A}, \succeq_{\mathcal{A}})$ and two order-preserving maps $\sigma_1$ and $\sigma_2$:
			\begin{align}
				\label{eq:general}
				\sigma_1 &\colon \ordres \to (\mathcal{A}, \succeq_{\mathcal{A}})  &  \sigma_2 &\colon (\mathcal{A}, \succeq_{\mathcal{A}}) \to \ordreals
			\end{align}
			Composing the two order-preserving maps gives a monotone $\ordres \to \ordreals$.
			$\sigma_2$ is called the \textbf{root monotone}, $(\mathcal{A}, \succeq_{\mathcal{A}})$ is called the \textbf{mediating preordered set} ($\mathcal{A}$ being the mediating set and $\succeq_{\mathcal{A}}$ the mediating preorder) and $\sigma_1$ is called the \textbf{mediating order-preserving map}.
			The target monotone $\sigma_2 \circ \sigma_1$ is said to be pulled back from the monotone $\sigma_2$ through the mediating map $\sigma_1$.
		\end{named}
		Broadly speaking, the aim of this work is thus to illuminate which choices of $(\mathcal{A}, \succeq_{\mathcal{A}})$, $\sigma_1$, and $\sigma_2$ lead to monotones that are either prevalent in the literature or interesting for other reasons.

		A concept that we will find useful is that of downward and upward closed sets.
		We make use of these repeatedly.
		\begin{definition}
			Let $(\mathcal{A}, \succeq_{\mathcal{A}})$ be a preordered set.
			A set $D \subseteq \mathcal{A}$ is \textbf{downward closed} with respect to $\succeq_{\mathcal{A}}$ if for all $a \in \mathcal{A}$ and all $d \in D$ the implication
			\begin{equation}
				d \succeq_{\mathcal{A}} a  \quad \implies \quad  a \in D
			\end{equation}
			holds.
			The set of all downward closed subsets of $\mathcal{A}$ is denoted by $\mathcal{DC}(\mathcal{A})$.

			On the other hand, a set $U \subseteq \mathcal{A}$ is \textbf{upward closed} with respect to $\succeq_{\mathcal{A}}$ if for all $a \in \mathcal{A}$ and all $u \in U$ the implication
			\begin{equation}
				a \succeq_{\mathcal{A}} u \quad \implies \quad a \in U
			\end{equation}
			holds. 
			The set of all upward closed subsets of $\mathcal{A}$ is denoted by $\mathcal{UC}(\mathcal{A})$.
	 	\end{definition}
	 	Note that in a resource theory, a set of resources $D \subseteq R$ is downward closed if and only if $R_{\rm free} \boxtimes D = D$, where the preorder $\succeq$ is the resource ordering defined by (\ref{eq:Order of resources}).
	 		
 		\begin{figure}[!tb]
		\begin{center}
			\begin{subfigure}[b]{.4\textwidth}\centering
				\begin{tikzpicture}[align=center,thick,>=stealth,node distance=\scale and 0.5*\scale,
						resource/.style={circle,fill=black,draw=none},
						freeconv/.style={->,thick}
					]
					\node[resource]	(a)		at (0,0)					{};
					\node[resource]	(1b)		[below left=of a]			{};
					\node[resource]	(b1)		[below right=of a]		{};
					\node[resource]	(2c)		[below left=of 1b]		{};
					\node[resource]	(c)		[below left=of b1]		{};
					\node[resource]	(c2)		[below right=of b1]		{};
					\node[resource]	(3d)		[below left=of 2c]		{};
					\node[resource]	(1d)		[below left=of c]			{};
					\node[resource]	(d1)		[below left=of c2]		{};
					\node[resource]	(d3)		[below right=of c2]		{};
					
					\draw[freeconv]	(a) -- (1b);
					\draw[freeconv]	(a) -- (b1);
					\draw[freeconv]	(1b) -- (2c);
					\draw[freeconv]	(1b) -- (c);
					\draw[freeconv]	(b1) -- (c);
					\draw[freeconv]	(b1) -- (c2);
					\draw[freeconv]	(2c) -- (3d);
					\draw[freeconv]	(2c) -- (1d);
					\draw[freeconv]	(c) -- (1d);
					\draw[freeconv]	(c) -- (d1);
					\draw[freeconv]	(c2) -- (d1);
					\draw[freeconv]	(c2) -- (d3);
					
					\node[text=white]			at (b1)								{$r$};
					
					\begin{pgfonlayer}{background}
						\foreach \nodename in {a,1b,b1,2c,c,c2,3d,1d,d1,d3} {\coordinate (\nodename') at (\nodename);} 
					
						\path[fill=SeaGreen,draw=Emerald,line width=1.17*\scale,line cap=round,line join=round]  (b1') to (1d') to (d3') to (b1') -- cycle;
						\path[fill=SeaGreen,draw=SeaGreen,line width=1.06*\scale,line cap=round,line join=round]  (b1') to (1d') to (d3') to (b1') -- cycle;
						
						\path[fill=Bittersweet!80,draw=Brown,line width=0.94*\scale,line cap=round,line join=round]  (c2') to (d1') to (1d') to (2c') to (3d') to (d3') to (c2') -- cycle;
						\path[fill=Bittersweet!80,draw=Bittersweet!80,line width=0.82*\scale,line cap=round,line join=round]  (c2') to (d1') to (1d') to (2c') to (3d') to (d3') to (c2') -- cycle;
					\end{pgfonlayer}
				\end{tikzpicture}
				\caption{Example of a downward closed set (brown region) and the free image $\down(r)$ of a particular resource $r$ (turquoise) in a simple preordered set.}\label{fig:DC}
			\end{subfigure}\hspace{0.05\textwidth}
			\begin{subfigure}[b]{.4\textwidth}\centering
				\centering
				\begin{tikzpicture}[align=center,thick,>=stealth,node distance=\scale and 0.5*\scale,
						resource/.style={circle,fill=black,draw=none},
						freeconv/.style={->,thick}
					]
					\node[resource]	(a)		at (0,0)					{};
					\node[resource]	(1b)		[below left=of a]			{};
					\node[resource]	(b1)		[below right=of a]		{};
					\node[resource]	(2c)		[below left=of 1b]		{};
					\node[resource]	(c)		[below left=of b1]		{};
					\node[resource]	(c2)		[below right=of b1]		{};
					\node[resource]	(3d)		[below left=of 2c]		{};
					\node[resource]	(1d)		[below left=of c]			{};
					\node[resource]	(d1)		[below left=of c2]		{};
					\node[resource]	(d3)		[below right=of c2]		{};
					
					\draw[freeconv]	(a) -- (1b);
					\draw[freeconv]	(a) -- (b1);
					\draw[freeconv]	(1b) -- (2c);
					\draw[freeconv]	(1b) -- (c);
					\draw[freeconv]	(b1) -- (c);
					\draw[freeconv]	(b1) -- (c2);
					\draw[freeconv]	(2c) -- (3d);
					\draw[freeconv]	(2c) -- (1d);
					\draw[freeconv]	(c) -- (1d);
					\draw[freeconv]	(c) -- (d1);
					\draw[freeconv]	(c2) -- (d1);
					\draw[freeconv]	(c2) -- (d3);
					
					\node[text=white]			at (b1)								{$r$};
					
					\begin{pgfonlayer}{background}
						\foreach \nodename in {a,1b,b1,2c,c,c2,3d,1d,d1,d3} {\coordinate (\nodename') at (\nodename);} 
						
						\path[fill=Bittersweet!80,draw=Brown,line width=1.17*\scale,line cap=round,line join=round]  (c2') to (a') to (2c') to (1b') to (b1') to (c2') -- cycle;
						\path[fill=Bittersweet!80,draw=Bittersweet!80,line width=1.06*\scale,line cap=round,line join=round]  (c2') to (a') to (2c') to (1b') to (b1') to (c2') -- cycle;
						
						\path[fill=SeaGreen,draw=Emerald,line width=0.94*\scale,line cap=round,line join=round]  (b1') to (a') to (b1') -- cycle;
						\path[fill=SeaGreen,draw=SeaGreen,line width=0.82*\scale,line cap=round,line join=round]  (b1') to (a') to (b1') -- cycle;
					\end{pgfonlayer}
				\end{tikzpicture}
				\caption{Example of an upward closed set (brown) and the free preimage $\up(r)$ of a particular resource $r$ (turquoise) in a simple preordered set.}\label{fig:UC}
			\end{subfigure}
		\end{center}
		\end{figure}
	 	
		It follows from the definition above that (finite) unions and intersections of downward closed sets are downward closed, and likewise (finite) unions and intersections of upward closed sets are upward closed.
		There are two canonical examples of downward closed sets in any resource theory $\RT{R}$.
		They are $R$ and $R_{\rm free}$.
		Since both are closed under $\boxtimes$ and contain $R_{\rm free}$, these two sets of resources are always downward closed.
		Moreover, for any set of resources $S$, the set $R_{\rm free} \boxtimes S$ is also downward closed.
		Therefore, we can express $\mathcal{DC}(R)$ as
		\begin{equation}
			\mathcal{DC}(R) = \Set{R_{\rm free} \boxtimes S  \given  S \in \mathcal{P}(R)}.
		\end{equation}
		\begin{example}[downward closed sets]\label{examplesDCsets}
			Downward closed sets naturally appear in the study of many resource theories.
			\begin{compactenum}
				\item \label{ex:entrankDCsets}
					In the resource theory of bipartite quantum entanglement, the set of states with entanglement rank at most $k$ is a downward closed set.
					More precisely, there is a downward closed set (which naturally contains processes of various types, not only states) in a universally combinable resource theory of quantum entanglement, whose intersection
						with the set of all states gives precisely the states with entanglement rank at most $k$.
					These sets for different values of $k$ form a total order under set inclusion.
						
				\item \label{ex:3-wayentangledDCsets} 
					Consider the resource theory of \emph{multipartite} quantum entanglement with the free operations given by the appropriate LOCC processes again.
					A partition of the $m$ parties is said to have radius at most $k$ if each of its elements consists of at most $k$ parties.
					Then, for any given $k$, the set of states that are convex combinations of pure states separable with respect a partition of radius at most $k$ forms a downward closed set.
					In particular, the specific case of $m = 3$ and $k=2$ defines the set of states that are deemed to be \emph{not} intrinsically $3$-way entangled, and so this set is downward closed.
				
				\item \label{ex:symmetricDCsets}
					In a resource theory of $G$-asymmetry, 
						the set of processes covariant w.r.t.\ a subgroup of $G$ is a downward closed set (see Proposition 3 of \cite{Marvian2013}).
					These sets are related by a partial order under set inclusion that is anti-isomorphic to the partial order of the respective subgroups under set inclusion.
					
				\item Generalizing the downsets from \ref{ex:entrankDCsets}, given any monotone $f$, the set of resources with the value of $f$ bounded above by some constant $c$ is a downward closed set.
					For a fixed monotone and varying $c$, these form a total order under set inclusion.
					More generally, given an order-preserving map $(R, \succeq) \to (\mathcal{A}, \succeq_{\mathcal{A}})$, the preimage of any downward closed set in $\mathcal{A}$ is downward closed 
						in $R$.
			\end{compactenum}
		\end{example}
			 	
	 	\begin{definition}
	 		Given a resource theory $\RT{R}$, the \textbf{free image map} $\down \colon R \to \mathcal{DC}(R)$ is defined by
			\begin{equation}
				\label{eq:free image map}
				\down (r) = R_{\rm free} \boxtimes r 
			\end{equation}
			for any $r \in R$.
			Similarly, the \textbf{free preimage map} $\up \colon R \to \mathcal{UC}(R)$ is defined by
			\begin{equation}
				\up (r) = \Set{s \in R  \given  r \in R_{\rm free} \boxtimes s}.
			\end{equation}
			The maps $\down$ and $\up$ can be also extended to act on sets of resources by requiring compatibility with unions.
			That is, we have
			\begin{align}\label{eq:down_on_sets}
				\down (S) &\coloneqq \cup_{s \in S} \down(s)  &  \up (S) &\coloneqq \cup_{s \in S} \up(s)
			\end{align}
			for any $S \in \mathcal{P}(R)$.
			We can then express the fact that a set $D$ is downward closed by $\down (D) = D$, while the fact that a set $U$ is upward closed can be stated as $\up (U) = U$.
			In the language of order theory, we can identify $\down (S)$ as the \emph{downward closure} of $S \in \mathcal{P}(R)$ and $\up (S)$ as the \emph{upward closure} of $S$ with respect to the preordered set 
				$(\mathcal{P}(R), \succeq)$.
		\end{definition}

		Notice that both $\mathcal{DC}(R)$ and $\mathcal{UC}(R)$ have a natural ordering in terms of subset inclusion which makes $\up$ and $\down$ into order-preserving maps.  
		In particular, we have partially ordered sets $(\mathcal{DC}(R), \supseteq)$ and $(\mathcal{UC}(R), \subseteq)$.
		With this choice, both
		\begin{align}
			\down &\colon \ordres \to (\mathcal{DC}(R), \supseteq)  & &\text{and} &  \up &\colon \ordres \to (\mathcal{UC}(R), \subseteq)
		\end{align} 
		are order-preserving, which we show explicitly in \Cref{thm:down and up wrt D are order-preserving}.
	
		Consequently, monotones for the partial orders $(\mathcal{DC}(R), \supseteq)$ and $(\mathcal{UC}(R), \subseteq)$ can be pulled back to monotones for $\ordres$ via the 
			\ref{broad scheme}.
		We investigate such constructions of resource monotones in the following section, where the role of the mediating preordered set is associated with either $(\mathcal{DC}(R), \supseteq)$ or $(\mathcal{UC}(R), \subseteq)$.
	
	\section{Generalized Resource Yield and Generalized Resource Cost}
\label{sec:yield and cost}
	
	Now we are finally in good shape to start answering the main question posed in the abstract.
	Namely, how general are the monotone constructions one finds in the literature on resource theories?
	We have defined a somewhat minimal and abstract framework for resource theories, within which we can investigate this question.
	In this section, we use the structure of downward and upward closed sets to learn about the convertibility of resources through a generalization of yield-like and cost-like monotones.
	
	First of all, in Section~\ref{sec:yield and cost for all}, we describe a fairly trivial way of generating monotones for both posets $(\mathcal{DC}(R), \supseteq )$ and $(\mathcal{UC}(R), \subseteq )$, 
		given an arbitrary real-valued function on $R$.
	These can act as the root monotone in the~\ref{broad scheme} and thus define monotones on $R$ by precomposition with the corresponding mediating order-preserving map.
	Then, in Section~\ref{sec:yield and cost for subset}, we extend this root monotone construction to the case when we are given a real-valued function that is only defined on a subset of all resources.
	Lastly, in Section~\ref{sec:yield and cost wrt D}, we further generalize this construction by identifying other order-preserving maps that can be used instead of $\down$ and $\up$ as the mediating order-preserving map.

	\subsection{Yield and Cost Constructions Given a Function Defined on All Resources}
	\label{sec:yield and cost for all}
	
		Consider a (not necessarily order-preserving) function $f \colon R \to \reals$, and define two functions $\fmax$ and $\fmin$ by
		\begin{align*}
			\fmax \colon \mathcal{DC}(R) &\to \reals  &  \fmin \colon \mathcal{UC}(R) &\to \reals  \\
				S &\mapsto \sup f(S)  &  S &\mapsto \inf f(S)
		\end{align*}
		where $f(S)$ denotes the image of $S$ under $f$.
		As a function from the partially ordered set $(\mathcal{DC}(R), \supseteq)$ to the totally ordered set $\ordreals$, $\fmax$ is clearly order-preserving.
		Similarly, $\fmin$ is an order-preserving map between $(\mathcal{UC}(R), \subseteq)$ and $\ordreals$.
		
		With the maps $\down$ and $\up$ described in the previous section, we can pull $\fmax$ and $\fmin$ back to monotones on $R$.
		In particular, we get real-valued functions on $R$ defined by
		\begin{align}
			\label{eq:yield_basic}
			\yield{f}{} (r) &\coloneqq \fmax \bigl( \down (r) \bigr) = \sup \Set{ f(s)  \given  s \in \down (r) } = \sup \Set{ f(s)  \given  s \in R_{\rm free} \boxtimes r } \\
			\label{eq:cost_basic}
			\cost{f}{} (r) &\coloneqq \fmin \bigl( \up (r) \bigr) = \inf \Set{ f(s)  \given  s \in \up (r) } = \inf \Set{ f(s)  \given  r \in R_{\rm free} \boxtimes s },
		\end{align}
		which are both resource monotones. 
		The $\yield{f}{}$ of $r$ is the largest value of $f$ among the resources that can be obtained from $r$ for free.
		On the other hand, the $\cost{f}{}$ of $r$ is the smallest value of $f$ among the resources one can use to obtain $r$ for free.
		
		\begin{example}[monotones from dimension functions]\label{ex:dim_cost}
			Consider a resource theory of one-way quantum communication, in which free resources include arbitrary pre- and post-processings as well as one-way classical communication channels.
			Define the dimension, $\mathrm{dim}(\Phi)$, of a quantum channel to be the smaller of the Hilbert space dimensions of its input and output respectively.
			The $\cost{\mathrm{dim}}{}(\Phi)$, or \emph{dimension cost} of a channel $\Phi$, is then the smallest dimension of a channel from which $\Phi$ can be obtained by composition with free resources.
			The dimension cost of a channel is upper bounded by its dimension, but in general it can be strictly smaller. 
			Similar dimension cost and dimension yield monotones arise in any resource theory that can be embedded in vector spaces, and thus in any quantum resource theory.
		\end{example}

	\subsection{Yield and Cost Constructions Given a Function Defined on a Subset of Resources}
	\label{sec:yield and cost for subset}
	
		It is often useful to be able to evaluate resources in terms of their cost or yield with respect to a particular set of special resources that one could call a ``gold standard''.
		
		Let $W \subseteq R$ denote a subset of resources, and consider a partial function $f_W \colon R \to \reals$ with domain $W$.
		It is not hard to see that one can accomodate constructions from the previous section to this case by restricting all optimizations to be within $W$.
		Specifically, we can again define functions $\fWmax$ and $\fWmin$ as
		\begin{equation}
			\begin{aligned}
				\label{eq:Root monotones for yield and cost}
				\fWmax \colon \mathcal{DC}(R) &\to \reals  &  \fWmin \colon \mathcal{UC}(R) &\to \reals  \\
					S &\mapsto \sup f_W(S)  &  S &\mapsto \inf f_W(S),
			\end{aligned}
		\end{equation}
		where $\sup \emptyset \coloneqq -\infty$ and $\inf \emptyset \coloneqq \infty$ and $f_W(S)$ denotes the image of $S$ under $f_W$.
		As we show in \Cref{lem:function_extensions},
			both of these optimization maps are order-preserving.
		Therefore, we also get yield and cost monotones on $\ordres$ defined by
		\begin{align}
			\yield{f_W}{} (r) &\coloneqq \fWmax \bigl( \down (r) \bigr)   &  \cost{f_W}{} (r) &\coloneqq \fWmin \bigl( \up (r) \bigr)
		\end{align}
		which can be expressed as
		\begin{align}
			\label{eq:yield_subset}
			\yield{f_W}{} (r) &= \sup \Set{ f_W(s)  \given  s \in \down (r) \cap W } = \sup \Set{ f_W(s)  \given  s \in R_{\rm free} \boxtimes r, \; s \in W } \\
			\label{eq:cost_subset}
			\cost{f_W}{} (r) &= \inf \Set{ f_W(s)  \given  s \in \up (r) \cap W } = \inf \Set{ f_W(s)  \given  r \in R_{\rm free} \boxtimes s ,\; s \in W }.
		\end{align}
		The $\yield{f_W}{}$ of $r$ is now the largest value of $f_W$ among the resources \emph{within $W$} that can be obtained from $r$ for free, while the $\cost{f_W}{}$ of $r$ is the smallest value of $f_W$ among 
			the resources \emph{within $W$} that one can use to obtain $r$ for free.
		Allowing the domain of $f$ to be smaller than $R$ enables us to see many more monotone constructions as special cases of the generalized cost and yield measures.
		\begin{example}[yield and cost with respect to a chain]\label{ex:currencies}
			The ``currencies'' described in~\cite{Kraemer2016} are yield and cost monotones, wherein $W$ is a chain; i.e., a totally ordered set of resources.
			A concrete example of this type---from entanglement theory---is the cost of an entangled state measured in the number of e-bits (i.e., maximally entangled 2-qubit states) needed to produce it.
			It is called the single-shot entanglement cost.  
			In that case, $W$ is the set of $n$-fold tensor products of e-bits for different values of $n$ and $f_W$ just returns the integer $n$.
			Another example---from the classical resource theory of nonuniformity introduced in \Cref{ex:rt_classical} \ref{ex:rt_nonuniformity}---is the single-shot nonuniformity yield\footnotemark{} of a probability distribution defined with respect to $W$ that is the set of sharp states, and $f_W$ is the Shannon nonuniformity from \Cref{ex:monotones} \ref{ex:monotones_nonuniformity}.
			Sharp states defined as distributions of the form
			\begin{equation}
				p = 	\bigl( \nu, \nu, \ldots, \nu, 0, 0, \ldots, 0 \bigr)
			\end{equation}
			play a similar role to the e-bits in the sense that they include, for any fixed sample space, all states with maximal nonuniformity.
			\footnotetext{Both of the examples presented here also have their dual counterparts of course.
			They are called the single-shot entanglement yield and single-shot nonuniformity cost respectively.}%
		\end{example}
		\begin{example}[axiomatic definitions of thermodynamic entropy]
			In the axiomatic approach to thermodynamics \cite{Lieb1999}, Lieb and Yngvason define the canonical entropy $S$---an essentially unique monotone among equilibrium states---as a currency.
			Moreover, central to the study of non-equilibrium states in this context are the monotones $S_{-}$ and $S_{+}$, defined in \cite{Lieb2013} as the $\yield{S}{}$ and $\cost{S}{}$, where the domain of 
				$S$ (i.e., the set $W$ in our notation) is the set of equilibrium states.
			In \cite{Weilenmann2016}, this approach to thermodynamics was directly related to the manifestly resource-theoretic approach \cite{Janzing2000,Brandao2013}, and it was shown that $S$, $S_{-}$, and 
				$S_{+}$ correspond to versions of the Helmholtz free energy introduced in \cite{Horodecki2013}.
		\end{example}
			\begin{example}[entanglement rank as cost]\label{ex:ent_rank_cost}
				In the resource theory of bipartite quantum states with respect to LOCC operations, entanglement rank introduced in \Cref{ex:monotones} \ref{ex:monotones_entanglement} is a resource monotone.
				As explained in detail in \cite{gour2020optimal}, it can be expressed as a $\cost{f_W}{}$ where $W$ is the set of pure bipartite quantum states and $f_W$ is the Schmidt rank. 
			\end{example}
			\begin{example}[yield and cost for nonclassical correlations]\label{ex:qcorrelations}
				One of the resource theories where cost and yield monotones as presented here have been used explicitly is the resource theory of nonclassicality of common-cause boxes \cite{Wolfe2019}.
				The resources are bipartite classical channels (also known as ``boxes'' in this context) represented by a conditional probability distribution $P(X,Y | S,T)$ and often depicted as a process
				\begin{equation}
					\tikzfig{common_cause_box}
				\end{equation}
				and thought of as a stochastic map $S \otimes T \to X \otimes Y$.
				$X$ and $Y$ represent the spaces of outcomes for the two parties, while $S$ and $T$ are the respective settings.
				Moreover, the boxes are required to be non-signalling, so that they can be conceivably interpreted as describing a common-cause relationship between the two parties (with no direct causal influence $S \to Y$ or $T \to X$ mediating their interaction).
				This is the set-up relevant for experiments demonstrating violations of Bell inequalities.
				The free boxes are those that can be explained with a common cause mediated by a classical variable.
				The resource theory then allows us to compare nonclassical behaviors by means of their resourcefulness relative to free operations (see \cite{Wolfe2019} for more details) that respect the causal structure of the Bell scenario
				\begin{equation}
					\tikzfig{Bell_DAG}
				\end{equation}
				and only use classical variables $\Lambda$ as common causes.
				
				The maximal amount by which a given box violates a Bell inequality is a resource monotone, as it arises via the yield construction. 
				For instance, one may use the violation of the famous Clauser--Horne--Shimony--Holt (CHSH) inequality \cite{clauser1969proposed}.
				This gives a measure, which can be alternatively expressed also as a weight or robustness monotone as well \cite[Corollary 18]{Wolfe2019}.
				While Bell inequalities delineate the set of free resources precisely, thay are insufficient for the characterization of the resource ordering via such yield constructions.
				However, as shown in \cite[Section 7.1]{Wolfe2019}, even a single additional monotone can help uncover a range of properties of the resource ordering inaccessible by considering violations of Bell inequalities alone.
				The specific one used therein uses the cost construction.
				Specifically, the gold standard resources ($W$) are given by the chain of boxes interpolating between a Popescu--Rohrlich (PR) box \cite{popescu1994quantum} at the top of the order and a classical box that is a noisy version thereof. 
				Notably, both of the monotones from \cite{Wolfe2019}, yield relative to CHSH inequality violations and cost with respect to noisy PR boxes, have a closed-form expression as shown in section 6.3 there.
				They may thus aid in discovering explicit formulas for yield and cost monotones in a broader context.
			\end{example}
			\begin{example}[changing the type of resources being evaluated]
				\label{ex:Changing type}
				The general constructions of $\yield{f}{}$ and the $\cost{f}{}$ allow one to extend monotones defined for a particular type of resources to other types.
				A particularly useful example of such a translation is the extension of a monotones for states to monotones for higher-order processes within the same resource theory; such as channels, measurements, or combs.
				For instance, in entanglement theory, one can define a monotone for channels from a monotone for states, such as the cost of implementing a given channel (measured in the terms of number of e-bits used). 
				Let us elaborate on this procedure in a resource theory of universally combinable processes \cite{Coecke2016} as in \Cref{ex:Star operation for universally combinable processes}.
				Let $f \colon W \to \reals$ be a function whose domain $W$ is the set of all states in the process theory.
				Then we can express $\yield{f}{}$ for a particular channel $\phi$ as
				\begin{equation}\label{eq:state_to_channel1}
					\tikzfig{state_to_channel1}
				\end{equation}
				The argument of $f$ in the optimization on the right-hand side is the most general state which can be obtained from $\phi$ for free.
				If furthermore $f$ is a monotone on its domain so that it satisfies $f(\psi \circ \tau) \leq f(\tau)$ for any state $\tau$ and any free channel $\psi$, we can simplify $\yield{f}{}$ to
				\begin{equation}\label{eq:state_to_channel2}
					\tikzfig{state_to_channel2}
				\end{equation}
				Likewise, we can express the $\cost{f}{}$ as
				\begin{equation}\label{eq:state_to_channel3}
					\tikzfig{state_to_channel3}
				\end{equation}
				These kinds of constructions have appeared in the works on resource theories of quantum channels \cite{Liu2019,liu2020operational,gour2019quantify}, where the proposed monotones for channels are defined via channel divergences (see Example \ref{ex:cost for pairs}).
				However, as Theorem 2 of \cite{gour2019quantify} shows, they can be also equivalently seen as originating from monotones for states (defined via state divergences).
				Channel divergences \cite{Leditzky2018} themselves arise from the generalized yield construction as they are defined by \cref{eq:state_to_channel2} in a resource theory of pairs of resources, i.e., a resource theory of distinguishability \cite{Wang2019}.
				More details on pairs (and other tuples) of resources and in what way they constitute a resource theory can be found in \Cref{sec:Encodings}.
			\end{example}

	\subsection{Yield and Cost Constructions Relative to a Downward Closed Set}
	\label{sec:yield and cost wrt D}

		Apart from varying the root monotone from the \ref{broad scheme} as we have done in Section~\ref{sec:yield and cost for subset}, one can also vary the mediating order-preserving map.
		In particular, $\down$ and $\up$ are not the only order-preserving functions from $\ordres$ to $(\mathcal{DC}(R),\supseteq)$ and $(\mathcal{UC}(R),\subseteq)$ respectively.
		
		\begin{definition}
			\label{def:D-image map}
			Consider a set of resources $D \subseteq R$.
			We define the \textbf{$\bm{D}$-image map} $\downD \colon R \to \mathcal{DC}(R)$ by
			\begin{equation}
				\downD (r) \coloneqq D \boxtimes r 
			\end{equation}
			for any $r \in R$, which can be also written as $\downD (r) = D \boxtimes r$.
			Similarly, the \textbf{$\bm{D}$-preimage map} $\upD \colon R \to \mathcal{UC}(R)$ is defined by
			\begin{equation} 
				\upD (r) \coloneqq \Set{s \in R  \given  r \in D \boxtimes s}.
			\end{equation}
			We can naturally extend them to act on sets of resources as well by compatibility with unions.
			That is, 
			\begin{equation}
				\downD (S) \coloneqq \bigcup_{s \in S} \downD(s)  \quad\text{ and }\quad  \upD (S) \coloneqq \bigcup_{s \in S} \upD(s)
			\end{equation}
			for any $S \in \mathcal{P}(R)$.
		\end{definition}
		It is interesting to note that unlike for $\down$ and $\up$, there is not neccesarily a preorder on $R$ for which the maps $\downD$ and $\upD$ are the downward and upward closure operations respectively.
		
		Composing image and preimage maps can be related to the image (preimage) maps of the composition of the relevant sets of resources as follows.
		\begin{lemma}
			\label{thm:composing image maps}
			Let $\RT{R}$ be a resource theory and let $S$ and $T$ be two subsets of $R$.
			The $S$-image and $T$-image maps from Definition~\ref{def:D-image map} satisfy
			\begin{equation}
				\downwrt{S} \circ \downwrt{T} = \downwrt{S \boxtimes T}.
			\end{equation}
			Similarly, the $S$-preimage and $T$-preimage maps satisfy
			\begin{equation}
				\upwrt{S} \circ \upwrt{T} = \upwrt{T \boxtimes S}.
			\end{equation}
		\end{lemma}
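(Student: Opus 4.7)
The plan is to unfold both sides using the definitions in \cref{def:D-image map} and then invoke associativity of $\boxtimes$ together with its distributivity over unions (equation (\ref{eq:Combination of sets of resources})). I will treat the image case and the preimage case separately, since the order reversal in the latter requires slightly more care. In both cases it suffices to verify the identities on singletons $\{r\}$, since the extension to $\mathcal{P}(R)$ via (\ref{eq:down_on_sets}) is defined pointwise as a union and both sides commute with unions.

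For the image identity, I would first evaluate $\downwrt{T}(r) = T \boxtimes r$ and then compute
\begin{equation*}
	\downwrt{S}\bigl( \downwrt{T}(r) \bigr) = \downwrt{S}(T \boxtimes r) = \bigcup_{s \in T \boxtimes r} S \boxtimes s,
\end{equation*}
using the set-extension from (\ref{eq:down_on_sets}). The right-hand side equals $S \boxtimes (T \boxtimes r)$ by distributivity (\ref{eq:Combination of sets of resources}), which in turn equals $(S \boxtimes T) \boxtimes r = \downwrt{S \boxtimes T}(r)$ by associativity of $\boxtimes$. This handles the image case.

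For the preimage identity, the point is that $\upwrt{D}$ is not defined as $D \boxtimes r$ but as $\{s : r \in D \boxtimes s\}$, and composing two such ``backward'' maps reverses the order of $S$ and $T$. Concretely, I would write
\begin{equation*}
	\bigl( \upwrt{S} \circ \upwrt{T} \bigr)(r) = \bigcup_{s \in \upwrt{T}(r)} \upwrt{S}(s) = \Set{t \in R \given \exists\, s \in R : r \in T \boxtimes s \text{ and } s \in S \boxtimes t},
\end{equation*}
and compare this to
\begin{equation*}
	\upwrt{T \boxtimes S}(r) = \Set{t \in R \given r \in (T \boxtimes S) \boxtimes t} = \Set{t \in R \given r \in T \boxtimes (S \boxtimes t)}
\end{equation*}
after using associativity. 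Distributivity rewrites $T \boxtimes (S \boxtimes t)$ as $\bigcup_{s \in S \boxtimes t} T \boxtimes s$, so membership of $r$ there is equivalent to the existence of some $s \in S \boxtimes t$ with $r \in T \boxtimes s$. That matches the first expression exactly.

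I do not anticipate a genuine obstacle here; the lemma is essentially a bookkeeping exercise once one is careful about two things: (i) invoking the pointwise extension (\ref{eq:down_on_sets}) at the right moment so that distributivity applies, and (ii) tracking the reversal $S \boxtimes T \leftrightarrow T \boxtimes S$ in the preimage version, which—though ultimately harmless because $\boxtimes$ is commutative in this article—is the one place where a reader could be misled if the two composition orders are conflated.
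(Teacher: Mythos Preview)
Your proposal is correct and follows essentially the same approach as the paper: both arguments reduce to associativity of $\boxtimes$ for the image identity and to an element chase using associativity plus distributivity for the preimage identity. The only cosmetic difference is that the paper carries out the preimage argument on an arbitrary $U \in \mathcal{P}(R)$ via two separate inclusions, whereas you restrict to singletons (valid since both sides commute with unions) and package the two inclusions into a single equivalence.
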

		\begin{proof}
			The first equality follows from the definition of $\downwrt{S}$ and the associativity of $\boxtimes$.
			Namely, recall that we have $\downwrt{S} (U) \coloneqq S \boxtimes U$ for any $U \in \mathcal{P}(R)$.
			
			We can prove the second equality as follows.
			Let $U$ be an arbitrary set of resources.
			If $v$ is an element of $\upwrt{S} \circ \upwrt{T} (U)$, then there must exist a $u \in U$ such that $u \in T \boxtimes w$ holds for some $w \in S \boxtimes v$.
			Therefore, we have $u \in T \boxtimes S \boxtimes v$ and thus $v \in \upwrt{T \boxtimes S} (U)$, so that 
			\begin{equation}
				\upwrt{S} \circ \upwrt{T} (U) \subseteq \upwrt{T \boxtimes S} (U)
			\end{equation}
			holds.
			
			On the other hand, if $v$ is an element of $\upwrt{T \boxtimes S} (U)$, then there exists a $u \in U$ which is also an element of $T \boxtimes S \boxtimes v$. 
			Thus, there is a $w \in S \boxtimes v$ which is also in $\upwrt{T}(u)$, which in turn implies $v \in \upwrt{S} \circ \upwrt{T} (U)$ and
			\begin{equation}
				\upwrt{S} \circ \upwrt{T} (U) \supseteq \upwrt{T \boxtimes S} (U).
			\end{equation}
			Consequently $\upwrt{S} \circ \upwrt{T} (U)$ is equal to $\upwrt{T \boxtimes S} (U)$ for all $U \in \mathcal{P}(R)$.
		\end{proof}
		
		One can show that whenever $D$ is a downward closed set of resources, both $\downD$ and $\upD$ are order-preserving.
		With this notation, we can also see that $\down$ coincides with $\downwrt{R_{\rm free}}$ and $\up$ coincides with $\upwrt{R_{\rm free}}$.	
		\begin{lemma}[mediating maps for the generalized yield and cost constructions]
			\label{thm:down and up wrt D are order-preserving}
			Let $D$ be a downward closed subset of $R$, i.e., $D \in \mathcal{DC}(R)$.
			The two maps, 
			\begin{align}
				\downD \colon \ordres &\to (\mathcal{DC}(R), \supseteq)  &  \upD \colon \ordres &\to (\mathcal{UC}(R), \subseteq),
			\end{align}
			are then both order-preserving.
		\end{lemma}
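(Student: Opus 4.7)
The plan is to prove both claims by direct computation, using only associativity of $\boxtimes$, its monotonicity under set inclusion (which follows from distributivity over unions, \cref{eq:Combination of sets of resources}), and the characterization noted earlier that a set $D \subseteq R$ is downward closed precisely when $R_{\rm free} \boxtimes D = D$. Throughout, I will repeatedly exploit the fact that, by commutativity, $D \boxtimes R_{\rm free} = R_{\rm free} \boxtimes D = D$ for our fixed $D \in \mathcal{DC}(R)$.

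First, I would verify that the codomains are correct: that $\downD(r)$ is actually downward closed and $\upD(r)$ is actually upward closed, for each $r \in R$. For $\downD(r) = D \boxtimes r$, I would compute
\begin{equation}
	R_{\rm free} \boxtimes \downD(r) = R_{\rm free} \boxtimes (D \boxtimes r) = (R_{\rm free} \boxtimes D) \boxtimes r = D \boxtimes r = \downD(r),
\end{equation}
which witnesses downward closedness. Alternatively, one could note that $\down \circ \downD = \downwrt{R_{\rm free} \boxtimes D} = \downwrt{D} = \downD$ by \cref{thm:composing image maps}. For $\upD(r)$, I would instead argue pointwise: if $v \in \upD(r)$ and $v' \succeq v$, then $r \in D \boxtimes v \subseteq D \boxtimes (R_{\rm free} \boxtimes v') = D \boxtimes v'$, so $v' \in \upD(r)$.

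Next, I would establish order preservation. Let $r \succeq s$, i.e., $s \in R_{\rm free} \boxtimes r$. For the $D$-image, the monotonicity of $\boxtimes$ under set inclusion gives
\begin{equation}
	\downD(s) = D \boxtimes s \subseteq D \boxtimes (R_{\rm free} \boxtimes r) = (D \boxtimes R_{\rm free}) \boxtimes r = D \boxtimes r = \downD(r),
\end{equation}
so $\downD(r) \supseteq \downD(s)$, as required for order preservation with respect to $\supseteq$. For the $D$-preimage, I would pick an arbitrary $v \in \upD(r)$, so that $r \in D \boxtimes v$, and then chain
\begin{equation}
	s \in R_{\rm free} \boxtimes r \subseteq R_{\rm free} \boxtimes (D \boxtimes v) = (R_{\rm free} \boxtimes D) \boxtimes v = D \boxtimes v,
\end{equation}
which gives $v \in \upD(s)$. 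Hence $\upD(r) \subseteq \upD(s)$, matching the $\subseteq$-order on $\mathcal{UC}(R)$.

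I do not anticipate a genuine obstacle: everything reduces to one careful unpacking of the definitions, plus the single algebraic identity $R_{\rm free} \boxtimes D = D$ coming from the downward closedness of $D$. The only subtle point worth flagging explicitly is that the inclusion $R_{\rm free} \boxtimes r \subseteq R_{\rm free} \boxtimes (D \boxtimes v)$ (and analogously for $\downD$) requires the monotonicity of $\boxtimes$ with respect to $\subseteq$, which is an immediate consequence of \cref{eq:Combination of sets of resources}; I would mention this briefly so that the reader sees where the distributivity axiom enters.
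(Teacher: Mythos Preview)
Your proof is correct and follows essentially the same approach as the paper's: both arguments reduce to the identity $D \boxtimes R_{\rm free} = D$ combined with associativity, though the paper packages the computation via \cref{thm:composing image maps} whereas you unpack it directly. You additionally verify that the codomains are as stated, a point the paper leaves implicit in \cref{def:D-image map}.
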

		\begin{proof}
			If $r,s \in R$ are two resources such that $r \succeq s$, then $s \in R_{\rm free} \boxtimes r = \down (r)$ by definition.
			Furthermore,
			\begin{equation}
				\begin{split}
					s \in \down (r) 	&\,\implies 	\downwrt{D} (s) \subseteq \downwrt{D} \circ \down (r) \\
									&\iff 		\downwrt{D} (s) \subseteq \downwrt{D \boxtimes R_{\rm free}} (r) \\
									&\iff 		\downwrt{D} (s) \subseteq \downwrt{D} (r).
				\end{split}
			\end{equation}
			The first equivalence follows from Lemma~\ref{thm:composing image maps} and the second one uses $D \boxtimes R_{\rm free} = D$.
			Therefore, $\downD$ is order-preserving.
			
			On the other hand, we also have
			\begin{equation}
				s \in \up (r)  \implies  \upwrt{D} (s) \subseteq \upwrt{D} \circ \up (r),
			\end{equation}
			and $\upwrt{D} \circ \up (r) = \upwrt{D} (r)$ by Lemma~\ref*{thm:composing image maps} and the fact that $D$ is downward closed.
			The $D$-preimage map $\upD$ is thus also order-preserving.
		\end{proof}	 
		As a consequence of the order-preserving property of $\downD$ and $\upD$, we have the following theorem.
		\begin{theorem}[generalized yield and cost constructions]
			\label{thm:yield and cost}
			Let $\RT{R}$ be a resource theory, let $D$ be a downward closed subset of $R$, and consider a partial function $f_W \colon R \to \reals$ with 
				domain $W \subseteq R$.
			The $f_W$-yield relative to the $D$-image map, $\yield{f_W}{D} \colon R \to \reals$, and the $f_W$-cost relative to the $D$-preimage map, 
				$\cost{f_W}{D} \colon R \to \reals$, defined as
			\begin{align}
				\yield{f_W}{D} (r) &\coloneqq \fWmax \bigl( \downD (r) \bigr)   &  \cost{f_W}{D} (r) &\coloneqq \fWmin \bigl( \upD (r) \bigr)
			\end{align}
			 are both resource monotones.
		\end{theorem}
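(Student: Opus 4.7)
The proof is a direct instance of the \ref{broad scheme}, using the two order-preserving maps that have already been established in the preceding material. Specifically, I would observe that by definition
\begin{equation*}
    \yield{f_W}{D} = \fWmax \circ \downD \qquad \text{and} \qquad \cost{f_W}{D} = \fWmin \circ \upD,
\end{equation*}
so it suffices to write each as a composition of two order-preserving maps and invoke the fact that the composition of order-preserving maps is order-preserving.

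The first plan step is to note that by \cref{thm:down and up wrt D are order-preserving}, the hypothesis that $D \in \mathcal{DC}(R)$ guarantees that
\begin{align*}
    \downD &\colon \ordres \to (\mathcal{DC}(R), \supseteq), & \upD &\colon \ordres \to (\mathcal{UC}(R), \subseteq)
\end{align*}
are both order-preserving. The second step is to recall from the discussion surrounding \eqref{eq:Root monotones for yield and cost} (and \cref{lem:function_extensions}) that the optimization maps
\begin{align*}
    \fWmax &\colon (\mathcal{DC}(R), \supseteq) \to \ordreals, & \fWmin &\colon (\mathcal{UC}(R), \subseteq) \to \ordreals
\end{align*}
are also order-preserving, with the convention $\sup \emptyset = -\infty$ and $\inf \emptyset = +\infty$ to handle the case where the feasible set in $W$ is empty.

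Composing in each case then yields an order-preserving map $\ordres \to \ordreals$, which is exactly the definition of a resource monotone. There is no real obstacle here, since all the nontrivial work has been packaged into the lemmas that precede the theorem; the theorem itself is essentially a repackaging of the \ref{broad scheme} with $(\mathcal{A}, \succeq_{\mathcal{A}})$ instantiated as $(\mathcal{DC}(R), \supseteq)$ or $(\mathcal{UC}(R), \subseteq)$, with $\sigma_1 = \downD$ or $\upD$, and with $\sigma_2 = \fWmax$ or $\fWmin$.
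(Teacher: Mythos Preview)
Your proposal is correct and follows essentially the same approach as the paper: it composes the order-preserving maps $\downD$, $\upD$ (via \cref{thm:down and up wrt D are order-preserving}) with the root monotones $\fWmax$, $\fWmin$ (via \cref{lem:function_extensions}), exactly as the paper does. The only difference is that you spell out the Broad Scheme instantiation slightly more explicitly, which is harmless.
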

		\begin{proof}
			Both $\cost{f_W}{D}$ and $\yield{f_W}{D}$ are constructed as a composition of two order-preserving maps, as proven by \Cref{thm:down and up wrt D are order-preserving} and \Cref{lem:function_extensions}.
			They are therefore order-preserving functions from $\ordres$ to $\ordreals$, i.e., resource monotones.
		\end{proof}
		By unpacking the definitions, we can express the generalized yield and cost monotones as
		\begin{align}
			\label{eq:yield_downset}
			\yield{f_W}{D} (r) &= \sup \Set*[\big]{ f_W(w) \given w \in D \boxtimes r \text{ and } w \in W } \\
			\label{eq:cost_downset}
			\cost{f_W}{D} (r) &=\inf \Set*[\big]{ f_W(w) \given r \in D \boxtimes w \text{ and } w \in W }.
		\end{align}
		The $\yield{f_W}{D}$ of $r$ is the largest value of $f_W$ among the resources within $W$ that can be obtained from $r$ by composing it with a resource in $D$.
		On the other hand, the $\cost{f_W}{D}$ of $r$ is the smallest value of $f_W$ among the resources within $W$ that one can compose with a resource in $D$ and obtain the resource $r$.
		
		There are two main reasons why one might want to use $\yield{f_W}{D}$ (or $\cost{f_W}{D}$) instead of $\yield{f_W}{} \coloneqq \yield{f_W}{R_{\rm free}}$ (or $\cost{f_W}{} \coloneqq \cost{f_W}{R_{\rm free}}$).
		On the one hand, a downward closed set $D$ different from $R_{\rm free}$ can be easier to work with either algebraically or numerically when evaluating the function explicitly.
		This is a common practice in many resource theories in which $R_{\rm free}$ is not straightforward to work with.
		For example, LOCC operations in entanglement theory get replaced by separable operations, noisy operations in nonuniformity theory get replaced by unital operations, and thermal operations in athermality theory 
			get replaced by Gibbs-preserving operations.
		
		On the other hand, $\yield{f_W}{D}$ and $\cost{f_W}{D}$ can give us new interesting monotones distinct from $\yield{f_W}{}$ and $\cost{f_W}{}$. 
		To our knowledge, none of the monotones introduced in the resource theory literature to date are of this kind.
		Here we give a simple toy example of how one could use these constructions for $D \neq R_{\rm free}$ in practice.
		\begin{example}[using $\yield{f_W}{D}$ for $D \neq R_{\rm free}$]
			\label{ex:Advantage of generalized yield}
			Imagine a quantum resource theory in which there are no free states.
			Such resource theories arise naturally when we consider multi-resource theories~\cite{Sparaciari2018} like the resource theory of work and heat~\cite{Sparaciari2017}.
			Since a channel can only be converted to a state by applying it to a state (either a state on the input system of the channel or a state on a larger system which contains it), there is no way to convert a channel to a state 
				for free in this case.
			Therefore, evaluating $\yield{f_W}{}$ for a function $f_W$, defined on states only, would lead to a trivial monotone for channels.
			One would not be able to use this construction to extend monotones for states to monotones for channels.
			However, one can instead use a downward closed set $D$ that does include some states, in which case $\yield{f_W}{D}$ becomes a non-trivial monotone for channels in the resource theory.
			A choice of the set $D$ that is guaranteed to be downward closed and include some states is $R_{\rm free} \boxtimes r$ for a particular state $r \in R$.
			The set $R_{\rm free} \boxtimes r$ can contain more states than just $r$ of course.
			In particular, it contains any other state one could obtain from $r$ for free. 
			An example of this sort of construction is illustrated in \Cref{fig:generalized yield}.
		\end{example}
		\begin{figure}[tbh!]
			\centering
			\tikzfig{state_to_channel4}
			\caption{The $f_W$-yield relative to $D$-image map of a channel $\phi$ given a function $f_W$ defined on states only.
			Note that the right hand side is equal to $-\infty$ if $D$ contains no states.}
			\label{fig:generalized yield}
		\end{figure}
		Note that for any set of resources $S$, the set $S \boxtimes R_{\rm free}$ is downward closed.
		It need not be closed under~$\boxtimes$, in which case it is not a candidate for the set of free resources in a resource theory. 
		Nevertheless, we can use $S \boxtimes R_{\rm free}$ in generalized yield and cost constructions (Theorem~\ref{thm:yield and cost}) by defining the image and preimage maps with respect to it.
		If the discarding operation is a free resource (or else if $S$ contains $0$), we may to interpret taking the images and preimages with respect to $S \boxtimes R_{\rm free}$ is as follows.
		They specify what can be achieved by an agent who, in addition to having access to the free resources in unlimited supply, also has access to a resource from $S$.
		Of course, if $S \boxtimes R_{\rm free}$ \emph{is} closed under~$\boxtimes$, then we can think of $S \boxtimes R_{\rm free}$ as describing an agent's access to both $R_{\rm free}$ and 
			$S$ in unlimited supply. 
		In such case $(S \boxtimes R_{\rm free}, R, \boxtimes)$ is a resource theory. 
	
	\section{Translating Monotones Between Resource Theories}
\label{sec:Translating Monotones}

	Let us now change the mediating preordered sets $(\mathcal{A},\succeq_{\mathcal{A}})$ that we consider in constructing monotones via the \ref{broad scheme}.
	In Section~\ref{sec:yield and cost}, we looked at $\mathcal{DC}(R)$ and $\mathcal{UC}(R)$ as possible choices, and we made use of the fact that the ordering on each is defined in terms of
		subset inclusion in $\mathcal{P}(R)$. 
	In the present section, we investigate what can be said about the case when the mediating set $\mathcal{A}$ arises from a resource theory 
		$\mathcal{Q} \coloneqq \RT{Q}$ as the power set $\mathcal{P}(Q)$.
	The root monotones will therefore be functions $\mathcal{P}(Q) \to \reals$.
	We will be interested in obtaining target monotones for a resource theory $\mathcal{R} \coloneqq \RT{R}$, possibly different from $\mathcal{Q}$.
	
	There are multiple choices of the mediating preorder $\succeq_{\mathcal{A}}$ that one could consider.
	We investigate two order relations on $\mathcal{P}(Q)$ in particular.
	They mirror the two choices of $(\mathcal{DC}(R), \supseteq)$ and $(\mathcal{UC}(R), \subseteq)$ in \Cref{sec:yield and cost}.
	They are defined below as $\succeq_{\rm enh}$ and $\succeq_{\rm deg}$.
	
	\subsection{Translating Monotones within a given Resource Theory}
	\label{sec:Translating Monotones from a Resource Theory to Itself}
	
		First of all, let us look at a particularly simple case in which the root and target resource theories are identical, i.e., $\mathcal{R} = \mathcal{Q}$.
		In Section~\ref{sec:Monotones for Sets of Resources}, we construct root monotones $\mathcal{P}(R) \to \reals$ given an arbitrary monotone $(R, \succeq_{R}) \to \ordreals$ and a choice of an ordering of sets of resources.
		Then, in Section~\ref{sec:Mediating map from a resource theory to itself}, we describe mediating order-preserving maps $R \to \mathcal{P}(R)$.
		Together, these give us a method to generate new monotones for $\mathcal{R}$ from existing ones.
		Later on, we extend this method to a translation of monotones from $\mathcal{Q}$ to $\mathcal{R}$ when the two resource theories are not identical.
		
		\subsubsection{Root Monotones for Sets of Resources from a Monotone for Individual Resources}
		\label{sec:Monotones for Sets of Resources}
		
			
			Given a resource ordering $(R,\freeconv)$ we are thus interested in comparing not just elements of $R$, but also its subsets---elements of $\mathcal{P}(R)$.
			There is not a canonical way to do so.
			Whether two subsets $S,T$ of $(R, \geq)$ should be ordered relative to each other or not depends on the intended interpretation of the ordering. 
			
			\textbf{Enhancement preorder.}			
			
			We could say that $S$ should be above $T$, denoted by $S \freeconv_{\rm enh} T$, if every element of $T$ lies below an element of $S$.
			In the following toy example (in which nodes are elements of $R$ and arrows depict order relations)
			\begin{equation}\label{eq:enh_deg}
				\tikzfig{enh_deg}
			\end{equation}
			we would have that $\{s_1,s_2\}$ is above $\{t_3,t_4\}$, but not above $\{t_1,t_2,t_3\}$.
			Recall that an element of $\mathcal{P}(R)$ is interpreted as specifying a set of resources, one of which an agent can choose to access and manipulate.
			With respect to this interpretation,
			\begin{quote}
				$S \freeconv_{\rm enh} T$ means: For every element of $T$ that an agent could make use of, there is an element of $S$ which is at least as valuable (according to the resource \mbox{ordering $\succeq$}).
			\end{quote}
			This notion of ordering of subsets of $(R,\geq)$ can be expressed via the existence of an enhancement map, defined as follows.
			\begin{definition}\label{def:enh}
				Let $(R,\geq)$ be a preordered set with two subsets $S,T$.
				A function $\mathsf{enh} \colon T \to S$ is termed an \textbf{enhancement} if we have
				\begin{equation}
					\mathsf{enh}(t) \geq t \quad \forall \, t \in T.
				\end{equation}
			\end{definition}	
			
			\begin{definition}\label{def:enh_ord}
				Given a preordered set $(R,\freeconv)$, define the \textbf{enhancement preorder} $\enhconv$ on $\mathcal{P}(R)$ by
				\begin{equation}
					S \enhconv T  \quad \coloniff \quad  \text{there exists an enhancement } T \to S.
				\end{equation}
			\end{definition}
			
			Given $S \enhconv T$, an agent that has access to resources in $S$ can obtain access to resources in $T$ by ignoring elements of $S$ outside the image of $\mathsf{enh}$ and applying the relevant free conversion to elements of $S$ within the image of $\mathsf{enh}$.
			In this way we obtain a preordered set $(\mathcal{P}(R), \enhconv)$ from the preordered set $(R, \freeconv)$, such that the latter is isomorphic to the ordering of singletons in $\mathcal{P}(R)$.

			The enhancement preorder can be equivalently expressed in terms of the downward closure operator. 
			
			\begin{lemma}\label{lem:enh_down}
				Let $S,T$ be two sets of resources in a resource theory with resource ordering $(R,\freeconv)$.
				Then the following are equivalent:
				\begin{enumerate}
					\item $S \enhconv T$,
					\item $\down (S) \supseteq \down (T)$, where the action of $\down$ is given as in \eqref{eq:down_on_sets}, and
					\item $S \succeq T$, where the ordering of sets of resources is given by \eqref{eq:Order of sets of resources}.
				\end{enumerate}
			\end{lemma}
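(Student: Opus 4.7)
The plan is to establish the chain of implications (i) $\Rightarrow$ (ii) $\Rightarrow$ (iii) $\Rightarrow$ (i), treating the case $T = \emptyset$ separately (where all three conditions hold trivially: (i) by the second clause of \cref{def:enh_ord}, and (ii) and (iii) because $\down(\emptyset) = \emptyset$ and $\emptyset \subseteq R_{\rm free} \boxtimes S$). So I may assume $T \neq \emptyset$ below.

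For (i) $\Rightarrow$ (ii): given an enhancement $\mathsf{enh} \colon T \to S$, pick any $t \in T$; then $\mathsf{enh}(t) \in S$ and $\mathsf{enh}(t) \succeq t$, so $t \in \down(\mathsf{enh}(t)) \subseteq \down(S)$ using \eqref{eq:down_on_sets}. Hence $T \subseteq \down(S)$, and since $\down(S)$ is downward closed and $\down$ is the downward-closure operator, $\down(T) \subseteq \down(S)$.

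For (ii) $\Rightarrow$ (iii): unpacking the definitions from \eqref{eq:free image map} and \eqref{eq:down_on_sets}, and using the distributivity of $\boxtimes$ over unions \eqref{eq:Combination of sets of resources}, one has
\begin{equation}
    \down(S) = \bigcup_{s \in S} \down(s) = \bigcup_{s \in S} R_{\rm free} \boxtimes s = R_{\rm free} \boxtimes S.
\end{equation}
Since $T \subseteq \down(T) \subseteq \down(S) = R_{\rm free} \boxtimes S$, the definition \eqref{eq:Order of sets of resources} gives $S \succeq T$.

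For (iii) $\Rightarrow$ (i): if $T \subseteq R_{\rm free} \boxtimes S = \bigcup_{s \in S} \down(s)$, then for every $t \in T$ there exists at least one $s \in S$ with $s \succeq t$. Invoking the axiom of choice, select one such $s_t$ for each $t \in T$ and define $\mathsf{enh}(t) \coloneqq s_t$; this is an enhancement in the sense of \cref{def:enh}, witnessing $S \succeq_{\rm enh} T$.

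No step is genuinely hard; the main conceptual content is recognizing that $\down(S) = R_{\rm free} \boxtimes S$ ties together the ``existence of a pointwise dominator'' formulation, the set-theoretic downward-closure formulation, and the $\boxtimes$-based definition. The only subtlety worth flagging is that the last implication uses the axiom of choice to assemble $\mathsf{enh}$, and that the degenerate case $T = \emptyset$ must be singled out because an enhancement with empty domain is the empty function, which is handled by convention in \cref{def:enh_ord}.
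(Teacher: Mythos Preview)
Your proof is correct and follows essentially the same route as the paper's. The paper organizes the argument as (i)$\iff$(iii) directly, then (i)$\Rightarrow$(ii) and (ii)$\Rightarrow$(iii), whereas you close the cycle (i)$\Rightarrow$(ii)$\Rightarrow$(iii)$\Rightarrow$(i); but the individual steps are the same, and your explicit handling of the $T=\emptyset$ case and the appeal to choice just make implicit points in the paper's argument explicit.
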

			\begin{proof}
				Conditions (i) and (iii) are equivalent because $S \succeq T$ is defined as $T \subseteq \mathcal{R}_{\rm free} \boxtimes S$ which means that for each $t \in T$, there exists an $s_t \in S$ such that $t \in \mathcal{R}_{\rm free} \boxtimes s_t$, i.e., the map $t \mapsto s_t$ is an enhancement of type $T \to S$.
			
				If $S \enhconv T$ holds, then there is an enhancement $\mathsf{enh} \colon T \to S$.
				Denoting its image within $S$ by $\mathsf{enh}(T)$, we have
				\begin{equation}
					\down (S) \supseteq \down \bigl( \mathsf{enh}(T) \bigr) = \down \down \bigl( \mathsf{enh}(T) \bigr) \supseteq \down (T),
				\end{equation}
				since $\down \bigl( \mathsf{enh}(T) \bigr) \supseteq T$ follows from the definition of an enhancement.
				Thus condition (i) implies condition (ii).
				
				Finally, (ii) implies (iii) because $T$ is a subset of $ \mathcal{R}_{\rm free} \boxtimes T = \down (T)$.
			\end{proof}
			
			\begin{corollary}\label{lem:down_isotone}
				The map $\down \colon (\mathcal{P}(R), \enhconv) \to (\mathcal{P}(R), \supseteq)$ is order-preserving.
			\end{corollary}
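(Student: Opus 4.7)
The plan is to observe that this corollary is an immediate consequence of the lemma that precedes it. Unpacking the definition of order-preservation for the map $\down \colon (\mathcal{P}(R), \enhconv) \to (\mathcal{P}(R), \supseteq)$, what needs to be shown is the implication
\begin{equation*}
    S \enhconv T \implies \down(S) \supseteq \down(T)
\end{equation*}
for all $S, T \in \mathcal{P}(R)$.

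This, however, is precisely the implication from condition (i) to condition (ii) in Lemma \ref{lem:enh_down}, which has just been established. Thus no new work is required: I would simply invoke Lemma \ref{lem:enh_down} and remark that order-preservation of $\down$ with respect to the enhancement preorder on the domain and the reverse-inclusion preorder on the codomain is exactly the content of the equivalence (i) $\Leftrightarrow$ (ii) (and the forward direction alone suffices).

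Since the corollary is a direct corollary, there is no real obstacle. The only point worth noting explicitly is the convention for the empty set: if $T = \emptyset$, then $\down(T) = \emptyset$ and the inclusion $\down(S) \supseteq \down(T)$ holds trivially, which is consistent with $S \enhconv \emptyset$ being declared to hold for every $S$ in Definition \ref{def:enh_ord}. So the argument is uniform across both clauses of the definition of $\enhconv$.
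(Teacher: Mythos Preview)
Your proposal is correct and matches the paper's approach exactly: the paper states this as a corollary of Lemma~\ref{lem:enh_down} without further proof, since order-preservation of $\down$ is precisely the implication (i) $\Rightarrow$ (ii) there. Your explicit check of the empty-set case is a harmless (and careful) addition, but nothing beyond invoking the lemma is needed.
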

			
			\textbf{Degradation preorder.}
			
			On the other hand, we could also say that $S$ should be above $T$, denoted by $S \freeconv_{\rm deg} T$, if every element of $S$ lies above an element of $T$.
			Then we would have that $\{s_1,s_2\}$ is above $\{t_1,t_2,t_3\}$, but not above $\{t_3,t_4\}$ in diagram \eqref{eq:enh_deg}.
			\begin{definition}\label{def:deg}
				Let $(R,\geq)$ be a preordered set with two subsets $S,T$.
				A function $\mathsf{deg} \colon S \to T$ is termed a \textbf{degradation} if we have
				\begin{equation}
					s \geq \mathsf{deg}(s) \quad \forall \, s \in S.
				\end{equation}
			\end{definition}
			
	
			\begin{definition}\label{def:deg_ord}
				Given a preordered set $(R, \freeconv)$, define the \textbf{degradation preorder} $\degconv$ on $\mathcal{P}(R)$ by
				\begin{equation}
					S \degconv T  \quad \coloniff \quad  \text{there exists a degradation } S \to T.
				\end{equation}
			\end{definition}
			
			Degradation ordering does not necessarily have a meaningful resource-theoretic interpretation, at least not in the same sense as the enhancement ordering does. 
			Therefore, we do not think of it as expressing that $S$ is more valuable than $T$, even though such an interpretation may be viable in specific contexts, one of which is as follows. 
			
			Imagine Alice and Bob play the following game. 
			Alice has to choose a set of resources $S \subseteq R$, while Bob receives a resource $x \in R$ from the referee.
			If Bob can recover an element of $S$ from $x$ (for free), he wins.
			Otherwise, Alice wins.
			The relation $S \degconv T$ means that $S$ is not worse than than $T$ from Alice's point of view, for any distribution of referee's choices.
			The fact that Alice can always win this game by choosing the empty set justifies why $\emptyset$ is a maximal element of $\degconv$.
					
			Regardless of its interpretation, the preordered set $(\mathcal{P}(R), \degconv)$ is useful for constructing resource monotones via the \ref{broad scheme}.
			One way to understand it is via a dual version of \Cref{lem:enh_down}.
			
			\begin{lemma}\label{lem:deg_up}
				Let $S,T$ be two subsets of a preordered set $\ordres$.
				Then we have
				\begin{equation}
					S \degconv T  \quad \iff \quad  \up (S) \subseteq \up (T)
				\end{equation}
				where the action of $\up$ is given as in \eqref{eq:down_on_sets}.
			\end{lemma}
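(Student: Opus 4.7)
The plan is to prove both directions directly from the definitions, making use of the reflexivity and transitivity of $\freeconv$ (and the fact that $s \in \up(s)$ for every $s$, which follows from $0 \subseteq R_{\rm free}$). The statement is an exact dual of \cref{lem:enh_down}, so I expect the argument to mirror that proof.

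First I would handle the edge case $S = \emptyset$: here $\up(S) = \emptyset \subseteq \up(T)$ holds vacuously, and $\emptyset \degconv T$ holds by definition, so the equivalence is trivial. Assume henceforth that $S$ is non-empty.

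For the forward implication, suppose $S \degconv T$, so there is a degradation $\mathsf{deg}\colon S \to T$ with $s \freeconv \mathsf{deg}(s)$ for all $s \in S$. Pick any $x \in \up(S)$; by definition of $\up(S) = \bigcup_{s\in S}\up(s)$, there exists $s \in S$ with $x \freeconv s$. Combined with $s \freeconv \mathsf{deg}(s) \in T$, transitivity gives $x \freeconv \mathsf{deg}(s)$, so $x \in \up(\mathsf{deg}(s)) \subseteq \up(T)$. Hence $\up(S) \subseteq \up(T)$.

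For the reverse implication, suppose $\up(S) \subseteq \up(T)$. For each $s \in S$, reflexivity of $\freeconv$ gives $s \in \up(s) \subseteq \up(S) \subseteq \up(T) = \bigcup_{t \in T}\up(t)$, so there exists some $t \in T$ with $s \freeconv t$. Choosing one such $t$ for each $s$ defines a function $\mathsf{deg}\colon S \to T$ (in particular $T$ must be non-empty, since $S$ is), and by construction this is a degradation. Therefore $S \degconv T$. There is no real obstacle here; the only mild subtlety is ensuring the edge cases involving empty sets match the special clauses in \cref{def:deg_ord}, but these are easily checked.
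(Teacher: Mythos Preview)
Your proof is correct and follows essentially the same approach as the paper's: the reverse implication is identical (use $S \subseteq \up(S) \subseteq \up(T)$ to pick a $t_s$ for each $s$), and your forward implication is just an element-chase version of the paper's set-theoretic argument $\up(T) \supseteq \up(\mathsf{deg}(S)) = \up\up(\mathsf{deg}(S)) \supseteq \up(S)$, which packages the same transitivity step via the idempotency of $\up$. Your explicit treatment of the empty-set edge case is a minor addition the paper leaves implicit.
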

			\begin{proof}
				If $S \degconv T$ holds, then there is a degradation $\mathsf{deg} \colon S \to T$.
				Denoting its image within $T$ by $\mathsf{deg}(S)$, we have
				\begin{equation}
					\up (T) \supseteq \up \bigl( \mathsf{deg}(S) \bigr) = \up \up \bigl( \mathsf{deg}(S) \bigr) \supseteq \up (S),
				\end{equation}
				since $\up \bigl( \mathsf{deg}(S) \bigr) \supseteq S$ follows from the definition of a degradation.
				
				Conversely, if $\up (S) \subseteq \up (T)$ holds, then we also have $S \subseteq \up (T)$.
				That is, for every $s \in S$, there exists some $t_s \in T$ such that $s \freeconv t_s$.
				Thus, the function given by $s \mapsto t_s$ is a degradation of type $S \to T$.
			\end{proof}
			
			\begin{corollary}\label{lem:up_isotone}
				The map $\up \colon (\mathcal{P}(R), \degconv) \to (\mathcal{P}(R), \subseteq)$ is order-preserving.\footnotemark{}
				\footnotetext{Note the opposite direction of the ordering in the codomain relative to prior occurences.
				If we were to keep the convention from earlier, we would instead say that $\up$ is an order-reversing function.}
			\end{corollary}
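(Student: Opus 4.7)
The plan is to derive this corollary as an immediate consequence of Lemma~\ref{lem:deg_up}. Per Definition~\ref{def:order preserving}, showing that $\up \colon (\mathcal{P}(R), \degconv) \to (\mathcal{P}(R), \subseteq)$ is order-preserving amounts to verifying that for all $S, T \in \mathcal{P}(R)$ the implication
\[
S \degconv T \implies \up(S) \subseteq \up(T)
\]
holds. This is precisely the forward direction of the biconditional proven in Lemma~\ref{lem:deg_up}, so no genuine work is needed beyond invoking that lemma.

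The only point one might flag is the degenerate case $S = \emptyset$, which is not covered by the ``existence of a degradation'' clause of Definition~\ref{def:deg_ord} but rather by the separate stipulation that $\emptyset \degconv T$ for all $T$. Here $\up(\emptyset) = \bigcup_{s \in \emptyset} \up(s) = \emptyset \subseteq \up(T)$ holds trivially, so the implication is vacuous and one does not need to appeal to Lemma~\ref{lem:deg_up} in this corner.

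There is no real obstacle; the corollary is simply a repackaging of Lemma~\ref{lem:deg_up} in the language of order-preserving maps, exactly mirroring how Corollary~\ref{lem:down_isotone} packages Lemma~\ref{lem:enh_down}. The footnote's remark about the reversed direction of the codomain ordering relative to earlier order-preserving maps in the paper is purely a matter of presentation and does not affect the argument.
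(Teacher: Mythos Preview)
Your proposal is correct and matches the paper's approach exactly: the corollary is stated immediately after Lemma~\ref{lem:deg_up} with no explicit proof, the implicit argument being precisely the forward direction of that biconditional. Your additional remark about the $S = \emptyset$ corner case is a harmless clarification (and in fact the empty function $\emptyset \to T$ is vacuously a degradation, so Lemma~\ref{lem:deg_up} already covers it).
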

			
			\textbf{Root monotones for enhancement and degradation preorders.}
			
			Another way to get some intuition about the enhancement and degradation preorders is to look at what they correspond to in the specific example of a total order of singletons.
			\begin{example}\label{ex:enh_reals}
				Specifically, the enhancement preorder for sets of extended real numbers is given by the comparison of their suprema.
				That is, for $S, T \in \mathcal{P}(\reals)$ we have 
				\begin{equation}
					S \enhgeq T \quad \iff \quad  \sup S \geq \sup T.
				\end{equation}
				On the other hand, the degradation preorder is given by infima.
				In this case, both $\enhgeq$ and $\deggeq$ are thus total preorders.
			\end{example}
			As an immediate consequence of this example, we conclude that 
			\begin{equation}
				\begin{split}
					\mathrm{sup} &\colon \bigl( \mathcal{P}(\reals), \enhgeq \bigr) \to \bigl(\reals , \geq \bigr) \\
					\mathrm{inf} &\colon \bigl( \mathcal{P}(\reals), \deggeq \bigr) \to \bigl(\reals , \geq \bigr)
				\end{split}
			\end{equation}
			are both monotones.
			
			\begin{lemma}\label{lem:sup_isotone}
				Given a partial, order-preserving function $f_W \colon (R, \freeconv) \to (Q, \succeq)$ with upward closed domain of definition $W \in \mathcal{UC}(R)$, the function $f_W \colon \mathcal{P}(R) \to \mathcal{P}(Q)$ that maps each $S$ to its image under $f_W$ is order-preserving with respect to the corresponding enhancement preorders.
			\end{lemma}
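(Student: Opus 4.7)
The plan is to unpack the definition of the enhancement preorder and trace an enhancement on the $R$-side through $f_W$ to produce one on the $Q$-side, using upward-closedness of $W$ to guarantee that the image stays inside the domain of $f_W$.

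First I would dispose of the trivial edge cases. If $T \cap W = \emptyset$, then $f_W(T) = \emptyset$, so $f_W(S) \enhconv f_W(T)$ holds by the second clause of \cref{def:enh_ord}. In particular this handles $T = \emptyset$. So assume $T \cap W \neq \emptyset$ and that $S \enhconv T$. By \cref{def:enh_ord}, there exists an enhancement $\mathsf{enh} \colon T \to S$, i.e., $\mathsf{enh}(t) \freeconv t$ for every $t \in T$.

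The central observation, which is the key use of the hypothesis on $W$, is that $\mathsf{enh}$ restricts to a map $T \cap W \to S \cap W$. Indeed, for $t \in T \cap W$, we have $\mathsf{enh}(t) \freeconv t$ and $t \in W$, so upward-closedness of $W$ forces $\mathsf{enh}(t) \in W$. Consequently $f_W\bigl(\mathsf{enh}(t)\bigr)$ is well-defined and lies in $f_W(S)$.

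Now I would construct the required enhancement $\mathsf{enh}' \colon f_W(T) \to f_W(S)$. For each $q \in f_W(T)$, pick (using the axiom of choice) some preimage $t_q \in T \cap W$ with $f_W(t_q) = q$, and set
\begin{equation}
	\mathsf{enh}'(q) \;\coloneqq\; f_W\bigl(\mathsf{enh}(t_q)\bigr) \;\in\; f_W(S).
\end{equation}
Since $f_W$ is order-preserving on its domain and $\mathsf{enh}(t_q) \freeconv t_q$ with both elements in $W$, we get
\begin{equation}
	\mathsf{enh}'(q) \;=\; f_W\bigl(\mathsf{enh}(t_q)\bigr) \;\succeq\; f_W(t_q) \;=\; q,
\end{equation}
so $\mathsf{enh}'$ is indeed an enhancement $f_W(T) \to f_W(S)$, yielding $f_W(S) \enhconv f_W(T)$.

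The only mildly subtle point is the interaction between partiality of $f_W$ and the enhancement map; the upward-closedness assumption on $W$ is exactly what prevents $\mathsf{enh}$ from stepping outside the domain of $f_W$. Everything else is a direct translation along the image, so I do not expect any further obstacles.
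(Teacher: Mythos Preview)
Your proposal is correct and follows essentially the same approach as the paper's proof: pick a preimage of each $q \in f_W(T)$, use upward-closedness of $W$ to ensure its enhancement stays in the domain of $f_W$, and then apply $f_W$ and order-preservation to obtain an enhancement $f_W(T) \to f_W(S)$. You are slightly more explicit about the empty-image edge case and the appeal to choice, but the argument is otherwise the same.
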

			\begin{proof}
				Consider $S,T \in \mathcal{P}(R)$ and an enhancement $\mathsf{enh} \colon T \to S$.
				If $f_W (T)$ is non-empty, we can construct an enhancement $f_W (T) \to f_W (S)$ as follows.
				Given $q \in f_W(T)$, pick an arbitrary element $t \in T$ such that $f_W(t) = q$ holds.
				Since $t$ is in the upset $A$, $f_W$ is also defined for its enhancement so that we can let the image of $q$ be $f_W(\mathsf{enh} (t)) \in f_W (S)$.
				Since $f_W$ is order-preserving, we have
				\begin{equation}
					f_W \bigl( \mathsf{enh} (t) \bigr) \geq f_W(t) = r.
				\end{equation}
				Thus, any such function is an enhancement and we conclude that $S \succeq_{\rm enh} T$ implies the desired relation ${f_W (S) \geq_{\rm enh} f_W (T)}$.
			\end{proof}
			\begin{lemma}\label{lem:sup_isotone_deg}
				Given a partial, order-preserving function $h_Y \colon (R, \freeconv) \to (Q, \succeq)$ with downward closed domain of definition $Y \in \mathcal{DC}(R)$, the function $h_Y \colon \mathcal{P}(R) \to \mathcal{P}(Q)$ is order-preserving with respect to the corresponding degradation preorders.
			\end{lemma}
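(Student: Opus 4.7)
The plan is to mirror the proof of \cref{lem:sup_isotone} with enhancements replaced by degradations, with the key role of upward-closedness of the domain replaced by downward-closedness. Unpacking, the goal is to show that $S \degconv T$ implies $h_Y(S) \degconv h_Y(T)$ for arbitrary $S,T \in \mathcal{P}(R)$.

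First I would dispose of the degenerate cases. If $S = \emptyset$, then $h_Y(S) = \emptyset$ and the second clause of \cref{def:deg_ord} immediately gives $h_Y(S) \degconv h_Y(T)$. If $S \neq \emptyset$ but $S \cap Y = \emptyset$, then again $h_Y(S) = \emptyset$ and the same conclusion holds. So assume $h_Y(S) \neq \emptyset$ and fix a degradation $\mathsf{deg} \colon S \to T$ witnessing $S \degconv T$.

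Next I would construct an explicit degradation $h_Y(S) \to h_Y(T)$. For each $q \in h_Y(S)$, pick an arbitrary $s \in S \cap Y$ with $h_Y(s) = q$ and send $q$ to $h_Y\bigl(\mathsf{deg}(s)\bigr)$. The key point for well-definedness is that $\mathsf{deg}(s)$ must lie in $Y$ so that $h_Y$ is defined on it: by \cref{def:deg} we have $s \freeconv \mathsf{deg}(s)$, and since $s \in Y$ and $Y$ is downward closed with respect to $\freeconv$, this forces $\mathsf{deg}(s) \in Y$. Hence $h_Y\bigl(\mathsf{deg}(s)\bigr) \in h_Y(T)$, and in particular $h_Y(T)$ is non-empty (which was needed to define a function into it at all). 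To verify the degradation property, I would invoke the order-preservation of $h_Y$ on $s \freeconv \mathsf{deg}(s)$, obtaining $q = h_Y(s) \succeq h_Y\bigl(\mathsf{deg}(s)\bigr)$, which is exactly the condition required by \cref{def:deg}.

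No step here presents a real obstacle; the whole argument is dual to that of \cref{lem:sup_isotone}. The only conceptual point worth flagging is \emph{why} the hypothesis on $Y$ flips from upward closed in the enhancement case to downward closed in the degradation case: an enhancement moves \emph{up} in the preorder, so one needs the domain $W$ to absorb elements above its members, whereas a degradation moves \emph{down}, and the downward-closedness of $Y$ is precisely what guarantees that $\mathsf{deg}$ cannot escape the domain of $h_Y$.
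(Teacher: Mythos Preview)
Your proposal is correct and follows essentially the same approach as the paper: the paper's proof simply states that the argument is analogous to that of \cref{lem:sup_isotone}, replacing the enhancement $\mathsf{enh} \colon T \to S$ by a degradation $\mathsf{deg} \colon S \to T$ to obtain a degradation $h_Y(S) \to h_Y(T)$. You have spelled out this dual argument in full, including the degenerate cases and the explicit role of downward-closedness, which the paper leaves implicit.
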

			\begin{proof}
				The proof is analogous to that of \Cref{lem:sup_isotone}; replacing $\mathsf{enh}$ with a degradation $\mathsf{deg} \colon S \to T$ yields a degradation $h_Y (S) \to h_Y (T)$.
			\end{proof}
			\begin{corollary}\label{lem:monotone_extensions}
				Given partial monotones $f_W,h_Y \colon (R, \freeconv) \to \ordreals$ defined on upward (downward) closed subsets of $R$ respectively, the two maps
				\begin{equation}\label{eq:root_enh_deg}
					\begin{split}
						\fWmax &\colon \bigl( \mathcal{P}(R), \enhconv \bigr) \to \bigl(\reals , \geq \bigr) \\
						h_Y\textup{-}\mathsf{min} &\colon \bigl( \mathcal{P}(R), \degconv \bigr) \to \bigl(\reals , \geq \bigr)
					\end{split}
				\end{equation}
				given by the composites $\mathrm{sup} \circ f_W$ and $\mathrm{inf} \circ h_Y$ respectively are both monotones.
			\end{corollary}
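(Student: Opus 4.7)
The plan is to recognize both assertions as direct consequences of results already established in the excerpt; no new argument is needed beyond a two-step composition. By construction, $\fWmax$ factors as $\mathrm{sup} \circ f_W$, where the second factor is the direct-image map between power sets. I would first apply \cref{lem:sup_isotone} with the target preorder $(Q,\succeq)$ chosen to be $\ordreals$ itself, yielding that $f_W \colon \bigl(\mathcal{P}(R), \enhgeq\bigr) \to \bigl(\mathcal{P}(\reals), \enhgeq\bigr)$ is order-preserving. Next, I would invoke \cref{ex:enh_reals}, which identifies $\mathrm{sup} \colon \bigl(\mathcal{P}(\reals), \enhgeq\bigr) \to \ordreals$ as order-preserving (indeed essentially as the characteristic quotient of the enhancement preorder on $\mathcal{P}(\reals)$). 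A routine unpacking of \cref{def:order preserving} shows that the composition of order-preserving maps is again order-preserving, so $\fWmax$ is a monotone on $\bigl(\mathcal{P}(R), \enhgeq\bigr)$.

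The statement for $h_Y\textup{-}\mathsf{min} = \mathrm{inf} \circ h_Y$ is strictly dual. I would repeat the argument, invoking \cref{lem:sup_isotone_deg} in place of \cref{lem:sup_isotone} (so that the intermediate preorder is $\deggeq$ rather than $\enhgeq$), and the $\mathrm{inf}$-half of \cref{ex:enh_reals} in place of the $\mathrm{sup}$-half.

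Since all substantive content already resides in the two preceding lemmas and in \cref{ex:enh_reals}, there is no real obstacle. The only point that warrants an explicit sanity check is the behaviour on the empty image. If $f_W(S) = \emptyset$, then by convention $\mathrm{sup} f_W(S) = -\infty$; this is harmless because $T \enhconv \emptyset$ holds for every $T$ by \cref{def:enh_ord}, and $-\infty$ is the bottom of $\ordreals$, so the monotonicity implication becomes vacuous in that direction. The dual remark applies to $\mathrm{inf}$ and to $\emptyset \degconv T$. Both of these edge cases follow the same conventions already fixed in \eqref{eq:Root monotones for yield and cost}, so no further case analysis is required.
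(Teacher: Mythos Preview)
Your proposal is correct and matches the paper's approach exactly: the corollary is stated without proof precisely because it is the composite of \cref{lem:sup_isotone} (resp.\ \cref{lem:sup_isotone_deg}) with the observation from \cref{ex:enh_reals} that $\sup$ and $\inf$ are monotones on $\bigl(\mathcal{P}(\reals),\enhgeq\bigr)$ and $\bigl(\mathcal{P}(\reals),\deggeq\bigr)$ respectively. Your extra remark on the empty-image edge case is a harmless sanity check not spelled out in the paper.
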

			The two functions from (\ref{eq:root_enh_deg}) constitute candidate root monotones for constructions that translate monotones from one resource theory to another.
			
			In the case of a trivial preorder on $R$, under which any two distinct elements are incomparable, the corresponding enhancement and degradation preorders are just $\supseteq$ and $\subseteq$ respectively.
			Moreover, every subset of $R$ is both downward and upward closed.
			Thus, we get the following special case.
			\begin{corollary}\label{lem:function_extensions}
				Given a partial function $f_W \colon R \to \reals$, the two maps 
				\begin{equation}
					\begin{split}
						\fWmax &\colon \bigl( \mathcal{P}(R), \supseteq \bigr) \to \bigl(\reals , \geq \bigr) \\
						\fWmin &\colon \bigl( \mathcal{P}(R), \subseteq \bigr) \to \bigl(\reals , \geq \bigr)
					\end{split}
				\end{equation}
				are both monotones.
			\end{corollary}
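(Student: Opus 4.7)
The plan is to derive this corollary as a direct specialization of Corollary~\ref{lem:monotone_extensions} to the case where $R$ carries the trivial preorder, in which $r \freeconv s$ holds if and only if $r = s$. I would not reprove the monotonicity of $\mathrm{sup}$ and $\mathrm{inf}$ on $\reals$ nor redo the enhancement/degradation arguments; instead, I would verify that under the trivial preorder all the hypotheses of the previous corollary are automatically met and the two mediating preorders on $\mathcal{P}(R)$ reduce to $\supseteq$ and $\subseteq$.

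Concretely, the first step is to check the closure condition on the domain. Under the trivial preorder on $R$, the implication $u \in U$ and $a \freeconv u \Rightarrow a \in U$ has nonempty hypothesis only when $a = u$, so every subset of $R$ is simultaneously upward and downward closed. In particular, the domain $W$ of the partial function $f_W$ is both upward and downward closed, so $f_W$ qualifies as both the partial function appearing in \cref{lem:sup_isotone} and the one in \cref{lem:sup_isotone_deg}. The second step is to note that, with the trivial preorder on $R$ and the usual order $\geq$ on $\reals$, the condition that $f_W$ be order-preserving is vacuous: the only nontrivial comparisons in $R$ are reflexive ones, and these are preserved by any function.

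The third and only substantive step is to identify the induced orderings on $\mathcal{P}(R)$. Unpacking \cref{def:enh_ord}, $S \enhgeq T$ means that for each $t \in T$ there exists $s \in S$ with $s \freeconv t$; under the trivial preorder this forces $s = t$, so $S \enhgeq T \iff T \subseteq S$, i.e.\ the enhancement preorder is exactly $\supseteq$. Symmetrically, \cref{def:deg_ord} gives $S \deggeq T \iff S \subseteq T$, so the degradation preorder coincides with $\subseteq$. Combining these three observations, \cref{lem:monotone_extensions} applies directly and yields that
\begin{equation}
    \fWmax \colon (\mathcal{P}(R), \supseteq) \to (\reals,\geq)
    \quad\text{and}\quad
    \fWmin \colon (\mathcal{P}(R), \subseteq) \to (\reals,\geq)
\end{equation}
are monotones, which is the claim.

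There is no real obstacle here; the only place where one must be careful is the empty-set convention, since $f_W$ may have empty image on some subsets. The conventions $\sup \emptyset \coloneqq -\infty$ and $\inf \emptyset \coloneqq \infty$ set in \eqref{eq:Root monotones for yield and cost} ensure that $\fWmax(\emptyset) = -\infty$ is the least and $\fWmin(\emptyset) = \infty$ is the greatest element of $\reals$, which makes monotonicity trivially hold whenever one of the arguments has empty image; this matches the boundary clauses built into \cref{def:enh_ord,def:deg_ord}. Once that is noted, the reduction to \cref{lem:monotone_extensions} is immediate.
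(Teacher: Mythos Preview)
Your proposal is correct and follows exactly the paper's own approach: the paper derives this corollary by specializing \cref{lem:monotone_extensions} to the trivial preorder on $R$, noting that then every subset is both upward and downward closed and that the enhancement and degradation preorders reduce to $\supseteq$ and $\subseteq$ respectively. Your write-up is in fact more detailed than the paper's one-line justification, and your careful treatment of the empty-set boundary case is a welcome addition.
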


		\subsubsection{Mediating Order-Preserving Maps Between Individual Resources and Sets of Resources}
		\label{sec:Mediating map from a resource theory to itself}
			
			We also need to find mediating order-preserving maps in order to be able to use both $\fmax$ and $\fmin$ as root monotones in the \ref{broad scheme}.
			\begin{example}[copying is order-preserving]
				\label{ex:copy}
				Consider the map $\mathsf{Copy}_2 \colon \ordres \to (\mathcal{P}(R), \succeq_{\rm enh})$ defined by
				\begin{equation}
					\mathsf{Copy}_2(r) \coloneqq r \boxtimes r.		
				\end{equation}
				Given that $s \in R_{\rm free} \boxtimes r$ implies $s  \boxtimes s \in (R_{\rm free} \boxtimes r) \boxtimes (R_{\rm free}  \boxtimes r) = R_{\rm free} \boxtimes (r \boxtimes r)$, it follows that 
					$s \succeq r$ implies $s\boxtimes s \succeq r\boxtimes r$, and consequently $\mathsf{Copy}_2$ is an order-preserving map.
				The same works for the map $\mathsf{Copy}_n \colon R \to \mathcal{P}(R)$ defined by $\mathsf{Copy}_n(r) = r^{\boxtimes n}$ (i.e., the $n$-fold universal combination). 
			\end{example}
			
			\begin{example}[adding a catalyst is order-preserving]
				\label{ex:augmentation}
				Given a resource $c \in R$, consider the map $\mathsf{Aug}_{c} \colon \ordres \to (\mathcal{P}(R), \succeq_{\rm enh})$ termed the 
					\textbf{augmentation by $\bm{c}$}\footnotemark{} that is defined for any \mbox{resource $r$} by
				\footnotetext{We can think of the action of $\mathsf{Aug}_{c}$ as combination of resources with a catalyst $c$ \cite[Section 3.2.4]{Gonda2021}.}%
				\begin{equation}
					\mathsf{Aug}_{c} (r) \coloneqq c \boxtimes r.
				\end{equation}
				This map is always order-preserving, since the implication $r \succeq s  \implies  c \boxtimes r \succeq c \boxtimes s$ follows from the definition of a resource theory and the resource 
					ordering for any $r,s,c \in R$.
				The same construction also works when $c$ is not just a single resource, but a set of resources.
			\end{example}
			
			Both $\mathsf{Aug}_{c}$ and $\mathsf{Copy}_n$ are thus examples of order-preserving maps from $\ordres$ to $(\mathcal{P}(R), \succeq_{\rm enh})$ and can be 
				used to obtain monotones $\fmax \circ \mathsf{Aug}_{c}$ and $\fmax \circ \mathsf{Copy}_n$ for any monotone $f$, any resource $c$ and any integer $n$.
			These monotone constructions differ from the ones we have seen in Section~\ref{sec:yield and cost} in that the optimization is generally restricted to range over a much smaller set of resources.
			This is a consequence of the monotonicity of $f$, which allows us to use the root monotones for $\mathcal{P}(Q)$ from \Cref{lem:monotone_extensions} in order to remove the free image and free preimage maps as compared to cost and yield constructions.
			
			As a mild generalization of \Cref{ex:augmentation} from using a single catalyst $c$ to a set $U$, we obtain the following result.
			\begin{proposition}
				For any subset $U$ of the set $R$ of all resources, the $U$-image map
				\begin{equation}
					\downwrt{U} \colon \bigl( \mathcal{P}(R), \enhconv \bigr) \to \bigl( \mathcal{P}(R), \enhconv \bigr)
				\end{equation}
				as introduced in \Cref{def:D-image map} is order-preserving.
			\end{proposition}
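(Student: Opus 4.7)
The plan is to reduce the claim to the compatibility lemma~\ref{thm:Compatibility of star and order} via the characterization of the enhancement preorder established in lemma~\ref{lem:enh_down}. Concretely, I want to show that if $S \enhconv T$ then $U \boxtimes S \enhconv U \boxtimes T$, where $\downwrt{U}(S) = U \boxtimes S$ by definition~\ref{def:D-image map}.

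First I would invoke lemma~\ref{lem:enh_down} to translate the hypothesis $S \enhconv T$ into the equivalent statement $S \succeq T$, where $\succeq$ denotes the ordering of subsets of $R$ defined in~\eqref{eq:Order of sets of resources}. Then I would apply lemma~\ref{thm:Compatibility of star and order} with the subset $U$ to obtain $S \boxtimes U \succeq T \boxtimes U$. Using commutativity of $\boxtimes$, this rearranges to $U \boxtimes S \succeq U \boxtimes T$. A final appeal to lemma~\ref{lem:enh_down} converts this back to $U \boxtimes S \enhconv U \boxtimes T$, which is precisely $\downwrt{U}(S) \enhconv \downwrt{U}(T)$. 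Thus $\downwrt{U}$ is order-preserving.

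One minor point worth flagging is the edge case where $T$ or $U \boxtimes T$ is empty. If $T = \emptyset$, then $U \boxtimes T = \emptyset$ by distributivity of $\boxtimes$ over unions, and the conclusion holds trivially by the second clause of definition~\ref{def:enh_ord}. If $T$ is nonempty but $U \boxtimes T = \emptyset$, the same remark applies. Otherwise the route through lemma~\ref{lem:enh_down} proceeds unhindered. Since the argument is a straightforward chain of rewrites using results already established in the excerpt, I do not anticipate any substantive obstacle; the only thing to be careful about is matching the direction of the orderings (the codomain of $\downwrt{U}$ now carries $\enhconv$ rather than the opposite-of-subset ordering used in lemma~\ref{thm:down and up wrt D are order-preserving}).
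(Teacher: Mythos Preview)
Your proposal is correct and follows essentially the same approach as the paper: invoke lemma~\ref{lem:enh_down} to identify $\enhconv$ with $\succeq$ on $\mathcal{P}(R)$, then appeal to lemma~\ref{thm:Compatibility of star and order}. The paper's proof is simply a one-line version of what you spelled out, and your empty-set remarks are already absorbed by the equivalence in lemma~\ref{lem:enh_down}, so they are harmless but unnecessary.
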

			\begin{proof}
				Using \Cref{lem:enh_down}, this statement reduces to that of \Cref{thm:Compatibility of star and order} (compatibility of $\boxtimes$ and $\succeq$).
			\end{proof}
			In particular, restricting $\downwrt{U}$ to individual resources provides a valid mediating map for the translation of monotones via the enhancement ordering.
			That is, for any set of resources $U$ and any partial monotone $f_W$ with an upward closed domain $W$, we get a target monotone via the composition
			\begin{equation}\label{eq:preim_monotone}
				\begin{tikzcd}
					(R, \freeconv) \ar[r, "\downwrt{U}"] 	& 	\bigl( \mathcal{P}(R), \freeconv_{\rm enh} \bigr) \ar[rr, "\fWmax"] 	&& 	\ordreals
				\end{tikzcd}
			\end{equation}
			
			
			There is a similar result for preimage maps, which can be used to generate mediating maps for the degradation ordering.
			\begin{proposition}\label{prop:preim_isotone}
				For any subset $U$ of the set $R$ of all resources, the $U$-preimage map
				\begin{equation}
					\upwrt{U} \colon \bigl( \mathcal{P}(R), \degconv \bigr) \to \bigl( \mathcal{P}(R), \degconv \bigr)
				\end{equation}
				as introduced in \Cref{def:D-image map} is order-preserving.
			\end{proposition}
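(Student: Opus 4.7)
The plan is to mirror the proof structure of the preceding proposition on $\downwrt{U}$, but using the degradation-preorder analogues of \cref{lem:enh_down} and \cref{thm:Compatibility of star and order}. First I would invoke \cref{lem:deg_up} to translate both the hypothesis and the conclusion into set-inclusion statements about upward closures: the hypothesis $S \degconv T$ becomes $\up(S) \subseteq \up(T)$, and showing $\upwrt{U}(S) \degconv \upwrt{U}(T)$ reduces to showing $\up(\upwrt{U}(S)) \subseteq \up(\upwrt{U}(T))$.

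Next I would apply \cref{thm:composing image maps} to commute $\up$ past $\upwrt{U}$. Writing $\up = \upwrt{R_{\rm free}}$, the lemma gives $\up \circ \upwrt{U} = \upwrt{U \boxtimes R_{\rm free}}$ and $\upwrt{U} \circ \up = \upwrt{R_{\rm free} \boxtimes U}$; since $\boxtimes$ is commutative, these two composites coincide. The goal therefore rewrites as $\upwrt{U}(\up(S)) \subseteq \upwrt{U}(\up(T))$, which follows immediately from the hypothesis $\up(S) \subseteq \up(T)$ together with the observation that the set-valued extension $\upwrt{U}(A) := \bigcup_{a \in A} \upwrt{U}(a)$ is manifestly monotone with respect to $\subseteq$.

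There is no substantial obstacle here: the result is the Galois-theoretic dual of the preceding proposition, and the lemmas \cref{lem:deg_up} and \cref{thm:composing image maps} are set up precisely to enable this kind of reduction. The only point that merits a line of explanation is the use of commutativity of $\boxtimes$ when matching up the two orderings of the composition in \cref{thm:composing image maps}.

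As a sanity check, a more direct argument is also available: given a degradation $\mathsf{deg}\colon S \to T$, for each $x \in \upwrt{U}(S)$ pick $s \in S$ with $s \in U \boxtimes x$; then $\mathsf{deg}(s) \in R_{\rm free} \boxtimes s \subseteq U \boxtimes (R_{\rm free} \boxtimes x)$ by commutativity, so some witness $y \in R_{\rm free} \boxtimes x$ satisfies $\mathsf{deg}(s) \in U \boxtimes y$, placing $y$ in $\upwrt{U}(T)$ and exhibiting $x \succeq y$. This route produces a degradation $\upwrt{U}(S) \to \upwrt{U}(T)$ by a choice function, but the reduction via \cref{lem:deg_up} is cleaner and parallels the proof of the preceding proposition, so I would present that one.
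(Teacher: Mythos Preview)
Your proposal is correct and follows essentially the same argument as the paper's proof: both invoke \cref{lem:deg_up} to translate the degradation relation into an inclusion of upward closures, then use the commutation $\up \circ \upwrt{U} = \upwrt{U} \circ \up$ (via \cref{thm:composing image maps} and commutativity of $\boxtimes$) together with monotonicity of $\upwrt{U}$ with respect to $\subseteq$ to push the inclusion through, and finally apply \cref{lem:deg_up} once more to conclude. The only cosmetic difference is the order of presentation; your direct degradation-building sanity check is an extra that the paper does not include.
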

			\begin{proof}
				By \Cref{lem:deg_up}, we have
				\begin{equation}
					S \degconv T   \iff    \up (S) \subseteq  \up (T)   \implies  \upwrt{U} \circ \up (S) \subseteq  \upwrt{U} \circ \up (T).
				\end{equation}
				Note that for arbitrary sets of resources $V$ and $W$ the respective preimage maps commute,
				\begin{equation}
					\upwrt{V} \circ \upwrt{W} = \upwrt{W} \circ \upwrt{V},
				\end{equation}
				since $\boxtimes$ is commutative.
				In fact, they are both equal to $\upwrt{V \boxtimes W}$ by \Cref{thm:composing image maps}.
				Therefore, choosing $V$ to be the $U$ from the proposition statement and choosing $W = R_{\rm free}$ gives
				\begin{equation}
					\up \circ \upwrt{U} (S) \subseteq  \up \circ \upwrt{U} (T).
				\end{equation}
				By \Cref{lem:deg_up} again, this implies $\upwrt{U} (S) \degconv \upwrt{U} (T)$, which is what we aimed to establish.
%
			\end{proof}
			
			As before, restricting $\upwrt{U}$ to individual resources yields a mediating map for the translation of monotones via the degradation ordering.
			That is, for any set of resources $U$ and any partial monotone $f_W$ with a downward closed domain $W$, we get a target monotone via the composition
			\begin{equation}
				\begin{tikzcd}
					(R, \freeconv) \ar[r, "\upwrt{U}"] 	& 	\bigl( \mathcal{P}(R), \freeconv_{\rm deg} \bigr) \ar[rr, "\fWmin"] 	&& 	\ordreals
				\end{tikzcd}
			\end{equation}
			
			We will see further examples of this kind in Section~\ref{sec:Monotones From Contractions in General}, but the source and target resource theories differ there.
			

	\subsection{Translating Monotones from a Resource Theory of Distinguishability}
	\label{sec:Monotones from Information Theory}

		Many resource theories of interest either have an information-theoretic flavour or are explicitly about informational resources.
		It is no surprise then, that in these resource theories, measures of information often crop up as monotones or as building blocks for resource monotones.
		As we will see below, to any resource theory $\RT{R}$, it is possible to associate an information theory where, roughly speaking, the set $R$ of resources constitutes the alphabet for the encoding of a 
			classical message.
		This association can then be used to understand such results in greater generality.

		We now provide a couple of examples of monotone constructions based on contractions that we aim to understand and generalize here.

		Consider a quantum resource theory of states, where $R$ contains all quantum states, and $R_{\rm free}$ contains states considered to be free (we need not stipulate which 
			maps of states are considered free operations in the resource theory because the construction will work regardless of this choice).
		A contraction in such a resource theory is a real-valued function $f$ of pairs of quantum states that satisfies the data processing inequality
			\begin{equation}
				\label{eq:data processing inequality}
				f(\rho, \sigma) \geq f \bigl(\Phi(\rho), \Phi(\sigma) \bigr)
			\end{equation}
			for all states $\rho$ and $\sigma$ (of the same system) and all CPTP maps $\Phi$.

		\begin{example}[monotones as distance from the free set]
			\label{ex:Monotones from contractions}
			Given a contraction, it is well-known that one can obtain a monotone by minimizing over the set of all free states in one of its arguments.
			That is, the function $M \colon R \to \reals$ given by
			\begin{align}
				\label{eq:monotone from contraction}
				M (\rho) = \inf \Set*[\big]{f(\rho, \sigma)  \given  \sigma \in R_{\rm free} }
			\end{align}
			is a resource monotone.
			Various monotones based on distance measures such as the trace distance, relative R\'{e}nyi entropies~\cite{Petz1986}, and many others arise in this way.
			An extensive overview of these kinds of monotones can be found in~\cite{Chitambar2018}.
		\end{example}
		
		\begin{example}[monotones from operations that commute with the free operations]
			\label{ex:Monotones from contractions 2}
			
			
			Consider the quantum resource theory of asymmetry with respect to a symmetry group $G$ introduced in \Cref{ex:rt_quantum} \ref{ex:rt_asymmetry}.
			It is one where the free processes are those that are covariant with respect to a group action of $G$ on quantum states.
			Given a contraction $f$ and a twirling map $\mathcal{G}_p$ weighted by a distribution $p$ over the group, that is, 
			\begin{equation}
				\mathcal{G}_p(\cdot)= \int \mathrm{d}g \; p(g)\; U_g (\cdot) U^{\dag}_g,
			\end{equation}
			the following function is a monotone~\cite{Marvian2012}:
			\begin{equation}
				M(\rho) \coloneqq f \bigl( \rho, \mathcal{G}_p (\rho) \bigr).
			\end{equation}  
			The proof of monotonicity relies on the fact that any free operation mapping $\rho$ to $\sigma$ also maps $\mathcal{G}_p (\rho)$ to $\mathcal{G}_p (\sigma)$.
			This is because a twirling map commutes with each of the free operations in this resource theory.
			It is worth considering some special cases of this monotone.
			If $p$ is a point distribution on some nonunit element of the group, $g_0$, so that $M(\rho)= f(\rho, U_{g_0} \rho U^{\dag}_{g_0})$, then the monotone quantifies how distinguishable $\rho$ is from its image (or ``rotation'') under $g_0$.
			If $p$ is a uniform distribution over the group, so that $M(\rho)= f(\rho, \int \mathrm{d}g  U_g  \rho U^{\dag}_g)$, then the monotone quantifies how distinguishable $\rho$ is from its ``uniformly twirled'' counterpart.
			One can then understand the monotonicity of these functions intuitively as the statement that more asymmetric states are more distinguishable from their rotated and uniformly twirled counterparts.
			Note, furthermore, that one can define a monotone from a contraction and a \emph{pair} of distributions, $p$ and $q$, given by $M(\rho) = f ( \mathcal{G}_q (\rho), \mathcal{G}_p (\rho))$~\cite{Marvian2012}.
		\end{example}
		
		\subsubsection{Resource Theory of Distinguishability}
		\label{sec:Encodings}
			
			In order to understand monotones constructed from contractions (Examples~\ref{ex:Monotones from contractions} and \ref{ex:Monotones from contractions 2}) as special cases of the \ref{broad scheme} and to thereby generalize them, we study a resource theory of tuples of resources in which contractions are resource monotones.
			In the context of \emph{quantum states}, a related notion has been studied recently (for the special case of pairs of states), namely, the resource theory of asymmetric distinguishability~\cite{Wang2019}.

			\begin{definition}
				\label{def:Information Theory}
				Let $k$ be a natural number and let $\mathcal{R} \equiv (R_{\rm free}, R, \boxtimes)$ be a resource theory in which $r \boxtimes s$ is a singleton set of resources for all $r,s \in R$; i.e., 
					resources can be combined in exactly one way so that we can think of $\boxtimes$ as a binary operation $R \times R \to R$.
				The \textbf{resource theory of (unconstrained) $\bm{k}$-distinguishability associated to $\mathcal{R}$} is a resource theory $(\R{k}_{\rm cons}, \R{k}, \boxtimes)$, where
				\begin{enumerate}
					\item $\R{k} = R \times R \times \ldots \times R$ is the set of all $k$-tuples of resources from $R$,
					\item $\R{k}_{\rm cons} \subseteq \R{k}$ is the set of all constant $k$-tuples; i.e., those of the form $(r, r, \ldots, r)$; and
					\item the composition of $k$-tuples is given by
						\begin{equation}
							(r_1, r_2, \ldots, r_k) \boxtimes (s_1, s_2, \ldots, s_k) \coloneqq (r_1 \boxtimes s_1, r_2 \boxtimes s_2, \ldots, r_k \boxtimes s_k).
						\end{equation}
				\end{enumerate}
			\end{definition}
			One can construct a similar resource theory even if one relaxes the assumption that any two resources can be combined in exactly one way.
			However, the construction becomes more complicated and therefore we forego considering it here in order to avoid obscuring the main ideas.
			The more general construction of a resource theory of $k$-distinguishability can be found in \cite{Gonda2021}. 
			All the results mentioned here also have their counterparts in the more general case.
			
			A natural way to view the resource theory $(\R{k}_{\rm cons}, \R{k}, \boxtimes)$ is to think of the $k$-tuples as encodings of a classical hypothesis.
			Namely, if $H_k$ is a set of cardinality $k$ representing a classical hypothesis, a $k$-tuple of resources $(r_1,\dots, r_k)$ can be conceptualized as a function $H_k \to R$ taking each value $h\in H_k$ to a resource $r_h\in R$.
			The free $k$-tuples are the constant ones because they can be constructed with no information about the value of $H_k$.
			That is, if the $k$-tuple in question is from $\R{k}_{\rm cons}$, so that every value of $H_k$ is associated to the \emph{same} resource, then learning the identity of the resource teaches one nothing about 
				the value of $H_k$.
			Note that the resource theory $(\R{k}_{\rm cons}, \R{k}, \boxtimes)$ doesn't distinguish between valuable and free resources in the original theory $\mathcal{R}$.
			It is purely about the information content of the encodings.
				
			If we now look back at the definition of a contraction for quantum states via data processing inequality, we can see that monotones in the resource theory $(\R{2}_{\rm cons}, \R{2}, \boxtimes)$ provide a 
				suitable generalization of this notion.
			We thus refer to monotones in the resource theory $(\R{2}_{\rm cons}, \R{2}, \boxtimes)$ as \textbf{contractions}. 
			Analogously, a monotone in the resource theory $(\R{k}_{\rm cons}, \R{k}, \boxtimes)$ is termed a $\bm{k}$\textbf{-contraction}.
			
			\begin{remark}\label{rem:cons_free}
				In several monotone constructions throughout Section~\ref{sec:Translating Monotones}, we will also make use of a variant of resource theories of unconstrained $k$-distinguishability associated to $\mathcal{R}$.
				It is one that combines the two restrictions---one given by $R_{\rm free} \subseteq R$ and the other by $\R{k}_{\rm cons} \subseteq \R{k}$.
				More specifically, we define 
				\begin{equation}
					\label{eq:free distinguishability}
					\Rconsfree{k} \coloneqq \Rcons{k} \cap \Rfreebasic{k},
				\end{equation}
				and consequently obtain a resource thery $(\Rconsfree{k}, \R{k}, \boxtimes)$ termed the \textbf{resource theory of constrained $\bm{k}$-distinguishability}.
				Note that this notion subsumes the one from Definition~\ref{def:Information Theory} if one takes $R_{\rm free} \coloneqq R$ so that $\mathcal{R}$ is just $(R,R,\boxtimes)$.
			\end{remark}
			
			\begin{example}[yield applied to pairs of resources]
				\label{ex:cost for pairs}
				Generalized channel divergences \cite{Leditzky2018} arise from the generalized yield construction when thinking about the resource theory of pairs of resources, i.e., a resource theory of distinguishability \cite{Wang2019}.
				More details on pairs (and other tuples) of resources and in what way they constitute a resource theory can be found in \Cref{sec:Encodings}.
			\end{example}

		\subsubsection{Monotones from Functions That Commute with the Free Operations}
		\label{sec:monotones from twirling generalized}
		
			We now describe a generalization of the monotone construction in Example \ref{ex:Monotones from contractions 2} to a general resource theory $\mathcal{R} = \RT{R}$. 
			Let $(\R{2}_{\rm cons}, \R{2}, \boxtimes)$ be the resource theory of 2-distinguishability associated to $\mathcal{R}$, as in Definition~\ref{def:Information Theory}. 
			We take its resource ordering $(\R{2}, \succeq)$ to be the mediating preordered set in the \ref*{broad scheme}.
			A root monotone is therefore a particular contraction $f \colon \R{2} \to \reals$, i.e., a monotone in the resource theory $(\R{2}_{\rm cons}, \R{2}, \boxtimes)$.
			
			Consider a function $\Phi \colon R \to R$ which commutes with the free operations in the sense that for any $t \in R_{\rm free}$ and $r \in R$, we have 
			\begin{equation}
				\Phi(t \boxtimes r) = t \boxtimes \Phi(r).
			\end{equation}
			Take the mediating order-preserving map in the \ref*{broad scheme} to be $r \mapsto (r,\Phi(r))$. 
			The proof that the latter map is indeed order-preserving is strightforward.
			If $r \succeq s$, so that there is a free resource $t$ satisfying $s= t \boxtimes r$, then we have
			\begin{equation}
				(t,t) \boxtimes \bigl( r,  \Phi(r) \bigr) = \bigl( t \boxtimes r, t \boxtimes \Phi(r) \bigr) = \bigl( s, \Phi(s) \bigr)
			\end{equation}
			and therefore the images of $r$ and $s$ are ordered in the resource theory of $2$-distinguishability:
			\begin{equation}
				(r,  \Phi(r)) \succeq (s, \Phi(s)).
			\end{equation}

			It then follows that we can construct a monotone for resources in $R$, for any given contraction $f$ and any function $\Phi$ that commutes with free operations, as follows:
			\begin{equation}\label{eq:monotone_from_free_commuting}
				M(r) \coloneqq f \bigl( r, \Phi(r) \bigr)
			\end{equation}
			This construction clearly applies to resource theories which are not universally combinable, since the argument does not make use the commutativity of the $\boxtimes$ operation (see \cite{Gonda2021} for more details).			
			Moreover, it also applies to any $f$ that is merely a monotone in a resource theory of unconstrained $k$-distinguishability from \Cref{rem:cons_free}.
			
			Besides twirling operations in the resource theory of asymmetry, there are functions in other resource theories, which commute with the free operations.
			\begin{example}\label{ex:dist_from_thermal}
				In the resource theory of athermality introduced in \Cref{ex:rt_classical} \ref{ex:rt_athermality}, one can take $\Phi$ to be the discarding map followed by a preparation of the thermal state for the system just discarded.
				Monotones constructed in this way describe the thermo-majorization order \cite{Horodecki2013}, a special case of relative majorization first introduced in \cite{Veinott1971}.
				They arise by translating the matrix majorization order (which, in the case of a binary hypothesis, boils down to zonotope inclusion \cite{Dahl1999}) through the mediating map $r \mapsto (r,\Phi(r))$.
				For the special case of resource theory of nonuniformity, if we choose the root monotone to be the relative entropy (a.k.a.\ Kullback-Leibler divergence), we obtain the Shannon nonuniformity from \Cref{ex:monotones} \ref{ex:monotones_nonuniformity}.
			\end{example}
			\begin{example}\label{ex:dist_from_product}
				A similar example arises in resource theories wherein the free operations are \emph{local}, i.e., those which are of tensor product form with respect to some prespecified partition.
				The map that sends every state to the tensor product of its marginals commutes with the free operations and thus can be used in the above monotone construction.
				If $\rho$ denotes a bipartite state on the joint system $\alpha \otimes \beta$ and $\rho_{\alpha} \otimes \rho_{\beta}$ is the tensor product of its projections (or ``marginals''), the resulting monotone is
				\begin{equation}
					M(\rho) =  f ( \rho , \rho_{\alpha} \otimes \rho_{\beta} ).
				\end{equation}
				Choosing $f$ to be the relative Von Neumann entropy, for instance, gives rise to the mutual information $I(\alpha \mathbin{;} \beta)$ as a measure of correlation strength.
			\end{example}

		\subsubsection{Minimal Distinguishability as a Monotone}
		\label{sec:Monotones From Contractions in General}
		
		
			We now turn to the question of generalizing the monotone construction in Example \ref{ex:Monotones from contractions}, which expresses the distance of a resource from the set of free resources in terms of the smallest distinguishability between the given resources and a free resource. 
			Because of the optimization over all free resources in equation~\eqref{eq:monotone from contraction}, which occurs at the level of \emph{sets} of $2$-tuples of resources, we use the power set $\mathcal{P}(\R{2})$ as the mediating set in order to cast this construction as an instance of the \ref{broad scheme}.
			This is in contrast to the mediating set $\R{2}$ that is sufficient for our discussion of monotones obtained from functions that commute with free operations.
				
			Given any contraction $f$, we can obtain two root monotones for $\mathcal{P}(\R{2})$ using the scheme described in Section~\ref{sec:Monotones for Sets of Resources}. 
			They are $\fmax$ and $\fmin$ as defined in \Cref{lem:monotone_extensions}.
			The root monotone $\fmax$ can be used for translating monotones from a resource theory of distinguishability by virtue of mediating maps that generalize the functions that commute with free operations from \Cref{sec:monotones from twirling generalized}.
			These mediating maps are called \emph{oplax equivariant} maps in \cite{Gonda2021}.
			On the other hand, the root monotone $\fmin$ is the one that allows us to understand the monotone construction of Example~\ref{ex:Monotones from contractions} as an instance of the \ref*{broad scheme}. 
			The latter is the one we focus on hereafter.
			
			According to the \ref*{broad scheme}, in order to generate a resource monotone by pulling back $\fmin$, we need an order-preserving map from $\ordres$ to $(\mathcal{P}(\R{2}), \succeq_{\rm deg})$. 
			Consider the map $\mathcal{E}_{R_{\rm free}}$ defined by
			\begin{equation}\label{ERfree}
				\mathcal{E}_{R_{\rm free}} (r) \coloneqq \Set*[\big]{ (r, s)  \given  s \in R_{\rm free} }.
			\end{equation}
			This map is a valid candidate for the mediating order-preserving map to pull $\fmin$ through.
			The fact that it is order-preserving follows from a general result that we prove in Section~\ref{sec:General Translating} (in particular, see Lemmas~\ref{thm:Sufficient condition for order-preserving maps 3} 
				and~\ref{thm:Generating more order-preserving maps}).

			Following the \ref*{broad scheme}, the monotone $\fmin$ on $(\mathcal{P}(\R{2}), \succeq_{\rm deg})$ obtained from the $2$-contraction $f$ on $\R{2}$, can be pulled back to a monotone 
				$M$ on $(R, \succeq)$ via $M \coloneqq \fmin \circ \mathcal{E}_{R_{\rm free}}$.  
			Given the definitions of $\fmin$ and $\mathcal{E}_{R_{\rm free}}$, this can be unpacked to
			\begin{equation}\label{eq:GeneralMonotoneFromContraction}
				\begin{split}
					M(r) &= \fmin \left( \mathcal{E}_{R_{\rm free}} (r) \right) \\
						&= \inf f \bigl( \Set{ (r, s)  \given  s \in R_{\rm free} } \bigr)  \\
						&= \inf \Set*[\big]{ f(r,s)  \given  s \in R_{\rm free} }.
				\end{split}
			\end{equation}

			This expression is clearly very close to the construction of monotones from contractions in Example~\ref{ex:Monotones from contractions}. 
			The only difference that remains is that the optimization in equation~\eqref{eq:monotone from contraction} is specific to quantum \emph{states} while the one given here implements an optimization over resources $s$ that can be of any type.
			This is relevant because in the proof that $\fmin$ is order-preserving, in Lemma~\ref{lem:monotone_extensions}, it is assumed that the domain of $f$ is downward closed and thus it typically includes resources 
				\emph{of all types}. 
			Nevertheless, we can still use a type-specific contraction in the \ref*{broad scheme} as long as the mediating order-preserving map preserves the types of resources. 
			In the case of Example~\ref{ex:Monotones from contractions}, we need to ensure that the mediating map $\mathcal{E}_{\rm free}$ sends \emph{states} to sets of pairs of \emph{states}. 
			The \ref*{broad scheme} then generates a target monotone, which is non-trivial for states only.\footnotemark{}
			These particular issues regarding resource types disappear in the context of quantale modules~\cite{Gonda2021}.
						
			We have shown how to recast the monotone construction of Example~\ref{ex:Monotones from contractions} as an instance of the \ref*{broad scheme}. 
			We now turn to the question of how this abstract perspective on monotone constructions based on contractions leads to generalizations of the types of such monotone constructions considered previously. 

			The \ref*{broad scheme} stipulates that \emph{any} order-preserving map from $\ordres$ to $(\mathcal{P}(\R{2}), \succeq_{\rm deg})$ generates a monotone for $\ordres$ from $\fmin$. 
			A natural alternative to the order-preserving map $\mathcal{E}_{R_{\rm free}}$ defined by equation~\eqref{ERfree} is one of the same form but where the free set $R_{\rm free}$ is replaced by 
				any downward closed set $R_{\rm dc} \in \mathcal{DC}(R)$; namely, the map $\mathcal{E}_{R_{\rm dc}}$ defined by
			\begin{equation}
				\mathcal{E}_{R_{\rm dc}} (r) \coloneqq \Set*[\big]{ (r, s) \given  s \in R_{\rm dc} }.
			\end{equation}
			The order-preserving property of this function for the relevant domain and codomain follows by the same argument as for $\mathcal{E}_{R_{\rm free}}$.
			The reason why $R_{\rm dc}$ has to be downward closed is essentally the same as the reason why the set $Y$ in \Cref{lem:monotone_extensions} has to be downward closed.
			Consequently, we obtain a monotone
			\begin{equation}\label{eq:min_dis_downset}
				\fmin \circ \mathcal{E}_{R_{\rm dc}} (r) = \inf \Set*[\big]{ f(r,s)  \given  s \in R_{\rm dc} }.
			\end{equation}
			
			The upshot of this discussion is that for every $2$-contraction $f \colon \R{2} \to \reals$ and every downward closed subset $R_{\rm dc}$ of $R$, the function 
				$\fmin \circ \mathcal{E}_{R_{\rm dc}}$ is a monotone on $(R, \succeq)$. 
			For every concrete example of a downward closed set distinct from the free set, therefore, one obtains a corresponding variation on a monotone expressing the minimal ``$f$-distance'' from the free set---one 
				that quantifies the minimal distinguishability from the chosen downward closed set according to $f$.
				
			\begin{example}[examples of monotones quantifying the distance to a downward closed set]\label{examplesMonotonesDCsets}\hspace{0pt}
				\begin{compactenum}
					\item Consider the resource theory of bipartite quantum entanglement from \Cref{ex:rt_quantum} \ref{ex:rt_entanglement}.
					For any contraction $f$ of quantum states, the $f$-distinguishability between a state and the set of separable states is a popular entanglement monotone. 
					Recall that the set of states with entanglement rank at most $k$ is obtained from a downward closed set in this resource theory by intersecting with the set of states 
						(as noted in \Cref{examplesDCsets} \ref{ex:entrankDCsets}).\footnotemark{}
					\footnotetext{This result can be found in \cite[Theorem 6.23]{watrous2018theory} and it has been first shown in \cite{Terhal2000} where entanglement rank has been introduced under the name ``Schmidt number''.}%
					Thus, for every $k$, the $f$-distinguishability between a state and the set of states with entanglement rank at most $k$ is also an entanglement monotone.
					
					\item Resource theories of quantum coherence \cite{streltsov2017colloquium} describe resources of ``quantum superposition'' relative to a given basis of the Hilbert space, so that the free states are those given by diagonal density matrices.
						Besides these, there are downsets of bounded ``coherence number''. 
						In particular, a pure state is $k$-incoherent if it is block-diagonal with blocks of size no larger than $k \times k$.
						For general states, we can take the convex hull of these to obtain all the $k$-incoherent states \cite{johnston2018evaluating}.
						The coherence number $k$ introduced in \cite{sperling2015convex} plays the analogous role of Schmidt number from entanglement theory \cite{killoran2016converting,regula2018converting}.
						Indeed, $k$-incoherent states form a downward closed set for any $k$ and can be thus used to generate monotones via \cref{eq:min_dis_downset}.
					
					\item In the resource theory of \emph{tripartite} quantum entanglement, the set of states that are \emph{not} intrinsically $3$-way entangled form a downward closed set 
						(as noted in \Cref{examplesDCsets} \ref{ex:3-wayentangledDCsets}).
					Consequently, for any contraction $f$ of quantum states, the $f$-distance between a state and this set is an entanglement monotone that quantifies intrinsic $3$-way entanglement.
					
					\item Resource theories of $G$-asymmetry provide another illustrative example. 
					In particular, the set of states that are symmetric under a subgroup $H$ of $G$ form a downward closed set (as noted in \Cref{examplesDCsets} \ref{ex:symmetricDCsets}). 
					Therefore, the minimal $f$-distinguishability between a state and the states symmetric under $H$ is also an asymmetry monotone. 
					Roughly speaking, of all the ways that a state may break $G$-symmetry, the extent to which it does so by breaking $H$-symmetry is quantified by this monotone.\footnotemark{}
					\footnotetext{If the contraction $f$ is the relative entropy, then this monotone becomes $S(\rho||\mathcal{G}_H(\rho))$ and is equivalent to the Holevo asymmetry monotone $S(\mathcal{G}_H(\rho)) - S(\rho)$ associated to the uniform twirling $\mathcal{G}_H$ over $H$. 
						This equivalence follows from \cite[Proposition 2]{Gour2009}. 
						Note that the simplest case of such a monotone, $S(\mathcal{G}(\rho)) - S(\rho)$, was introduced in \cite{Vaccaro2008}.}%
				\end{compactenum}
			\end{example}
	
		\subsubsection{Monotones from k-Contractions in General}
		\label{sec:Monotones From k-Contractions in General}
		
			We can repeat the construction from Section~\ref{sec:Monotones From Contractions in General} for $f$ that is a $k$-contraction instead of a $2$-contraction.
			Specifically, given a $k$-contraction $f$, we have monotones
			\begin{align}
				\fmax &\colon (\mathcal{P}(\R{k}), \succeq_{\rm enh}) \to \ordreals,  &  \fmin &\colon (\mathcal{P}(\R{k}), \succeq_{\rm deg}) \to \ordreals,
			\end{align}
			which can be used as root monotones in the \ref{broad scheme}.
				 
			Notice that we can view $\mathcal{E} \equiv \mathcal{E}_{R}$ from Section~\ref{sec:Monotones From Contractions in General} as mapping $r$ to its preimage under the projection 
				$\Pi_1 \colon \R{2} \to R$ given by $(r,s) \mapsto r$.
			When we replace pairs of resources with $k$-tuples, we have $k$ such projections $\Pi_i \colon \R{k} \to R$ given by $(r_1, r_2, \ldots, r_k) \to r_i$, one for each $i \in \{1,2,\ldots, k\}$.
			Let us describe the case of $i=1$ here.
			The preimage map $\mathcal{E} \coloneqq \Pi_1^{-1}$ of the projection $\Pi_1$ is given by
			\begin{equation}
				\mathcal{E}(r) = \Set*[\big]{ (r, r_2, r_3, \ldots, r_k)  \given  r_j \in R \text{ for all } j \in \{2,3, \ldots, k\} },
			\end{equation}
			and it is order-preserving as a map $\ordres \to (\mathcal{P}(\R{k}), \succeq_{\rm deg})$, which follows from Lemma~\ref{thm:Sufficient condition for order-preserving maps 3}. 
			
			For any fixed downward closed set $W_{\rm dc}$ of $\R{k}$, we can then restrict each $\mathcal{E}(r)$ to its intersection with $W_{\rm dc}$ and retain the order-preserving property with respect to 
				$\succeq_{\rm deg}$.
			In particular, the map given by
			\begin{equation}
				\label{eq:EWdc}
				\begin{split}
					\mathcal{E}_{W_{\rm dc}} \colon \ordres &\to (\mathcal{P}(\R{k}), \succeq_{\rm deg}) \\
					r &\mapsto \mathcal{E}(r)  \cap  W_{\rm dc}
				\end{split}
			\end{equation}
			is order-preserving, which follows from Lemma~\ref{thm:Generating more order-preserving maps}. 
			Writing $\mathcal{E}_{W_{\rm dc}} (r)$ in this way, as an intersection of $\mathcal{E}(r)$ and $W_{\rm dc}$, is instructive because it more readily connects to the general results from Section~\ref{sec:General Translating}.
			Note that we can recover the map $\mathcal{E}_{R_{\rm dc}}$ from the previous section via the choice of $W_{\rm dc} = R \times R_{\rm dc}$.
			The fact that $R \times R_{\rm dc}$ is downward closed follows from the following lemma.
			\begin{lemma}[product of downward closed sets is downward closed]
				\label{thm:Product of DC is DC}
				For any family of $k$ sets of resources $\{S_i\}_{i=1}^{k}$, downward closed in the resource theory $\RT{R}$, the set
				\begin{equation}
					\label{eq:set product}
					S \coloneqq S_1 \times S_2 \times \dots \times S_k
				\end{equation} 
				is downward closed in $(\Rconsfree{k},\R{k},\boxtimes)$.
				Similarly, if each $S_i$ is upward closed in $\RT{R}$, then $S_1 \times S_2 \times \ldots \times S_k$ is upward closed in $(\Rconsfree{k},\R{k},\boxtimes)$.
			\end{lemma}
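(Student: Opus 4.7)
The plan is to unpack what the ordering on $(\Rconsfree{k},\R{k},\boxtimes)$ actually is, and observe that downward/upward closure in this product theory reduces componentwise to downward/upward closure in $\RT{R}$. Since $\Rconsfree{k}$ consists of constant tuples of the form $(t,t,\dots,t)$ with $t \in R_{\rm free}$, and the $\boxtimes$-composition of $k$-tuples is componentwise, the order relation on $\R{k}$ satisfies
\begin{equation}
    (r_1,\dots,r_k) \succeq (s_1,\dots,s_k) \iff \exists\, t \in R_{\rm free} : s_i = t \boxtimes r_i \text{ for all } i \in \{1,\dots,k\}.
\end{equation}
In particular, this implies $r_i \succeq s_i$ in $\RT{R}$ for every $i$ (witnessed by the common free element $t$).

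For the downward-closed statement, I would take any $(r_1,\dots,r_k) \in S_1 \times \cdots \times S_k$ and any $(s_1,\dots,s_k) \in \R{k}$ with $(r_1,\dots,r_k) \succeq (s_1,\dots,s_k)$. By the observation above, $r_i \succeq s_i$ in $\RT{R}$ for each $i$, and since $r_i \in S_i$ and $S_i$ is downward closed in $\RT{R}$, we conclude $s_i \in S_i$. Hence $(s_1,\dots,s_k) \in S_1 \times \cdots \times S_k$, proving downward closure of the product. The upward-closed case is the same argument in the other direction: from $(r_1,\dots,r_k) \succeq (s_1,\dots,s_k)$ with $(s_1,\dots,s_k) \in S_1 \times \cdots \times S_k$, one deduces $r_i \succeq s_i$ componentwise, and upward closure of each $S_i$ then yields $r_i \in S_i$.

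I do not expect any real obstacle here; the only point requiring care is the use of a \emph{single} free element $t$ acting on every component simultaneously, which is where the constancy constraint $\Rconsfree{k} = \Rcons{k} \cap \Rfreebasic{k}$ enters. This constraint is strictly stronger than allowing independent free elements for each component, but since downward/upward closure in $\RT{R}$ is stable under any free transformation, the same conclusion holds a fortiori when the transformations in the $k$ components happen to coincide. Consequently the componentwise argument goes through verbatim, and the proof reduces to two short direct verifications from the definitions.
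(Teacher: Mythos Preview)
Your proof is correct and is essentially the same argument as the paper's, just presented elementwise rather than at the level of sets. The paper verifies downward closure by computing $S \boxtimes \Rconsfree{k} \subseteq S \boxtimes \Rfreebasic{k} = (S_1 \boxtimes R_{\rm free}) \times \cdots \times (S_k \boxtimes R_{\rm free}) \subseteq S$ and similarly for the upward case via $\up(S) \subseteq \up(S_1) \times \cdots \times \up(S_k) \subseteq S$; your unpacking of the order relation in $(\Rconsfree{k},\R{k},\boxtimes)$ and componentwise application of the closure of each $S_i$ is the element-by-element version of exactly this computation.
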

			\begin{proof}
				Note that the set of constant tuples consisting of free resources, $\Rconsfree{k}$, is a subset of all tuples that consist of free resources, $\Rfreebasic{k}$.
				Therefore, we have
				\begin{equation}
					\begin{split}
						S \boxtimes \Rconsfree{k} &\subseteq S \boxtimes \Rfreebasic{k} \\
							&= (S_1 \boxtimes R_{\rm free}) \times (S_2 \boxtimes R_{\rm free}) \times \dots \times (S_k \boxtimes R_{\rm free}) \\
							&\subseteq S_1 \times S_2 \times \dots \times S_k,
					\end{split}
				\end{equation}
				so that $S$ is indeed downward closed if each $S_i$ is downward closed in the original resource theory.
				
				Similarly, if each $S_i$ is upward closed, then we have	
				\begin{equation}
					\begin{split}
						\up (S_1 \times S_2 \times \dots \times S_k) &\subseteq \up (S_1) \times \up (S_2) \times \dots \times \up (S_k) \\
							&\subseteq S_1 \times S_2 \times \dots \times S_k.
					\end{split}
				\end{equation}
				It follows that $S$ is a subset of $\R{k}$ that is upward closed in $(\Rconsfree{k},\R{k},\boxtimes)$.
			\end{proof}
			However, not all the downward closed sets in the resource theory of $k$-distinguishability are of this kind.
			For example, there is generally no family $\{S_i\}$ of subsets of $\mathcal{R}$, downward closed or not, such that $\Rcons{k}$ (or indeed $\Rconsfree{k}$) is equal to 
				$S_1 \sqcuptimes S_2 \sqcuptimes \ldots \sqcuptimes S_k$.
				
			Sections~\ref{sec:Monotones From Contractions in General} and~\ref{sec:Monotones From k-Contractions in General} can be thus summarized by the following theorem.
			\begin{theorem}[generalized construction of monotones from $k$-contractions]
				\label{thm:contractions}
				Consider a resource theory $\RT{R}$ and let $f \colon \R{k} \to \reals$ be a $k$-contraction.
				For any subset $W_{\rm dc}$ of $\R{k}$, downward closed in $(\Rconsfree{k},\R{k},\boxtimes)$, the function $\fmin \circ \mathcal{E}_{W_{\rm dc}}$ is a monotone in the resource theory $\RT{R}$. 
			\end{theorem}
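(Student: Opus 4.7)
The plan is to recognize this theorem as a direct application of the \ref*{broad scheme}, with mediating preordered set $(\mathcal{P}(\R{k}), \succeq_{\rm deg})$, mediating order-preserving map $\mathcal{E}_{W_{\rm dc}}$, and root monotone $\fmin$. Once the two component maps are confirmed to be order-preserving, the composite $\fmin \circ \mathcal{E}_{W_{\rm dc}}$ will automatically be a resource monotone on $\ordres$, which is exactly the claim.

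First, I would establish that $\fmin$ is a monotone from $(\mathcal{P}(\R{k}), \succeq_{\rm deg})$ to $\ordreals$. Since $f$ is a $k$-contraction, it is a monotone in the unconstrained resource theory $(\Rcons{k}, \R{k}, \boxtimes)$, and is therefore \emph{a fortiori} a monotone in the constrained resource theory $(\Rconsfree{k}, \R{k}, \boxtimes)$ whose free subset is smaller. Its restriction to $W_{\rm dc}$ is then a partial monotone whose domain is downward closed in the resource ordering of $(\Rconsfree{k}, \R{k}, \boxtimes)$, which is precisely the hypothesis needed to invoke \cref{lem:monotone_extensions}. Applying that lemma yields the desired monotonicity of $\fmin$ with respect to the degradation preorder induced on $\mathcal{P}(\R{k})$.

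Second, I would verify that $\mathcal{E}_{W_{\rm dc}} \colon \ordres \to (\mathcal{P}(\R{k}), \succeq_{\rm deg})$ is order-preserving. Using the representation $\mathcal{E}_{W_{\rm dc}}(r) = \mathcal{E}(r) \cap W_{\rm dc}$ from \eqref{eq:EWdc}, I would decompose this into two pieces: forming the preimage $\mathcal{E}(r) = \Pi_1^{-1}(r)$ under the projection $\Pi_1 \colon \R{k} \to R$, and then intersecting with the downward closed set $W_{\rm dc}$. The first piece is order-preserving by \cref{thm:Sufficient condition for order-preserving maps 3}, and the intersection step preserves the order-preserving property by \cref{thm:Generating more order-preserving maps}, as already noted in the discussion preceding the theorem.

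The main obstacle I anticipate is bookkeeping regarding \emph{which} preorder on $\R{k}$ is in force at each stage, as there are two candidates: the resource ordering of $(\Rcons{k}, \R{k}, \boxtimes)$, relevant to the statement that $f$ is a $k$-contraction, and that of $(\Rconsfree{k}, \R{k}, \boxtimes)$, relevant both to the downward closure hypothesis on $W_{\rm dc}$ and to the definition of $\succeq_{\rm deg}$ used when applying \cref{lem:monotone_extensions}. The reconciliation rests on the facts that a monotone with respect to a coarser notion of free conversion is automatically a monotone with respect to a finer one, and that the downward closure condition imposed in the hypothesis matches exactly the notion required by the lemma. Once these compatibilities are made explicit, composing the two order-preserving maps immediately gives the claimed monotone.
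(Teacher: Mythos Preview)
Your proposal is correct and uses the same core lemmas as the paper, but the decomposition differs slightly from the paper's actual proof. You keep the restriction to $W_{\rm dc}$ in the \emph{mediating map}, writing the composite as $\fmin \circ \mathcal{E}_{W_{\rm dc}}$ and invoking \cref{thm:Sufficient condition for order-preserving maps 3} together with \cref{thm:Generating more order-preserving maps} to show $\mathcal{E}_{W_{\rm dc}}$ is order-preserving. The paper instead absorbs the restriction into the \emph{root monotone}, observing the identity $\fmin \circ \mathcal{E}_{W_{\rm dc}} = f_{W_{\rm dc}}\textup{-}\mathsf{min} \circ \mathcal{E}$ and then applying \cref{lem:monotone_extensions} directly to the partial monotone $f_{W_{\rm dc}}$; this way only $\mathcal{E}$ itself needs to be shown order-preserving, so \cref{thm:Generating more order-preserving maps} is bypassed. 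The paper explicitly discusses both routes (your decomposition is exactly the one set up in the text preceding the theorem), so the difference is purely organizational. One small wrinkle in your first step: once you commit to using $\mathcal{E}_{W_{\rm dc}}$ as the mediating map, you do not need to restrict $f$ to $W_{\rm dc}$ at all---since $f$ is a total monotone on $\R{k}$, its domain is trivially downward closed and \cref{lem:monotone_extensions} applies directly to give monotonicity of $\fmin$ on all of $(\mathcal{P}(\R{k}),\succeq_{\rm deg})$.
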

			\begin{proof}
				Note that we have the following equality
				\begin{equation}
					\fmin \circ \mathcal{E}_{W_{\rm dc}} = f_{W_{\rm dc}}\textup{-}\mathsf{min} \circ \mathcal{E} 
				\end{equation}
				where $f_{W_{\rm dc}}$ denotes the partial function with domain $W_{\rm dc}$ which coincides with $f$, whenever it is defined.
				The theorem then follows by combining \Cref{lem:monotone_extensions}, which establishes that $f_{W_{\rm dc}}\textup{-}\mathsf{min}$ is a monotone, and \Cref{thm:Sufficient condition for order-preserving maps 3}, which shows that the map $\mathcal{E}$ is order-preserving. 
			\end{proof}
		
		\subsubsection{Resource Weight and Robustness as Monotones Obtained from a $3$-Contraction}
		\label{sec:weight and robustness}

			As an example of how the generalized construction of monotones from $k$-contractions (Theorem~\ref{thm:contractions}) appears in a more concrete setting, we examine arguably two of the most 
				ubiquitous monotones---resource weight~\cite{Barrett2006,Abramsky2017} and resource robustness~\cite{Chitambar2018}---within the context of resource theories with a linear structure.
			By connecting them to a monotone in the resource theory of distinguishability, we complement the results of~\cite{takagi2019general}, \cite{Skrzypczyk2019}, and \cite{Ducuara2019}.
			In the first two, robustness measures are connected to discrimination tasks, while the latter article describes a similar connection between the weight measure and the state exclusion task.
		
			Let's consider a resource theory $\RT{R}$ with a convex-linear structure on $R$ that is preserved by~$\boxtimes$.
			The elements of $R$ can thus be represented as vectors, and convex combinations are preserved by the composition of resources $\boxtimes$.
			Furthermore, just like in the previous section, we assume that $r \boxtimes s$ is a single resource for all $r,s \in R$, so that $\boxtimes$ is a bilinear map $R \times R \to R$.
			A more general scenario corresponding to a broader idea of convex-linear resource theories is treated in 
				\cite{Gonda2021}.
			
			We can construct the resource theory of $3$-tuples $(\R{3}_{\rm cons}, \R{3}, \boxtimes)$ as described in Section~\ref{sec:Encodings} and define the following function.
			\begin{definition}
				\label{def:cva}
				The \textbf{convex alignment} is a function $\mathsf{cva} \colon \R{3} \to \reals$ defined by
				\begin{equation}
					\mathsf{cva} (r,s,t) \coloneqq 
						\begin{cases}
							\lambda & \text{if $r = \lambda s + (1-\lambda) t$ for $\lambda \in [0,1]$.} \\
							1 & \text{otherwise.}
						\end{cases}
				\end{equation}
			\end{definition}
			\begin{lemma}[convex alignment is a $3$-contraction]
				\label{lem:Alignment}
				Let $\RT{R}$ be a resource theory with a convex-linear structure as described at the start of this section.
				The convex alignment, $\mathsf{cva}$, is a monotone in the resource theory $(\R{3}_{\rm cons}, \R{3}, \boxtimes)$.
			\end{lemma}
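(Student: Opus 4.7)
The plan is to directly verify the defining property of a monotone: whenever $(r,s,t) \succeq (r',s',t')$ in the resource theory $(\R{3}_{\rm cons}, \R{3}, \boxtimes)$, we must establish the inequality $\mathsf{cva}(r,s,t) \geq \mathsf{cva}(r',s',t')$. By unpacking Definition~\ref{def:Information Theory} and the definition of the resource ordering~\eqref{eq:Order of resources}, the hypothesis $(r,s,t) \succeq (r',s',t')$ is equivalent to the existence of a constant tuple $(u,u,u) \in \R{3}_{\rm cons}$ (equivalently, a single $u \in R$) such that
\begin{equation}
(r', s', t') \;=\; (u \boxtimes r,\; u \boxtimes s,\; u \boxtimes t).
\end{equation}
Hence the proof reduces to a statement about what happens under componentwise left-combination by a fixed $u$.

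Next, I would split on whether $r$ admits a convex decomposition over $\{s,t\}$. If no $\lambda \in [0,1]$ satisfies $r = \lambda s + (1-\lambda) t$, then $\mathsf{cva}(r,s,t) = 1$ by the second clause of Definition~\ref{def:cva}, and the desired inequality is automatic since $\mathsf{cva}$ takes values in $[0,1]$. The substantive case is when $r = \lambda s + (1-\lambda) t$ for some $\lambda \in [0,1]$, equal to $\mathsf{cva}(r,s,t)$. Here I would invoke the bilinearity of $\boxtimes$ assumed at the start of Section~\ref{sec:weight and robustness} to compute
\begin{equation}
r' \;=\; u \boxtimes r \;=\; u \boxtimes \bigl(\lambda s + (1-\lambda)t\bigr) \;=\; \lambda (u \boxtimes s) + (1-\lambda)(u \boxtimes t) \;=\; \lambda s' + (1-\lambda) t'.
\end{equation}
Thus the same coefficient $\lambda$ witnesses a convex decomposition of $r'$ over $\{s',t'\}$, and so $\mathsf{cva}(r',s',t') \leq \lambda = \mathsf{cva}(r,s,t)$, as desired.

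The only subtlety I expect is a bookkeeping issue rather than a real obstacle: Definition~\ref{def:cva} does not uniquely specify the value of $\mathsf{cva}$ when more than one $\lambda \in [0,1]$ satisfies the decomposition (for instance, when $s = t$). I would handle this by remarking that the argument above shows every valid $\lambda$ for $(r,s,t)$ is a valid $\lambda$ for $(r',s',t')$ (while the converse may fail); hence under any convention in which $\mathsf{cva}$ is chosen as the infimum over valid $\lambda$'s with $\inf\emptyset \coloneqq 1$, the inequality $\mathsf{cva}(r,s,t) \geq \mathsf{cva}(r',s',t')$ holds in both the empty and non-empty case uniformly. This completes the verification that $\mathsf{cva}$ is order-preserving, i.e., a monotone in $(\R{3}_{\rm cons}, \R{3}, \boxtimes)$.
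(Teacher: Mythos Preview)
Your proof is correct and follows essentially the same approach as the paper: both split on whether $\mathsf{cva}(r,s,t)=1$ and, in the nontrivial case, use the convex-linearity of $\boxtimes$ to push the decomposition $r=\lambda s+(1-\lambda)t$ through the combination with $u$. Your additional remark about the non-uniqueness of $\lambda$ is a bit of extra care the paper omits, but otherwise the arguments are the same.
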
			
			\begin{proof}
				Let $(r, s, t) \in \R{3}$ and $(u,u,u) \in \Rcons{3}$.
				We aim to show that 
				\begin{equation}
					\mathsf{cva}(r,s,t) \geq \mathsf{cva}(r \boxtimes u, s \boxtimes u, t \boxtimes u)
				\end{equation}
				holds for all $r,s,t,u \in R$.
				
				If $\mathsf{cva}(r,s,t) = 1$, then its value cannot increase.
				Otherwise, if $\mathsf{cva}(r,s,t) = \lambda$ is strictly less than 1, then $r = \lambda s + (1-\lambda) t$.
				By the convex-linearity of $\boxtimes$, we have
				\begin{equation}
					\bigl( \lambda s + (1-\lambda) t \bigr) \boxtimes u = \lambda s \boxtimes u + (1-\lambda) t \boxtimes u.
				\end{equation}
				Therefore, $\mathsf{cva}(r,s,t) = \mathsf{cva}(r \boxtimes u, s \boxtimes u, t \boxtimes u)$ whenever $\mathsf{cva}(r,s,t) < 1$.
				Consequently, convex alignment is a $3$-contraction.
			\end{proof}
			
			Now we can use the generalized construction of monotones from $k$-contractions (Theorem~\ref{thm:contractions}) to get monotones for $\RT{R}$ by optimizing the convex alignment in various ways.
			Let us focus on a construction of $\calmin \circ \mathcal{E}_{W_{\rm dc}}$ with $W_{\rm dc}$ that is of form as in (\ref{eq:set product}).
			Specifically, we use $W_{\rm dc} = S_1 \times S_2 \times S_3 \in \mathcal{DC}(\R{3})$, where each $S_i$ is itself a downward closed subset of $R$. 
			There are many downward closed sets one could use for each $S_i$, but here we restrict our attention to the two most obvious choices---$R_{\rm free}$ and $R$.
			Even with this restriction, one can obtain 12 constructions of the form $\calmin \circ \mathcal{E}_{W_{\rm dc}}$.
			Specifically, there are three possible choices of the projection $\Pi_i$, which then determines $\mathcal{E}$.
			For each of them we let the corresponding $S_i$ be $R$ without loss of generality, which leaves 4 choices for the other two downward closed sets.
			Out of these 12 constructions in total, eight produce a constant monotone and are therefore uninteresting.
			The other four are the following.
			\begin{enumerate}
				\item The \textbf{resource weight} (also known as the resource fraction) $M_{\rm w} \colon R \to \reals$ is defined as $\calmin \circ \mathcal{E}_{W_{\rm dc}}$ for 
					$\mathcal{E} = \Pi_1^{-1}$ and $W_{\rm dc} = R \times R \times R_{\rm free}$.
					Explicitly, its value for any resource $r \in R$ is
					\begin{equation}\label{eq:res_weight}
						\begin{split}
							M_{\rm w} (r) &\coloneqq \inf \Set*[\big]{ \mathsf{cva} (r,s,t)  \given  s \in R,\; t \in R_{\rm free} } \\
								&= \inf \Set*[\big]{ \lambda  \given  r \in \lambda R + (1-\lambda) R_{\rm free} }.
						\end{split}
					\end{equation}
					It corresponds to the smallest weight of a resource that can be used to form $r$ by convex mixture with some free resource.
				
				\item The \textbf{resource robustness} $M_{\rm rob} \colon R \to \reals$ is defined as $\calmin \circ \mathcal{E}_{W_{\rm dc}}$ for $\mathcal{E} = \Pi_3^{-1}$ and 
					$W_{\rm dc} = R_{\rm free} \times R \times R$.
					Explicitly, its value for any resource $t \in R$ is
					\begin{equation}\label{eq:res_robustness}
						\begin{split}
							M_{\rm rob} (t) &\coloneqq \inf \Set*[\big]{ \mathsf{cva} (r,s,t)  \given  r \in R_{\rm free},\; s \in R } \\
								&= \inf \Set*[\big]{ \lambda  \given  \lambda s + (1-\lambda) t \in R_{\rm free} ,\; s \in R }.
						\end{split}
					\end{equation}
					It is the smallest weight of a resource that one needs to convexly mix with $t$ in order to obtain a free resource.
					
				\item The \textbf{free robustness} $M_{\rm f.\,rob} \colon R \to \reals$ is defined as $\calmin \circ \mathcal{E}_{W_{\rm dc}}$ for $\mathcal{E} = \Pi_3^{-1}$ and 
					$W_{\rm dc} = R_{\rm free} \times R_{\rm free} \times R$.
					Explicitly, its value for any resource $t \in R$ is
					\begin{equation}
						\begin{split}
							M_{\rm f.\,rob} (t) &\coloneqq \inf \Set*[\big]{ \mathsf{cva} (r,s,t)  \given  r \in R_{\rm free},\; s \in R_{\rm free} } \\
								&= \inf \Set*[\big]{ \lambda  \given  \lambda s + (1-\lambda) t \in R_{\rm free} ,\; s \in R_{\rm free} }.
						\end{split}
					\end{equation}
					It is the smallest weight of a free resource that one needs to convexly mix with $t$ in order to obtain another free resource.
				
				\item The \textbf{resource non-convexity} $M_{\rm nc} \colon R \to \reals$ is defined as $\calmin \circ \mathcal{E}_{W_{\rm dc}}$ for $\mathcal{E} = \Pi_1^{-1}$ and 
					$W_{\rm dc} = R \times R_{\rm free} \times R_{\rm free}$.
					Explicitly, its value for any resource $r \in R$ is
					\begin{equation}
						\begin{split}
							M_{\rm nc} (r) &\coloneqq \inf \Set*[\big]{ \mathsf{cva} (r,s,t)  \given  s \in R_{\rm free},\; t \in R_{\rm free} } \\
								&= \inf \Set*[\big]{ \lambda  \given  r \in \lambda R_{\rm free} + (1-\lambda) R_{\rm free} }.
						\end{split}
					\end{equation}
					It is trivial if all the sets of free resources happen to be convex. 
					Otherwise it tells us about the ordering of resources that are within the convex hull of the free resources, but are not free themselves. 
					It quantifies the relative distance of a resource from the set of free resources in terms of its convex decompositions into free resources. 
					Its value is set to 0 if the resource in question is free itself, and 1 if it is outside of the convex hull of the free resources.
			\end{enumerate}
								
			\begin{figure}[htb!]
				\ctikzfig{weight_rob}
				\caption{A pictorial depiction of the optimal convex decompositions for each of the four monotones mentioned in this section: (a) resource weight $M_{\rm w}$, (b) resource robustness $M_{\rm rob}$, (c) free robustness $M_{\rm f.\,rob}$, and (d) resource non-convexity $M_{\rm nc}$.
				Grey disc represents the set $R$ of all resources, while the yellow ``hourglass'' witin represents the free resources among them.
				In order to illustrate each of the four optimal decompositions, we select a distinct resource (element of $X$), depicted by a green node.
				These demopositions are given by the three points along one of the line segments with an orange and purple portion.
				The value of each of the monotones for these; $M_{\rm w}(a), M_{\rm rob}(b), M_{\rm f.\,rob}(c)$, and $M_{\rm nc}(d)$; can be read off as the length of the respective orange segment divided by the total lenth of the orange and purple segments combined.}
				\label{fig:weight_rob}
			\end{figure}
			
			As a consequence of Lemma~\ref{thm:Product of DC is DC}, Theorem~\ref{thm:contractions} and Lemma~\ref{lem:Alignment}, all four functions above are monotones. 
			However, being able to prove the monotonicity of these four functions is not where the value of the generalized construction of monotones from contractions lies.
			What they provide is an understanding of the assumptions required in order for these functions to be monotones.
			Furthermore, they give us a unified picture, within which we can adjust various elements of the monotone constructions according to the question we are interested in.
			In this case, there are many more monotones one can obtain from $\mathsf{cva}$ in this way, since $R$ or $R_{\rm free}$ in the optimization can be replaced by any other downward 
				closed set.

	\subsection{General Ways of Translating Monotones Between Resource Theories}
	\label{sec:General Translating}
	
		In Section~\ref{sec:Translating Monotones from a Resource Theory to Itself} we investigated how one can translate monotones from a resource theory $\mathcal{Q}$ given by $(Q_{\rm free}, Q, 
			\boxtimes_{Q})$ to a resource theory $\mathcal{R} = \RT{R}$ when the two are in fact identical. 
		Then, in Section~\ref{sec:Monotones from Information Theory}, we looked at the choice of $\mathcal{Q}$ in the form of a resource theory of distinguishability. 
		Here, we would like explore what can be said in general.
		Can the methods introduced in Sections~\ref{sec:Translating Monotones from a Resource Theory to Itself} and~\ref{sec:Monotones from Information Theory} be extended to the case of arbitrary $\mathcal{Q}$?
		
		We consider two choices of the mediating preordered set $(\mathcal{A}, \succeq_{\mathcal{A}})$: $(\mathcal{P}(Q), \succeq_{\rm enh})$ and ${(\mathcal{P}(Q), \succeq_{\rm deg})}$.
		For any monotone $f$ on $\mathcal{Q}$, we again have corresponding root monotones $\fmax$ and $\fmin$ as introduced in \Cref{lem:monotone_extensions}.
		In order to find out which maps can be used as the mediating order-preserving map ${\ordres \to (\mathcal{P}(Q), \succeq_{\rm enh})}$, we can use the following sufficient conditions.
		\begin{lemma}[mediating maps for $\succeq_{\rm enh}$]
			\label{thm:Sufficient condition for order-preserving maps 2}
			Let $\RT{R}$ and $(Q_{\rm free}, Q, \boxtimes_{Q})$ be resource theories and let $F \colon R \to \mathcal{P}(Q)$ be a function 
				with an extension $F \colon \mathcal{P}(R) \to \mathcal{P}(Q)$ obtained from the original $F$ by requiring that it commutes with unions.\footnotemark{}
			\footnotetext{The extension $F \colon \mathcal{P}(R) \to \mathcal{P}(Q)$ maps a set $S$ to the union of images of elements of $S$ under $F \colon R \to \mathcal{P}(Q)$.
			It is the unique extension of $F \colon R \to \mathcal{P}(Q)$ to a suplattice homomorphism $\mathcal{P}(R) \to \mathcal{P}(Q)$.}%
			If for all $r \in R$ we have
			\begin{equation}
				\label{eq:down commute}
				F(R_{\rm free} \boxtimes r) \subseteq Q_{\rm free} \boxtimes_{Q} F(r),
			\end{equation}
			i.e., if $F \bigl( \down (r) \bigr) \subseteq \down \bigl( F(r) \bigr)$ holds, then $F \colon \ordres \to (\mathcal{P}(Q), \succeq_{\rm enh})$ is order-preserving.
		\end{lemma}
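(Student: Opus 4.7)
The plan is to translate everything to the downward-closure characterization of $\succeq_{\rm enh}$ provided by Lemma~\ref{lem:enh_down}, apply the hypothesis, and read the result back. The key observation is that once we pass from the $\succeq_{\rm enh}$ picture to the $\down$ picture, the hypothesis $F(\down(r)) \subseteq \down(F(r))$ in $\mathcal{Q}$ is exactly what bridges the two sides.

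Concretely, I would start by picking $r, s \in R$ with $r \succeq s$. By the definition of the resource ordering in~\eqref{eq:Order of resources}, this means $\{s\} \subseteq R_{\rm free} \boxtimes r$. Since the extension of $F$ to $\mathcal{P}(R) \to \mathcal{P}(Q)$ is defined by commuting with unions, it is automatically monotone with respect to subset inclusion, so applying $F$ preserves this containment:
\begin{equation}
    F(s) \subseteq F(R_{\rm free} \boxtimes r).
\end{equation}
The hypothesis~\eqref{eq:down commute} then gives
\begin{equation}
    F(R_{\rm free} \boxtimes r) \subseteq Q_{\rm free} \boxtimes_Q F(r) = \downwrt{Q}\bigl( F(r) \bigr),
\end{equation}
so chaining the two inclusions yields $F(s) \subseteq \downwrt{Q}(F(r))$.

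It remains to recognize this as the statement $F(r) \succeq_{\rm enh} F(s)$. By Lemma~\ref{lem:enh_down} applied in the resource theory $\mathcal{Q}$, the relation $F(r) \succeq_{\rm enh} F(s)$ is equivalent to $\downwrt{Q}(F(r)) \supseteq \downwrt{Q}(F(s))$. But $F(s) \subseteq \downwrt{Q}(F(r))$ implies $\downwrt{Q}(F(s)) \subseteq \downwrt{Q}(\downwrt{Q}(F(r))) = \downwrt{Q}(F(r))$, since $\downwrt{Q}$ is idempotent as a closure operator. Hence $F(r) \succeq_{\rm enh} F(s)$, establishing that $F \colon \ordres \to (\mathcal{P}(Q), \succeq_{\rm enh})$ is order-preserving.

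There is no real obstacle here; the whole proof is a short diagram-chase that hinges on the already-proved equivalence between the enhancement preorder and the containment of downward closures (Lemma~\ref{lem:enh_down}). The only mild care needed is to remember that the two occurrences of $\down$ live in different resource theories (one for $\mathcal{R}$, one for $\mathcal{Q}$), and that the hypothesis is precisely the intertwining condition $F \circ \downwrt{R} \subseteq \downwrt{Q} \circ F$ on singletons, which, because $F$ commutes with unions, automatically extends to all of $\mathcal{P}(R)$.
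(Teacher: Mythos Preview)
Your proof is correct and follows essentially the same route as the paper's. The only cosmetic difference is that the paper invokes the equivalence $\succeq = \succeq_{\rm enh}$ on $\mathcal{P}(Q)$ (item (iii) of Lemma~\ref{lem:enh_down}) directly after obtaining $F(s) \subseteq Q_{\rm free} \boxtimes_Q F(r)$, whereas you pass through the $\down$-characterization (item (ii)) and spell out the idempotence step; the chain of inclusions is identical.
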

		\begin{proof}
			We need to show that for $F$ as above, the implication $r \succeq s \implies F(r) \succeq_{\rm enh} F(s)$ holds for any $r,s \in R$.
			This fact can be broken down as follows:
			\begin{equation}
				\begin{split}
					r \succeq s &\iff s \in R_{\rm free} \boxtimes r \\
						&\implies F(s) \subseteq F(R_{\rm free} \boxtimes r) \\
						&\implies F(s) \subseteq Q_{\rm free} \boxtimes_{Q} F(r) \\
						&\iff F(r) \succeq F(s),
				\end{split}
			\end{equation}
			where the second implication follows from property (\ref{eq:down commute}).
			The statement of the lemma then follows by recognizing that $\succeq$ and $\succeq_{\rm enh}$ are identical as preorders on $\mathcal{P}(Q)$.
		\end{proof}
		\begin{lemma'}{thm:Sufficient condition for order-preserving maps 2}[\hypertarget{thm:Sufficient condition for order-preserving maps 2 v2}{mediating maps for $\succeq_{\rm enh}$}]
			Alternatively, if we have
			\begin{subequations}
				\label{eq:Sufficient condition for order-preserving maps 2 v2 (1)}
				\begin{align}
					\label{eq:star-morphism 0}
					F (r \boxtimes s) &= F(r) \boxtimes_{Q} F(s)  \quad \forall r, s \in R, \text{ and } \\
					\label{eq:Free-preserving 0}
					F(R_{\rm free}) &\subseteq Q_{\rm free} \boxtimes_{Q} F(0),
				\end{align}
			\end{subequations}
			then $F \colon \ordres \to (\mathcal{P}(Q), \succeq_{\rm enh})$ is order-preserving.
			Finally, if instead $F$ satisfies
			\begin{subequations}
				\label{eq:Sufficient condition for order-preserving maps 2 v2 (2)}
				\begin{align}
					\label{eq:star-morphism 1}
					F (r \boxtimes s) &\subseteq F(r) \boxtimes_{Q} F(s)  \quad \forall r, s \in R, \text{ and } \\
					\label{eq:Free-preserving 1}
					F(R_{\rm free}) &\subseteq Q_{\rm free},
				\end{align}
			\end{subequations}
			then $F \colon \ordres \to (\mathcal{P}(Q), \succeq_{\rm enh})$ is order-preserving.
		\end{lemma'}
		\begin{proof}
			Conditions~(\ref{eq:Sufficient condition for order-preserving maps 2 v2 (1)}) imply condition~(\ref{eq:down commute}) via
			\begin{equation}
				F(R_{\rm free} \boxtimes r) = F(R_{\rm free}) \boxtimes_{Q} F(r) 
					\subseteq Q_{\rm free} \boxtimes_{Q} F(0) \boxtimes_{Q} F(r) 
					= Q_{\rm free} \boxtimes_{Q} F(0 \boxtimes r) 
					= Q_{\rm free} \boxtimes_{Q} F(r),
			\end{equation}
			so that the first part of Lemma~\hyperlink{thm:Sufficient condition for order-preserving maps 2 v2}{56'} follows from Lemma~\ref{thm:Sufficient condition for order-preserving maps 2}.
			
			Conditions~(\ref{eq:Sufficient condition for order-preserving maps 2 v2 (2)}) imply condition~(\ref{eq:down commute}) via
			\begin{equation}
				F(R_{\rm free} \boxtimes r) \subseteq F(R_{\rm free}) \boxtimes_{Q} F(r) \subseteq Q_{\rm free} \boxtimes_{Q} F(r),
			\end{equation}
			so that the second part of Lemma~\hyperlink{thm:Sufficient condition for order-preserving maps 2 v2}{56'} also follows from Lemma~\ref{thm:Sufficient condition for order-preserving maps 2}.
		\end{proof}
		\begin{example'}{ex:augmentation}[adding a catalyst is order-preserving]
			The augmentation map $\mathsf{Aug}_{C} \colon R \to \mathcal{P}(R)$ from Example~\ref{ex:augmentation}, defined for any $C \subseteq R$ by
			\begin{equation}
				\mathsf{Aug}_{C} (r) \coloneqq C \boxtimes r,
			\end{equation}
			satisfies condition~(\ref{eq:down commute}), since we have
			\begin{equation}
				\mathsf{Aug}_{C} (R_{\rm free} \boxtimes r) = C \boxtimes R_{\rm free} \boxtimes r = R_{\rm free} \boxtimes \mathsf{Aug}_{C}(r).
			\end{equation}
			Lemma~\ref{thm:Sufficient condition for order-preserving maps 2} thus provides a way to prove that the function $\mathsf{Aug}_{C}$ is order-preserving as a map of type ${\ordres \to (\mathcal{P}(R),\succeq_{\rm enh})}$.
			However, it satisfies neither condition~(\ref{eq:star-morphism 0}) nor condition~(\ref{eq:Free-preserving 1}) in general.
		\end{example'}
		\begin{example'}{ex:copy}[copying is order-preserving]
			The copy map, $\mathsf{Copy}_{n} \colon R \to \mathcal{P}(R)$ was defined in Example~\ref{ex:copy} as the combination of $n$ copies of a resource,
			\begin{equation}
				\mathsf{Copy}_{n} (r) \coloneqq r \boxtimes r \boxtimes \ldots \boxtimes r \equiv r^{\boxtimes n}.
			\end{equation}
			The image of a set of resources $S$ by $\mathsf{Copy}_{n}$ cannot in general be expressed as $S^{\boxtimes n}$.
			Nevertheless, one can show that $\mathsf{Copy}_{n} (r \boxtimes s) \subseteq \mathsf{Copy}_{n}(r) \boxtimes \mathsf{Copy}_{n}(s)$ and $\mathsf{Copy}_{n}(R_{\rm free}) \subseteq R_{\rm free}$, 
				which corresponds to conditions~(\ref{eq:Sufficient condition for order-preserving maps 2 v2 (2)}).
		\end{example'}
		However, the map $\mathcal{E}$ defined in Section~\ref{sec:Monotones From k-Contractions in General} doesn't satisfy these conditions. 
		In general, only conditions~(\ref{eq:star-morphism 0}) and~(\ref{eq:star-morphism 1}) hold for $\mathcal{E}$.
		In particular, $\mathcal{E}$ is \emph{not} order-preserving as a function $\ordres \to (\mathcal{P}(Q), \succeq_{\rm enh})$.
		The function $\mathcal{E}$ is, nonetheless, an example of an order-preserving map $\ordres \to (\mathcal{P}(Q), \succeq_{\rm deg})$.
		How could we generalize this fact?
		Recall that for $k=2$, $\mathcal{E}$ maps $r$ to its preimage under the projection $\Pi_1 \colon (r,s) \mapsto r$.
		The following lemma provides sufficient conditions for such functions to be order-preserving in general.
		\begin{lemma}[mediating maps for $\succeq_{\rm deg}$]
			\label{thm:Sufficient condition for order-preserving maps 3}
			Let $\RT{R}$ and $(Q_{\rm free}, Q, \boxtimes_{Q})$ be resource theories and let $F \colon R \to \mathcal{P}(Q)$ be a function. 
			If there exists a map $G \colon Q \to R$ satisfying
			\begin{subequations}
				\label{eq:Sufficient condition for order-preserving maps 3}
				\begin{align}
					\label{eq:pre image map}
					F(r) &= G^{-1}(r) , \\
					\label{eq:star-morphism 2}
					G (p \boxtimes_{Q} q) &\supseteq G(p) \boxtimes G(q)  \quad \forall p, q \in Q, \text{ and } \\
					\label{eq:free surjective}
					G(Q_{\rm free}) &\supseteq R_{\rm free},
				\end{align}
			\end{subequations}
			then $F \colon \ordres \to (\mathcal{P}(Q), \succeq_{\rm deg})$ is order-preserving.
		\end{lemma}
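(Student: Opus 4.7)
The plan is to unpack the definition of $\succeq_{\rm deg}$ and build a degradation $F(r) \to F(s)$ directly from the data provided. By \cref{def:deg_ord}, we need to show that whenever $r \succeq s$ in $(R, \freeconv)$, there exists a function $\mathsf{deg} \colon F(r) \to F(s)$ such that $p \succeq_Q \mathsf{deg}(p)$ for every $p \in F(r)$.

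First I would fix $r,s \in R$ with $r \succeq s$. By \eqref{eq:Order of resources}, this gives some $t \in R_{\rm free}$ with $s \in t \boxtimes r$. Using assumption \eqref{eq:free surjective}, pick $u \in Q_{\rm free}$ with $G(u) = t$. The key step is then a pointwise construction: for each $p \in F(r) = G^{-1}(r)$, consider the set $u \boxtimes_Q p \subseteq Q$ and apply $G$ to it. By assumption \eqref{eq:star-morphism 2} (interpreted as an inclusion of images of sets under $G$),
\begin{equation}
	G(u \boxtimes_Q p) \supseteq G(u) \boxtimes G(p) = t \boxtimes r \ni s.
\end{equation}
Hence there exists at least one element $q \in u \boxtimes_Q p$ with $G(q) = s$, and we define $\mathsf{deg}(p) \coloneqq q$ (making an arbitrary choice when several $q$ exist). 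Then $q \in G^{-1}(s) = F(s)$ by \eqref{eq:pre image map}, and from $q \in u \boxtimes_Q p \subseteq Q_{\rm free} \boxtimes_Q p$ we get $p \succeq_Q q$ directly from the definition of the resource ordering on $Q$. This shows $F(r) \degconv F(s)$ whenever $F(r)$ is nonempty, and the case $F(r) = \emptyset$ is trivial since $\emptyset \degconv F(s)$ by \cref{def:deg_ord}.

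I do not foresee a genuine obstacle here; the argument is a direct diagram chase once the right intermediate set $u \boxtimes_Q p$ is written down. The only subtle point is making sure the reader understands how $G$ is being applied to subsets of $Q$ in \eqref{eq:star-morphism 2} (as the image of a set of elements), and that the nonuniqueness of universal combination is compatible with picking a witness $q$; both will be clear from the surrounding conventions already established in the paper. After spelling out the construction and verifying the two defining properties of a degradation, the lemma follows.
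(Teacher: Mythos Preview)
Your proof is correct and follows essentially the same approach as the paper's: pick a free lift $u \in Q_{\rm free}$ of the free resource witnessing $r \succeq s$, and for each $p \in G^{-1}(r)$ use condition~\eqref{eq:star-morphism 2} to find a preimage of $s$ inside $u \boxtimes_Q p$. Your variable naming is in fact cleaner than the paper's (which overloads the symbol $s$), and you handle the empty case explicitly, which the paper omits.
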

		\begin{proof}
			We want to show that for any $r,s \in R$ such that $r \succeq s$, there exists a degradation $D \colon F(r) \to F(s)$.
			Firstly, note that the fact that the image of $Q_{\rm free}$ under $G$ contains $R_{\rm free}$ (property (\ref{eq:free surjective})) means that there is function 
				$G^{\dagger} \colon R_{\rm free} \to Q_{\rm free}$ such that 
			\begin{equation}
				G \circ G^{\dagger} = \Id_{R_{\rm free}},
			\end{equation}
			where $\Id_{R_{\rm free}}$ is the canonical embedding of $R_{\rm free}$ in $R$.
			That is, $G^{\dagger}$ is a partial right inverse of $G$.
			
			If $r \succeq s$ holds, then there is an $x \in R_{\rm free}$ such that $s \in r \boxtimes x$.
			For all $s \in F(r) = G^{-1}(r)$, we then have 
			\begin{equation}
				s \in r \boxtimes x = G(s) \boxtimes G(G^{\dagger}(x)) = G( s \boxtimes_{Q} G^{\dagger}(x)),
			\end{equation}
			so that there exists a resource $t$ in the set $s \boxtimes_{Q} G^{\dagger}(x)$ such that $t$ is also in $G^{-1}(s) = F(s)$.
			If we let $D \colon F(r) \to F(s)$ be defined by $D(s) \coloneqq t$, then $D$ is clearly a degradation since $G^{\dagger}(x)$ is an element of $Q_{\rm free}$.
		\end{proof}
		An alternative way to prove Lemma~\ref{thm:Sufficient condition for order-preserving maps 3} would be to show that conditions~(\ref{eq:Sufficient condition for order-preserving maps 3}) imply 
		\begin{equation}
			F \bigl( \up (q) \bigr) \subseteq \up \bigl( F(q) \bigr) \quad \forall\; r \in Q,
		\end{equation}
		which, by an argument analogous to the proof of Lemma~\ref{thm:Sufficient condition for order-preserving maps 2}, is a sufficient condition for the function
			$F \colon \ordres \to (\mathcal{P}(Q), \succeq_{\rm deg})$ to be order-preserving.
		One can check that $\mathcal{E}$ indeed satisfies conditions~(\ref{eq:Sufficient condition for order-preserving maps 3}) if $G$ is chosen to be the projection $\Pi_i \colon \R{k} \to R$.
		In fact,~(\ref{eq:star-morphism 2}) becomes an equality in this case.
			
		Lemma~\ref{thm:Sufficient condition for order-preserving maps 3} cannot be used, however, to show that the map $\mathcal{E}_{R_{\rm dc}}$ introduced in 
			Section~\ref{sec:Monotones From Contractions in General} is order-preserving.
		As we have seen explicitly in the proof of \Cref{thm:contractions}, proving this fact is not necessary when all we care about is the resulting monotone obtained from the root monotone $\fmin$ by the \ref{broad scheme}.
		This is because we can incorporate the restriction of the image of $\mathcal{E}$ to a downward closed set such as $W_{\rm dc} = R \times R_{\rm free}$ into the root monotone by restricting the domain of the contraction $f$.
		\Cref{lem:monotone_extensions} can then still be used to show that the function $f_{W_{\rm dc}}\textup{-}\mathsf{min}$ is a monotone.
		
		For completenes, we nevertheless include the following lemma which, when combined with the other results in this section, shows that the map $\mathcal{E}_{R_{\rm dc}}$ is itself order-preserving.
		\begin{lemma}[mediating maps with intersections]
			\label{thm:Generating more order-preserving maps}
			Let $\RT{R}$ and $(Q_{\rm free}, Q, \boxtimes_{Q})$ be resource theories and let $W_{\rm uc} \in \mathcal{UC}(Q)$ and 
				 $W_{\rm dc} \in \mathcal{DC}(Q)$ be upward and downward closed subsets of $Q$, respectively.
			\begin{enumerate}
				\item If $F \colon \ordres \to (\mathcal{P}(Q), \succeq_{\rm enh})$ is an order-preserving map, then the map $F_{W_{\rm uc}}$ defined by
					\begin{equation}
						F_{W_{\rm uc}} (r) \coloneqq F(r) \cap W_{\rm uc}
					\end{equation}
					is also order-preserving as a map $\ordres \to (\mathcal{P}(Q), \succeq_{\rm enh})$.
					 
				\item If $F \colon \ordres \to (\mathcal{P}(Q), \succeq_{\rm deg})$ is an order-preserving map, then the map $F_{W_{\rm dc}}$ defined by
					\begin{equation}
						F_{W_{\rm dc}} (r) \coloneqq F(r) \cap W_{\rm dc}
					\end{equation}
					is also order-preserving as a map $\ordres \to (\mathcal{P}(Q), \succeq_{\rm deg})$.
			\end{enumerate}
		\end{lemma}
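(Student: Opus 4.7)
The plan is to unpack the definitions of enhancement and degradation orderings (\cref{def:enh_ord} and \cref{def:deg_ord}) and show that the restricting operation $(\cdot) \cap W$ sends enhancements to enhancements (resp.\ degradations to degradations) whenever the set $W$ being intersected with has the appropriate closure property. The two claims are essentially dual, and each reduces to a one-line observation once the right witness is identified.

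For part (i), I would start from $r \succeq s$ in $\ordres$, invoke order-preservation of $F$ to obtain an enhancement $\mathsf{enh} \colon F(s) \to F(r)$, and then argue that its restriction to $F(s) \cap W_{\rm uc}$ actually lands in $F(r) \cap W_{\rm uc}$. The key point is that for any $q \in F(s) \cap W_{\rm uc}$, the image $\mathsf{enh}(q) \in F(r)$ satisfies $\mathsf{enh}(q) \succeq q$ by \cref{def:enh}, and since $q \in W_{\rm uc}$ and $W_{\rm uc}$ is upward closed, $\mathsf{enh}(q) \in W_{\rm uc}$ as well. The resulting restricted map is, by construction, still an enhancement, so $F_{W_{\rm uc}}(r) \succeq_{\rm enh} F_{W_{\rm uc}}(s)$. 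The edge case where $F(s) \cap W_{\rm uc} = \emptyset$ is handled by \cref{def:enh_ord}, which declares $S \enhconv \emptyset$ for every $S$.

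For part (ii), the argument is exactly dual. Given $r \succeq s$, order-preservation of $F$ with respect to $\succeq_{\rm deg}$ yields a degradation $\mathsf{deg} \colon F(r) \to F(s)$. For any $q \in F(r) \cap W_{\rm dc}$, one has $q \succeq \mathsf{deg}(q)$ by \cref{def:deg}, and downward closure of $W_{\rm dc}$ forces $\mathsf{deg}(q) \in W_{\rm dc}$. Hence restricting $\mathsf{deg}$ to $F(r) \cap W_{\rm dc}$ gives a degradation into $F(s) \cap W_{\rm dc}$, establishing $F_{W_{\rm dc}}(r) \succeq_{\rm deg} F_{W_{\rm dc}}(s)$. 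The case $F_{W_{\rm dc}}(r) = \emptyset$ is again handled by the convention $\emptyset \degconv T$ in \cref{def:deg_ord}.

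There is no real obstacle here; the lemma is effectively a bookkeeping statement about how enhancements and degradations interact with up- and down-closure. The only subtlety worth flagging explicitly in the write-up is verifying that the respective closure assumption on $W$ is exactly what one needs to keep $\mathsf{enh}(q)$ (respectively $\mathsf{deg}(q)$) inside the restricted codomain, and that the empty-set conventions built into \cref{def:enh_ord,def:deg_ord} cover the degenerate cases without additional work.
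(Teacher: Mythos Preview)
Your proof is correct and takes a genuinely different, more elementary route than the paper's. You work directly with the witnessing maps from \cref{def:enh,def:deg}: given an enhancement $\mathsf{enh}\colon F(s)\to F(r)$, you simply restrict it to $F(s)\cap W_{\rm uc}$ and use upward closure of $W_{\rm uc}$ to see that the image lands in $F(r)\cap W_{\rm uc}$; the degradation case is dual. The paper instead passes through the characterizations of $\enhconv$ and $\degconv$ via downward and upward closures (\cref{lem:enh_down,lem:deg_up}), reducing the statement to a chain of set-inclusion implications that ultimately relies on an auxiliary order-theoretic identity (\cref{thm:removing arrows}): $\down\bigl(\up(S)\cap\down(T)\bigr)=\down\bigl(\up(S)\cap T\bigr)$ and its dual. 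Your argument avoids this detour entirely and is self-contained once the definitions are in hand; the paper's approach, while less direct, has the virtue of exhibiting the result as a formal consequence of the $\down$/$\up$ calculus developed elsewhere, and the auxiliary \cref{thm:removing arrows} may be of independent use.
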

		\begin{proof}
			This lemma is basically saying that the maps
			\begin{align*}
				(\mathcal{P}(Q), \succeq_{\rm enh}) &\to (\mathcal{P}(Q), \succeq_{\rm enh})  &  (\mathcal{P}(Q), \succeq_{\rm deg}) &\to (\mathcal{P}(Q), \succeq_{\rm deg}) \\
				S &\mapsto S \cap W_{\rm uc}  &   S &\mapsto S \cap W_{\rm dc} 
			\end{align*}
			are both order-preserving.
			The first one is order-preserving, because for any $S,T \in \mathcal{P}(R)$, we have
			\begin{equation}
				\begin{split}
					S \succeq_{\rm enh} T  &\iff  \down (S) \supseteq \down (T) \\
						&\,\implies  W_{\rm uc} \cap \down (S) \supseteq W_{\rm uc} \cap \down (T) \\
						&\,\implies  \down \bigl( W_{\rm uc} \cap \down (S) \bigr) \supseteq \down \bigl( W_{\rm uc} \cap \down (T) \bigr) \\
						&\iff \down (W_{\rm uc} \cap S) \supseteq \down (W_{\rm uc} \cap T) \\
						&\iff  W_{\rm uc} \cap S \succeq_{\rm enh} W_{\rm uc} \cap T.
				\end{split}
			\end{equation}
			The first equivalence follows from Lemma~\ref{lem:enh_down} and the penultimate one is a consequence of Lemma~\ref{thm:removing arrows} presented below.
			All in all, this concludes the proof of the first part of Lemma~\ref{thm:Generating more order-preserving maps}.
			
			The second part can be shown in an analogous way.
			In particular, the fact that the map $S \mapsto S \cap W_{\rm dc}$ is order-preserving follows by
			\begin{equation}
				\begin{split}
					S \succeq_{\rm deg} T  &\iff  \up (S) \subseteq \up (T) \\
						&\,\implies  W_{\rm dc} \cap \up (S) \subseteq W_{\rm dc} \cap \up (T) \\
						&\,\implies  \up \bigl( W_{\rm dc} \cap \up (S) \bigr) \subseteq \up \bigl( W_{\rm dc} \cap \up (T) \bigr) \\
						&\iff \up (W_{\rm dc} \cap S) \subseteq \up (W_{\rm dc} \cap T) \\
						&\iff  W_{\rm dc} \cap S \succeq_{\rm deg} W_{\rm dc} \cap T,
				\end{split}
			\end{equation}
			The first equivalence follows from Lemma~\ref{lem:deg_up} and the penultimate one is again a consequence of Lemma~\ref{thm:removing arrows}.
		\end{proof}
		
		\begin{lemma}
			\label{thm:removing arrows}
			Let $(\mathcal{A}, \succeq)$ be a preordered set and let $S$ and $T$ be two subsets of $\mathcal{A}$.
			Then we have
			\begin{subequations}
				\label{eq:removing arrows}
				\begin{align}
					\label{eq:down_up_down}
					\down \bigl( \up (S) \cap \down (T) \bigr) &= \down \bigl( \up (S) \cap T \bigr),\\
					\label{eq:up_down_up}
					\up \bigl( \down (S) \cap \up (T) \bigr) &= \up \bigl( \down (S) \cap T \bigr).
				\end{align}
			\end{subequations}
		\end{lemma}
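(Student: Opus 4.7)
The plan is to prove each of the two equalities by establishing the two inclusions separately. For both (\ref{eq:down_up_down}) and (\ref{eq:up_down_up}), one inclusion is immediate: since $T \subseteq \down(T)$ and $T \subseteq \up(T)$ (trivially from reflexivity of $\succeq$), intersecting with $\up(S)$ or $\down(S)$ respectively and then applying the order-preserving operators $\down$ or $\up$ preserves the inclusions. This gives $\down\bigl(\up(S) \cap T\bigr) \subseteq \down\bigl(\up(S) \cap \down(T)\bigr)$ and similarly for (\ref{eq:up_down_up}).

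The nontrivial direction in (\ref{eq:down_up_down}) is to show $\down\bigl(\up(S) \cap \down(T)\bigr) \subseteq \down\bigl(\up(S) \cap T\bigr)$. First I would take an arbitrary element $a$ of the left-hand side, so that there exists $b \in \up(S) \cap \down(T)$ with $b \succeq a$. The fact that $b \in \down(T)$ means there is some $t \in T$ with $t \succeq b$, and the fact that $b \in \up(S)$ means there is some $s \in S$ with $b \succeq s$. The key observation is then that by transitivity, $t \succeq s$, so that $t \in \up(S)$; combined with $t \in T$, this places $t$ in $\up(S) \cap T$. Finally $t \succeq b \succeq a$ shows $a \in \down\bigl(\up(S) \cap T\bigr)$, as required.

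The proof of (\ref{eq:up_down_up}) proceeds by an exactly dual argument. Taking $a \in \up\bigl(\down(S) \cap \up(T)\bigr)$, there is $b \in \down(S) \cap \up(T)$ with $a \succeq b$, and then $t \in T$ with $b \succeq t$ and $s \in S$ with $s \succeq b$. Transitivity yields $s \succeq t$, placing $t$ in $\down(S) \cap T$, and $a \succeq b \succeq t$ shows $a \in \up\bigl(\down(S) \cap T\bigr)$.

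The main obstacle --- to the extent there is one --- is purely bookkeeping: one must keep track of which side of each membership is witnessed by which element (the $s \in S$ is used only to certify $t \in \up(S)$; it does not itself appear in the final witness), and be careful about the direction of $\succeq$ in order to apply transitivity correctly. No use is made of the resource-theoretic structure of $(\mathcal{A}, \succeq)$ beyond reflexivity and transitivity of the preorder, so the lemma is purely order-theoretic in character.
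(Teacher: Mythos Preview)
Your proof is correct and follows essentially the same approach as the paper's: one inclusion is immediate from $T \subseteq \down(T)$, and for the nontrivial inclusion you chase an element $a$ back to a witness $t \in T$ that inherits membership in $\up(S)$ by transitivity. The paper's version is marginally terser (it uses upward-closedness of $\up(S)$ directly rather than naming the element $s$, and dispatches (\ref{eq:up_down_up}) by duality rather than writing it out), but the argument is the same.
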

		
		\begin{proof}
			First, let us prove equation~(\ref{eq:down_up_down}).
			The set on the right hand side is clearly a subset of the one on the left, so let's argue why also $\down \bigl( \up (S) \cap \down (T) \bigr) \subseteq \down \bigl( \up (S) \cap T \bigr)$ holds.
			For any $x \in \down \bigl( \up (S) \cap \down (T) \bigr)$, there exists a $y \in \up (S) \cap \down (T)$ such that $y \succeq x$.
			Therefore, there is a $t \in T$, such that $t \succeq y$ and $t \in \up (S)$. 
			Since $t \succeq x$, this means that $x \in  \down \bigl( \up (S) \cap T \bigr)$, thus proving equation~(\ref{eq:down_up_down}).
			
			Equation~(\ref{eq:up_down_up}) is the dual statement to~(\ref{eq:down_up_down}) and therefore it follows by reversing $\succeq$.
		\end{proof}
		
		Given the choice of $(Q_{\rm free}, Q, \boxtimes_{Q}) \coloneqq (\R{k}_{\rm cons}, \R{k}, \boxtimes)$ and $F \coloneqq \mathcal{E}$ in Lemma~\ref{thm:Generating more order-preserving maps}, we thus recover 
			the fact that $\mathcal{E}_{W_{\rm dc}} \colon \ordres \to (\mathcal{P}(\R{k}), \succeq_{\rm deg})$ is order-preserving. 
			
		\begin{corollary}[translating monotones between resource theories]
			\label{cor:translating monotones}
			Consider two resource theories $(Q_{\rm free}, Q, \boxtimes_{Q})$ and $\RT{R}$
			Given 
			\begin{itemize}
				\item a monotone $f \colon Q \to \reals$,
				\item a function $F \colon R \to \mathcal{P}(Q)$ that satisfies condition~(\ref{eq:down commute}), and 
				\item an upward closed set $W_{\rm uc} \in \mathcal{UC}(Q)$,
			\end{itemize}
			we get a monotone $\fmax \circ F_{W_{\rm uc}} \colon R \to \reals$ given by
			\begin{equation}
				\fmax \circ F_{W_{\rm uc}} (r) = \sup \Set*[\big]{ f(q)  \given  q \in F(r) \cap W_{\rm uc} }.
			\end{equation}
			
			Similarly, given 
			\begin{itemize}
				\item a monotone $f \colon Q \to \reals$,
				\item a function $F \colon R \to \mathcal{P}(Q)$ that satisfies conditions~(\ref{eq:Sufficient condition for order-preserving maps 3}), and 
				\item a downward closed set $W_{\rm dc} \in \mathcal{DC}(Q)$,
			\end{itemize}
			we get a monotone $\fmin \circ F_{W_{\rm dc}} \colon R \to \reals$ given by
			\begin{equation}
				\fmin \circ F_{W_{\rm dc}} (r) = \inf \Set*[\big]{ f(q)  \given  q \in F(r) \cap W_{\rm dc} }.
			\end{equation}
		\end{corollary}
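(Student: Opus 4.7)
The plan is to recognize this corollary as a direct application of the \ref{broad scheme}, pieced together from the lemmas that have already been established in section~\ref{sec:General Translating}. The two parts are entirely parallel, so I would prove them by the same template with $(\succeq_{\rm enh}, \fmax, W_{\rm uc})$ swapped for $(\succeq_{\rm deg}, \fmin, W_{\rm dc})$ respectively.

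First, for part (i), I would argue that under hypothesis~\eqref{eq:down commute}, lemma~\ref{thm:Sufficient condition for order-preserving maps 2} immediately yields that $F \colon \ordres \to (\mathcal{P}(Q), \succeq_{\rm enh})$ is order-preserving. Then, since $W_{\rm uc}$ is upward closed, part~(i) of lemma~\ref{thm:Generating more order-preserving maps} upgrades this to $F_{W_{\rm uc}} \colon \ordres \to (\mathcal{P}(Q), \succeq_{\rm enh})$ being order-preserving. Finally, invoking \cref{lem:monotone_extensions} with the trivial preorder on $Q$ (so that every subset is both upward and downward closed, and $f$ is vacuously a partial monotone into $\ordreals$), $\fmax \colon (\mathcal{P}(Q), \succeq_{\rm enh}) \to \ordreals$ is a monotone. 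Composing the two yields $\fmax \circ F_{W_{\rm uc}} \colon \ordres \to \ordreals$ as an instance of the \ref{broad scheme}, hence a resource monotone. Unpacking the definitions gives the displayed formula.

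Part (ii) proceeds identically: under hypotheses~\eqref{eq:Sufficient condition for order-preserving maps 3}, lemma~\ref{thm:Sufficient condition for order-preserving maps 3} makes $F \colon \ordres \to (\mathcal{P}(Q), \succeq_{\rm deg})$ order-preserving; part~(ii) of lemma~\ref{thm:Generating more order-preserving maps}, applied to $W_{\rm dc} \in \mathcal{DC}(Q)$, gives the same property for $F_{W_{\rm dc}}$; and \cref{lem:monotone_extensions} supplies $\fmin$ as a monotone on $(\mathcal{P}(Q), \succeq_{\rm deg})$. Composing again produces the claimed target monotone.

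There is essentially no technical obstacle here, since each of the three ingredients has already been established; the only mild subtlety I would want to state clearly is that $f$, although defined on all of $Q$, gets used as the partial function whose domain is either $W_{\rm uc}$ or $W_{\rm dc}$ when applying \cref{lem:monotone_extensions}, so that the supremum (resp. infimum) in the resulting formula is taken over $F(r) \cap W_{\rm uc}$ (resp.\ $F(r) \cap W_{\rm dc}$) rather than over all of $F(r)$. Equivalently, and perhaps more cleanly, one can absorb the intersection into the mediating map via lemma~\ref{thm:Generating more order-preserving maps} and then apply $\fmax$ (resp.\ $\fmin$) on $\mathcal{P}(Q)$ with its full domain; both viewpoints give the same final expression.
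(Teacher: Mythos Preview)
Your overall architecture is correct and coincides with the paper's intended argument: the corollary is nothing more than the composition of lemma~\ref{thm:Sufficient condition for order-preserving maps 2} (resp.\ lemma~\ref{thm:Sufficient condition for order-preserving maps 3}), lemma~\ref{thm:Generating more order-preserving maps}, and \cref{lem:monotone_extensions}, assembled via the \ref{broad scheme}. The paper does not spell out a separate proof precisely because it is this composite.

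There is one genuine slip in your justification of the root monotone. You invoke \cref{lem:monotone_extensions} ``with the trivial preorder on $Q$'' in order to conclude that $\fmax$ is order-preserving on $(\mathcal{P}(Q), \succeq_{\rm enh})$. That does not work: if you equip $Q$ with the trivial preorder, the induced enhancement preorder on $\mathcal{P}(Q)$ is $\supseteq$ (this is exactly the content of \cref{lem:function_extensions}), not the enhancement preorder $\succeq_{\rm enh}$ coming from the resource ordering of $\mathcal{Q}$. The correct invocation is simpler and does not need any trivial-preorder trick: by hypothesis $f$ is already a monotone on $(Q,\succeq)$, and its domain $Q$ is trivially upward closed (and downward closed), so \cref{lem:monotone_extensions} applies directly to give that $\fmax$ is order-preserving on $(\mathcal{P}(Q),\succeq_{\rm enh})$ and $\fmin$ on $(\mathcal{P}(Q),\succeq_{\rm deg})$. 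Your alternative path---restricting $f$ to $W_{\rm uc}$ (resp.\ $W_{\rm dc}$) and composing $f_{W_{\rm uc}}\textup{-}\mathsf{max}$ with the unmodified $F$---is also valid and, as you note, produces the same formula; this is precisely the manoeuvre the paper uses in the proof of theorem~\ref{thm:contractions}.
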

		
		\begin{remark}
			Note that the generalized yield and cost constructions (Theorem~\ref{thm:yield and cost}) can be applied in succession with those from Corollary~\ref{cor:translating monotones}.
			However, neither of these commute in general.
			For example, for generic $f$ and $F$, composing the constructions that use infima give
			\begin{equation}
				\begin{split}
					\cost{(\fmin \circ F_{W_{\rm uc}})}{}(q) &= \inf f\bigl( W_{\rm uc} \cap F (\up (q))\bigr) \\
						&\neq  \inf f \bigl( \up (W_{\rm uc} \cap F (q))\bigr) \\
						&= \xmin{(\cost{f}{})} \circ F_{W_{\rm uc}} (q).
				\end{split}
			\end{equation}
		\end{remark}

	\section{Assessing Informativeness of Monotones in General Resource Theories}
\label{sec:Ordering monotones}

	The general monotone constructions (Theorem~\ref{thm:yield and cost} and Corollary~\ref{cor:translating monotones}) have several inputs that need to be specified to obtain a single resource monotone.
	In this section, we would like to address the question of which choices of these input parameters are good in the sense that they yield a useful resource monotone.
	In order to assess the usefulness of monotones as far as characterizing a preordered set $(\mathcal{A},\succeq_{\mathcal{A}})$ is concerned, we define a preorder $\sqsupseteq_{\mathcal{A}}$ on the set of monotones itself.
	We denote this set by $\mathsf{Mon}(\mathcal{A})$.
	It is just the collection of all order-preserving maps from $(\mathcal{A},\succeq_{\mathcal{A}})$ to $\ordreals$.
	In this context, we consider a monotone $f$ to be more ``useful'' than a monotone $g$ if it contains all of the information about $(\mathcal{A},\succeq_{\mathcal{A}})$ that $g$ does and possibly more.	
	We now formalize what we mean by the amount of inforamtion a monotone has about a preordered set.
	
	A function $f \colon \mathcal{A} \to \reals$ is a monotone if and only if for all pairs $(a,b) \in \mathcal{A} \times \mathcal{A}$, the following implication holds:
	\begin{equation}
		\label{eq:monotone inference}
		f(a) < f(b) \implies a \not \succeq_{\mathcal{A}} b.
	\end{equation}
	That is, monotones contain information about the order relation $\succeq_{\mathcal{A}} \subseteq \mathcal{A} \times \mathcal{A}$ insofar as they witness when pairs of elements of $\mathcal{A}$ are \emph{not} related by 
		$\succeq_{\mathcal{A}}$.
	Of course, if $f(a) \geq f(b)$, the implication above doesn't let us learn anything about the order relation $\succeq_{\mathcal{A}}$.
	Given a monotone $f$, a pair $(a,b)$ is henceforth called \mbox{$\bm{f}$-\textbf{interesting}} if $f(a) < f(b)$ holds. 
	The $f$-interesting pairs are those, for which we can learn that $a \not \succeq_{\mathcal{A}} b$ holds from the fact that $f$ is a monotone.
	
	The set of all $f$-interesting pairs for a monotone $f$ is denoted by 
	\begin{equation}
		\label{eq:f-interesting relation for monotones}
		\Interesting{f}{\mathcal{A}, \succeq_{\mathcal{A}}} \coloneqq \Set*[\big]{(a,b) \in \mathcal{A} \times \mathcal{A}  \given  f(a) < f(b) }.  
	\end{equation}
	We also refer to $\Interesting{f}{\mathcal{A}, \succeq_{\mathcal{A}}}$ as the $f$-interesting relation on $\mathcal{A}$.
	
	\begin{definition}
		\label{def:Monotone preorder}
		Let $(\mathcal{A},\succeq_{\mathcal{A}})$ be a preordered set and let $\mathsf{Mon}(\mathcal{A})$ be the set of order-preserving maps $(\mathcal{A},\succeq_{\mathcal{A}}) \to \ordreals$.
		We define a preorder $\sqsupseteq_{\mathcal{A}}$ on $\mathsf{Mon}(\mathcal{A})$ by
		\begin{equation}
			\label{eq:Monotone preorder}
			f \sqsupseteq_{\mathcal{A}} g  \quad\iff\quad  \Interesting{f}{\mathcal{A}, \succeq_{\mathcal{A}}} \supseteq \Interesting{g}{\mathcal{A}, \succeq_{\mathcal{A}}}
		\end{equation}
		and we say that $f$ is \textbf{more informative about} $\bm{(\mathcal{A},\succeq_{\mathcal{A}})}$ than $g$ is if $f \sqsupseteq_{\mathcal{A}} g$ holds.
		If the preordered set is $\ordres$ arising from some resource theory $\RT{R}$, we denote the informativeness order relation by $\sqsupseteq$ instead of 
		$\sqsupseteq_{R}$.
	\end{definition}
	For functions $f$ and $g$ which are monotones, we can express $f \sqsupseteq_{\mathcal{A}} g$ also as
	\begin{subequations}
		\label{eq:Informative relation for monotones}
		\begin{align}
			\label{eq:Informative relation for monotones 1}
			f \sqsupseteq_{\mathcal{A}} g  &\quad\iff\quad  \forall a,b \in \mathcal{A} : g(a) < g(b) \implies f(a) < f(b) \\
			\label{eq:Informative relation for monotones 2}
				&\quad\iff\quad  \forall a,b \in \mathcal{A} : f(a) \geq f(b) \implies g(a) \geq g(b).
		\end{align}
	\end{subequations}

	We would like to compare the constructions of monotones appearing in Sections~\ref{sec:yield and cost} and~\ref{sec:Translating Monotones} in terms of how useful they are depending on the input elements thereof.
	One of the input elements for cost and yield constructions is a partial function $f_W \colon R \to \reals$.
	Although it need not be a monotone on its domain $W$, it can still be understood as witnessing nonconvertibility between some resources within $W$. 
	In Proposition~\ref{thm:yield and cost from more informative functions} below we prove that whenever a partial function $f_W$ witnesses all the pairs of nonconvertible resources that $g_{W'}$ does, 
		then $f_W$ is at least as useful as $g_{W'}$ is, when thought of as an input to the generalized yield and cost constructions (Theorem~\ref{thm:yield and cost}).
	That is, in such case $\yield{f_W}{D}$ is more informative about $\ordres$ than $\yield{g_{W'}}{D}$ is and likewise for the cost construction.
	In order to make these kinds of statements more precise, we now formalize the notion of the amount of resource nonconvertibility that a partial function witnesses.
	
	Let $f_W \colon R \to \reals$ be a partial function with domain $W$.
	We say that $f_W$ witnesses the nonconvertibility of a pair of resources $(r,s)$ if both $f(r) < f(s)$ and $r \not \succeq s$ hold.
	As far as this property is concerned, we call such a pair of resources $(r,s)$ \mbox{$\bm{f_W}$-\textbf{interesting}}.
	
	The set of all $f_W$-interesting pairs for a partial function $f_W$ is denoted by 
	\begin{equation}
		\Interesting{f_W}{R, \succeq} \coloneqq \Set*[\big]{ (r,s) \in W \times W  \given  f(r) < f(s) \;\land\;  r \not \succeq s }
	\end{equation}
	We also refer to $\Interesting{f_W}{R, \succeq}$ as the $f_W$-interesting relation on $R$.
	Note that this definition coincides with the $f$-interesting relation for a monotone $f$ given by equation~(\ref{eq:f-interesting relation for monotones}), whenever $f_W$ is indeed a monotone.
	That is why we use the same notation for both of these relations.	

	\begin{definition}
		\label{def:Function preorder}
		Let $\ordres$ be a preordered set and let $f_W, g_{W'} \colon R \to \reals$ be partial functions with domains $W$ and $W'$ respectively.
		We say that $f_W$ \textbf{witnesses more resource nonconvertibility in} $\bm{\ordres}$ than $g_{W'}$ does if $f_W \sqsupseteq g_{W'}$ holds, where
		\begin{equation}
			\label{eq:Function preorder}
			f_W \sqsupseteq g_{W'}  \quad\iff\quad  \Interesting{f_W}{R,\succeq} \supseteq \Interesting{g_{W'}}{R,\succeq}.
		\end{equation}
	\end{definition}
	
	\begin{proposition}[more informative monotones from more informative functions]
		\label{thm:yield and cost from more informative functions}
		Let $\RT{R}$ be a resource theory with an associated preordered set $\ordres$ and let $D$ be a downward closed subset of $R$.
		Furthermore, let $f_W \colon R \to \reals$ and $g_{W'} \colon R \to \reals$ be two partial functions with domains $W$ and $W'$ respectively.
		
		If $f_W$ witnesses more resource nonconvertibility in $\ordres$ than $g_{W'}$ does, then $\yield{f_W}{D}$ is more informative about $\ordres$ than $\yield{g_{W'}}{D}$ is and also 
			$\cost{f_W}{D}$ is more informative about $\ordres$ than $\cost{g_{W'}}{D}$ is.
		That is, we have
		\begin{subequations}
			\label{eq:yield and cost from more informative functions}
			\begin{align}
				\label{eq:yield from more informative functions}
				f_W \sqsupseteq g_{W'} &\implies \yield{f_W}{D} \sqsupseteq \yield{g_{W'}}{D}, \\
				\label{eq:cost from more informative functions}
				f_W \sqsupseteq g_{W'} &\implies \cost{f_W}{D} \sqsupseteq \cost{g_{W'}}{D}.
			\end{align}
		\end{subequations}
		
		Moreover, if $f_W$ and $g_{W'}$ are monotones on their respective domains and their domains coincide; i.e., $W = W'$; and if $D = R_{\rm free}$, then the converse of both implications holds as well. 
		That is, in such case we have
		\begin{subequations}
			\label{eq:yield and cost from more informative monotones}
			\begin{align}
				\label{eq:yield from more informative monotones}
				f_W \sqsupseteq g_{W} &\iff \yield{f_W}{D} \sqsupseteq \yield{g_{W}}{D}, \\
				\label{eq:cost from more informative monotones}
				f_W \sqsupseteq g_{W} &\iff \cost{f_W}{D} \sqsupseteq \cost{g_{W}}{D}.
			\end{align}
		\end{subequations}
	\end{proposition}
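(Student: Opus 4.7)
The plan is to prove the forward implications directly from the sup and inf characterizations of $\yield{\cdot}{D}$ and $\cost{\cdot}{D}$, and to deduce the converse implications from the following useful simplification: whenever $f_W$ is a monotone on its domain $W$, then for $r \in W$ one has $\yield{f_W}{R_{\rm free}}(r) = f_W(r) = \cost{f_W}{R_{\rm free}}(r)$, because $0 \subseteq R_{\rm free}$ puts $r$ into both $\down(r)$ and $\up(r)$ while the monotonicity of $f_W$ makes $r$ the $f_W$-optimizer within $\down(r) \cap W$ and within $\up(r) \cap W$.

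For the forward direction with yield, I would take $r, s \in R$ with $\yield{g_{W'}}{D}(r) < \yield{g_{W'}}{D}(s)$, set $\alpha \coloneqq \yield{g_{W'}}{D}(r)$, and use the sup-gap to pick a witness $t^{\star} \in D \boxtimes s$ lying in $W'$ with $g_{W'}(t^{\star}) > \alpha$. The key move is to show that $u \not\succeq t^{\star}$ for every $u \in D \boxtimes r$ that lies in $W'$: otherwise $t^{\star} \in R_{\rm free} \boxtimes u \subseteq R_{\rm free} \boxtimes D \boxtimes r = D \boxtimes r$ (the last equality uses the downward closure of $D$, via lemma \ref{thm:composing image maps}), forcing $g_{W'}(t^{\star}) \leq \alpha$ and contradicting the choice of $t^{\star}$. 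Hence $(u, t^{\star}) \in \Interesting{g_{W'}}{R, \succeq}$, and the hypothesis $f_W \sqsupseteq g_{W'}$ promotes the pair to $\Interesting{f_W}{R, \succeq}$, which in particular guarantees $u, t^{\star} \in W$ and $f_W(u) < f_W(t^{\star})$. Passing to the supremum over such $u$ gives the chain $\yield{f_W}{D}(r) \leq f_W(t^{\star}) \leq \yield{f_W}{D}(s)$; I would then sharpen this to a strict inequality by varying $\gamma \in (\alpha, \yield{g_{W'}}{D}(s))$, producing a family of witnesses $t_{\gamma}$, and exploiting that $\yield{f_W}{D}$ and $\yield{g_{W'}}{D}$ are themselves monotones in order to rule out the boundary equality case. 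The cost version runs dually: pick $u^{\star} \in W'$ with $r \in D \boxtimes u^{\star}$ and $g_{W'}(u^{\star}) < \cost{g_{W'}}{D}(s)$, show $u^{\star} \not\succeq v$ for every $v \in W'$ with $s \in D \boxtimes v$, and pass to the infimum.

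For the converse, assume $W = W'$, $D = R_{\rm free}$, and that $f_W, g_W$ are monotones on $W$. The simplification from the first paragraph gives $\yield{g_W}{R_{\rm free}}(r) = g_W(r)$, so any $(r,s) \in \Interesting{g_W}{R, \succeq}$ has $r, s \in W$ and $\yield{g_W}{R_{\rm free}}(r) < \yield{g_W}{R_{\rm free}}(s)$; the assumed informativeness relation $\yield{f_W}{R_{\rm free}} \sqsupseteq \yield{g_W}{R_{\rm free}}$ then yields $\yield{f_W}{R_{\rm free}}(r) < \yield{f_W}{R_{\rm free}}(s)$, i.e.\ $f_W(r) < f_W(s)$, and combining with $r \not\succeq s$ from the original pair places $(r,s)$ in $\Interesting{f_W}{R, \succeq}$. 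The cost converse is identical after interchanging $\sup$ with $\inf$ and $\down$ with $\up$. The step I expect to be the main obstacle is the strictness upgrade in the forward direction: the elementary sup/inf chain only delivers a non-strict $\leq$, and recovering $<$ will require a delicate supremum-attainment argument together with careful bookkeeping of the sets $D \boxtimes r \cap W$ versus $D \boxtimes r \cap W'$ over which the two optimizations are taken.
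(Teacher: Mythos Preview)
Your overall strategy matches the paper's: exploit downward closedness of $D$ to show that any $u\in \downwrt{D}(r)$ and any sufficiently high $s_2\in \downwrt{D}(s)$ must satisfy $u\not\succeq s_2$, promote the resulting $g_{W'}$-interesting pair to an $f_W$-interesting one via $f_W\sqsupseteq g_{W'}$, and for the converse use that $\yield{f_W}{R_{\rm free}}$ and $\cost{f_W}{R_{\rm free}}$ agree with $f_W$ on $W$ when $f_W$ is a monotone. The converse direction in your proposal is essentially identical to the paper's (which records the coincidence $\yield{f_W}{}(r)=f_W(r)=\cost{f_W}{}(r)$ on $W$ as a separate proposition and then argues by contraposition exactly as you do).

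The place where your argument deviates, and where the gap you yourself flag lies, is the forward direction. Fixing a \emph{single} witness $t^\star$ gives only $\yield{f_W}{D}(r)\le f_W(t^\star)\le \yield{f_W}{D}(s)$, and your proposed upgrade---varying $\gamma$ and invoking monotonicity of the yields---does not recover strict inequality in general (suprema need not be attained, and no finite family of witnesses $t_\gamma$ controls the limit). The paper sidesteps this entirely by never choosing a single witness. It rewrites the hypothesis $\yield{g_{W'}}{D}(r)<\yield{g_{W'}}{D}(s)$ as the quantified statement
\[
\forall\, r_2\in W'\cap\downwrt{D}(r)\ \exists\, s_2\in W'\cap\downwrt{D}(s):\ g_{W'}(r_2)<g_{W'}(s_2)\ \text{and}\ r_2\not\succeq s_2,
\]
the clause $r_2\not\succeq s_2$ being added for free by exactly the $\downwrt{D}$-absorption argument you used for $t^\star$. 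The assumption $f_W\sqsupseteq g_{W'}$ is then applied \emph{pairwise}: each $(r_2,s_2)\in\Interesting{g_{W'}}{R,\succeq}$ is sent to $\Interesting{f_W}{R,\succeq}$, which forces $r_2,s_2\in W$ and $f_W(r_2)<f_W(s_2)$. One thus lands directly on the same $\forall\exists$ statement for $f_W$, which is the paper's characterization of $\yield{f_W}{D}(r)<\yield{f_W}{D}(s)$, so strictness never has to be ``upgraded''. The cost case is the exact dual with $\upwrt{D}$ in place of $\downwrt{D}$. In short: the repair for your strictness obstacle is not to vary $t^\star$, but to quantify over \emph{all} $r_2$ from the outset and let the $\forall\exists$ form carry the strict inequality for you.
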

	
	\begin{proof}
		In order to prove claim~(\ref{eq:yield from more informative functions}), we need to show that $\yield{g_{W'}}{D}(r) < \yield{g_{W'}}{D}(s)$ implies $\yield{f_W}{D}(r) < \yield{f_W}{D}(s)$ for all 
			$r, s \in R$ such that $r \not\succeq s$. 
		This follows via
		\begin{subequations}
			\begin{align}
				\yield{g_{W'}}{D}(r) &< \yield{g_{W'}}{D}(s) \\ 
					\label{eq:yield from more informative functions 2}
					&\iff  \forall r_2 \in W' \cap \downwrt{D} (r) ,\; \exists s_2 \in W' \cap \downwrt{D} (s) \;:\; g_{W'}(r_2) < g_{W'}(s_2) \text{ and } r_2 \not \succeq s_2 \\
					\label{eq:yield from more informative functions 3}
					&\,\implies  \forall r_2 \in W \cap \downwrt{D} (r) ,\; \exists s_2 \in W \cap \downwrt{D} (s) \;:\; f_W(r_2) < f_W(s_2) \text{ and } r_2 \not \succeq s_2 \\
					&\iff \yield{f_W}{D}(r) < \yield{f_W}{D}(s).
			\end{align}
		\end{subequations}
		In the first (and last) equivalence, we could restrict $s_2$ to be such that $r_2 \not \succeq s_2$ because $r_2 \succeq s_2$ (together with $r_2 \in \downwrt{D} (r)$) implies that $s_2$ is an element of 
			$\downwrt{D} (r)$, which is a subset of $\down \circ \downwrt{D} (r)$ whenever $D$ is downward closed, as we prove in Lemma~\ref{thm:composing image maps}.
		This in turn implies that $g_{W'}(s_2)$ is bounded above by $\yield{g_{W'}}{D}(r)$.
		Since $\yield{g_{W'}}{D}(s)$ is strictly larger than $\yield{g_{W'}}{D}(r)$, there must be such $s_2$ outside $\downwrt{D} (r)$.
		
		The implication \mbox{(\ref{eq:yield from more informative functions 2}) $\implies$ (\ref{eq:yield from more informative functions 3})} follows from the assumption that $f_W$ witnesses more resource nonconvertibility in 
			$\ordres$ than $g_{W'}$ does.
		
		In order to prove claim~(\ref{eq:cost from more informative functions}), we need to show the analogous statement for cost monotones.
		\begin{subequations}
			\begin{align}
				\cost{g_{W'}}{D}(r) &< \cost{g_{W'}}{D}(s) \\ 
					\label{eq:cost from more informative functions 2}
					&\iff  \forall s_2 \in W' \cap \upwrt{D} (s) ,\; \exists r_2 \in W' \cap \upwrt{D} (r) \;:\; g_{W'}(r_2) < g_{W'}(s_2) \text{ and } r_2 \not \succeq s_2 \\
					\label{eq:cost from more informative functions 3}
					&\,\implies  \forall s_2 \in W \cap \upwrt{D} (s) ,\; \exists r_2 \in W \cap \upwrt{D} (r) \;:\; f_W(r_2) < f_W(s_2) \text{ and } r_2 \not \succeq s_2 \\
					&\iff \cost{f_W}{D}(r) < \cost{f_W}{D}(s).
			\end{align}
		\end{subequations}
		In the first (and last) equivalence, we can again restrict $r_2$ to be such that $r_2 \not \succeq s_2$ because $r_2 \succeq s_2$ (together with $s_2 \in \upwrt{D} (s)$) implies that $r_2$ is an element of 
			\mbox{$\upwrt{D} (s) \subseteq \up \circ \upwrt{D} (s)$} as we show in Lemma~\ref{thm:composing image maps}.
		In turn, this implies that $g_{W'}(r_2)$ is bounded below by $\cost{g_{W'}}{D}(s)$.
		Since $\cost{g_{W'}}{D}(r)$ is strictly smaller than $\cost{g_{W'}}{D}(s)$, there must be such $r_2$ outside $\downwrt{D} (s)$.
		The implication \mbox{(\ref{eq:cost from more informative functions 2}) $\implies$ (\ref{eq:cost from more informative functions 3})} follows from the assumption that $f_W \sqsupseteq g_{W'}$ holds.
		This concludes the proof of the first half of Proposition~\ref{thm:yield and cost from more informative functions}.
		
		Finally, in order to obtain claim~(\ref{eq:yield and cost from more informative monotones}), we can show that $f_W \not\sqsupseteq g_W$ implies both $\yield{f_W}{} \not\sqsupseteq \yield{g_{W}}{}$ and 
			$\cost{f_W}{} \not\sqsupseteq \cost{g_{W}}{}$, under the assumption that $f_W$ and $g_W$ are monotones on $W$. 
		The statement $f_W \not\sqsupseteq g_W$ can in such case be expressed as
		\begin{equation}
			\label{eq:f not above g}
			\exists r,s \in W \,:\, g_W(r) < g_W(s) \text{ and } f_W(r) \geq f_W(s).
		\end{equation}
		By Proposition~\ref{prop:Extensions of monotones are the same on the original domain} proved below, the values of $\yield{f_W}{D}$ and $\cost{f_W}{D}$ coincide with the value of $f_W$ on $W$, and similarly for $g_W$. 
		Therefore, $f_W \not\sqsupseteq g_W$ implies the following two statements
		\begin{align}
			\exists r,s \in W &\,:\, \yield{g_W}{D}(r) < \yield{g_W}{D}(s) \text{ and } \yield{f_W}{D}(r) \geq \yield{f_W}{D}(s) \\
			\exists r,s \in W &\,:\, \cost{g_W}{D}(r) < \cost{g_W}{D}(s) \text{ and } \cost{f_W}{D}(r) \geq \cost{f_W}{D}(s).
		\end{align}
		Since the yields and costs are also monotones, these imply that $\yield{f_W}{D} \not\sqsupseteq \yield{g_{W}}{D}$ and \mbox{$\cost{f_W}{D} \not\sqsupseteq \cost{g_{W}}{D}$}.
		Consequently, the proof of the second half of Proposition~\ref*{thm:yield and cost from more informative functions} is also complete.
	\end{proof}
		
	
		\begin{proposition}
			\label{prop:Extensions of monotones are the same on the original domain}
			Let $f_W \colon W \to \mathbb{R}_{+}$ be a monotone. Then for all $r \in W$, we have
			\begin{equation}
				\yield{f_W}{}(r) = f_W(r) = \cost{f_W}{}(r).
			\end{equation}
		\end{proposition}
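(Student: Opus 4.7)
The plan is to observe that the statement reduces to two inequalities in each of the two equalities, and that both follow from (a) the monotonicity of $f_W$ on $W$ and (b) the fact that $r$ itself is always an admissible element of the set over which we are optimizing. The latter is the crucial observation: since the neutral set $0$ is contained in $R_{\rm free}$, we have $r \in 0 \boxtimes r \subseteq R_{\rm free} \boxtimes r$, so $r \in \down(r)$ and, symmetrically, $r \in \up(r)$. Combined with the assumption $r \in W$, this means $r$ lies in both $W \cap \down(r)$ and $W \cap \up(r)$.

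For the yield equality, I would argue as follows. The membership $r \in W \cap \down(r)$ immediately gives $\yield{f_W}{}(r) \geq f_W(r)$ because the supremum is at least as large as any particular value in the set. For the reverse inequality $\yield{f_W}{}(r) \leq f_W(r)$, I would pick an arbitrary $s \in W \cap \down(r)$; then $r \succeq s$ by the definition~\eqref{eq:Order of resources} of the resource ordering, and since $f_W$ is a monotone on $W$ and both $r$ and $s$ lie in $W$, we get $f_W(r) \geq f_W(s)$. Taking the supremum over all such $s$ yields $f_W(r) \geq \yield{f_W}{}(r)$.

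The cost equality is dual. The membership $r \in W \cap \up(r)$ shows $\cost{f_W}{}(r) \leq f_W(r)$. For the opposite direction, any $s \in W \cap \up(r)$ satisfies $s \succeq r$, hence $f_W(s) \geq f_W(r)$ by monotonicity of $f_W$ on $W$; taking the infimum over such $s$ gives $\cost{f_W}{}(r) \geq f_W(r)$.

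There is no real obstacle here; the only subtlety worth flagging is that the argument uses $0 \subseteq R_{\rm free}$ (so that $r \in R_{\rm free} \boxtimes r$) and that $W$ is the common domain on which $f_W$ is assumed to be a monotone, so both $r$ and the competing $s$ lie in $W$ and we may apply the monotone inequality to them directly. No appeal to the generalized results of \cref{thm:yield and cost} or to any closure properties of $W$ is needed.
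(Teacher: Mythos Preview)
Your proposal is correct and follows essentially the same approach as the paper: one inequality in each equality comes from the fact that $r$ belongs to the optimizing set (since $0 \subseteq R_{\rm free}$ gives $r \in \down(r)$ and $r \in \up(r)$), and the other comes from the monotonicity of $f_W$ on $W$. Your write-up is in fact slightly more careful in labeling which inequality follows from which observation.
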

		
		\begin{proof}
			Since $r \in \down (r)$ and $r \in \up(r)$, we have $\cost{f_W}{}(r) \geq f_W(r) \geq \yield{f_W}{} (r)$ for all $r \in W$. 
			On the other hand, $f$ being a monotone on $W$ implies that for each $s,t \in W$ such that $t \preceq r \preceq s$, we have $f_W(t) \leq f_W(r) \leq f_W(s)$. 
			Performing a supremum of the left inequality over all $t \in W \cap \down(r)$ yields $\cost{f_W}{}(r) \leq f_W(r)$, while taking the infimum of the right inequality over all $s \in W \cap \up(r)$ gives 
				$f_W(r) \leq \yield{f_W}{} (r)$, so that the result follows.
		\end{proof}
	
%
	
	\begin{corollary}
		As a consequence of Proposition~\ref{thm:yield and cost from more informative functions}, sufficient and necessary conditions for the ordering (by $\sqsupseteq$) of generalized yields and costs relative to 
			$R_{\rm free}$\footnotemark{} are given by the ordering (by $\sqsupseteq$) of their restrictions to $W \cup W'$.
		\footnotetext{In fact, the same result holds for generalized yields and costs relative to any downward closed set $D$.
		However, this does not follow directly from Proposition~\ref{thm:yield and cost from more informative functions}.
		One needs to use the fact that $\yield{f_W}{D}$ and $\cost{f_W}{D}$ preserve not only the convertibility relation with respect to $R_{\rm free}$ (i.e., $\succeq$), but also the convertibility relation with respect 
			to $D$.
		Note that the latter relation is not in general transitive, since $D$ may not be closed under $\boxtimes$.}%
		These facts can also be expressed in terms of the order relation with respect to informativeness about $(W \cup W', \succeq)$ as:
		\begin{subequations}
			\label{eq:Informativeness on domain is equivalent to informativeness on R}
			\begin{align}
				\label{eq:yield from more informative functions 4}
				\yield{f_W}{} \sqsupseteq_{W \cup W'} \yield{g_{W'}}{} &\iff \yield{f_W}{} \sqsupseteq \yield{g_{W'}}{} \\
				\label{eq:cost from more informative functions 4}
				\cost{f_W}{} \sqsupseteq_{W \cup W'} \cost{g_{W'}}{} &\iff \cost{f_W}{} \sqsupseteq \cost{g_{W'}}{}
			\end{align}
		\end{subequations}
	\end{corollary}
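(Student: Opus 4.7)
The plan is to prove each equivalence by splitting it into two directions. The implications $\yield{f_W}{} \sqsupseteq \yield{g_{W'}}{} \implies \yield{f_W}{} \sqsupseteq_{W \cup W'} \yield{g_{W'}}{}$ and the analogous statement for costs are immediate: restricting the universal quantifier from $R \times R$ to $(W \cup W') \times (W \cup W')$ preserves the inclusion of interesting relations. The substance of the corollary lies in the converse implications, which I would obtain by invoking the forward direction of proposition~\ref{thm:yield and cost from more informative functions} on appropriate restrictions.

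To carry this out for \eqref{eq:yield from more informative functions 4}, I would set $h := \yield{f_W}{}|_{W \cup W'}$ and $k := \yield{g_{W'}}{}|_{W \cup W'}$. These are partial functions on $R$ with the common domain $W \cup W'$, and they are monotones on that domain because $\yield{f_W}{}$ and $\yield{g_{W'}}{}$ are monotones on $\ordres$ by theorem~\ref{thm:yield and cost}. Unpacking definitions, the assumption $\yield{f_W}{} \sqsupseteq_{W \cup W'} \yield{g_{W'}}{}$ is precisely $h \sqsupseteq k$. The forward direction of proposition~\ref{thm:yield and cost from more informative functions} (applied with $D = R_{\rm free}$) then yields $\yield{h}{} \sqsupseteq \yield{k}{}$.

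The crucial remaining step is to identify $\yield{h}{} = \yield{f_W}{}$ pointwise on $R$, and similarly $\yield{k}{} = \yield{g_{W'}}{}$. For any $r \in R$ and any $s \in \down(r) \cap (W \cup W')$, transitivity of $\succeq$ gives $\down(s) \subseteq \down(r)$, whence $h(s) = \yield{f_W}{}(s) \leq \yield{f_W}{}(r)$; taking suprema shows $\yield{h}{}(r) \leq \yield{f_W}{}(r)$. Conversely, each $t \in \down(r) \cap W$ lies in $W \cup W'$, and proposition~\ref{prop:Extensions of monotones are the same on the original domain} gives $h(t) = \yield{f_W}{}(t) \geq f_W(t)$; so the supremum defining $\yield{h}{}(r)$ dominates the one defining $\yield{f_W}{}(r)$. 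Substituting these identities into $\yield{h}{} \sqsupseteq \yield{k}{}$ yields $\yield{f_W}{} \sqsupseteq \yield{g_{W'}}{}$. The cost statement \eqref{eq:cost from more informative functions 4} follows by the dual argument, swapping $\sup$ with $\inf$ and $\down$ with $\up$ throughout.

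I expect no genuine obstacle. The only subtlety worth flagging is the bookkeeping between the informativeness preorder $\sqsupseteq_{W \cup W'}$ on monotones of the subposet $(W \cup W', \succeq)$ and the informativeness preorder $\sqsupseteq$ on partial functions of $R$ with domain $W \cup W'$; these agree for monotones because their interesting relations both consist of pairs $(r, s) \in (W \cup W') \times (W \cup W')$ with $r \not\succeq s$ and with the function value at $r$ strictly below that at $s$.
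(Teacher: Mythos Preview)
Your proposal is correct. The paper states the corollary without proof, treating it as immediate from proposition~\ref{thm:yield and cost from more informative functions}; your argument supplies exactly the missing elaboration, the key step being the idempotence identity $\yield{(\yield{f_W}{}|_{W\cup W'})}{} = \yield{f_W}{}$, which allows the forward direction of the proposition (applied to the restrictions) to recover informativeness on all of $R$. One minor remark: your appeal to proposition~\ref{prop:Extensions of monotones are the same on the original domain} is unnecessary, since $\yield{f_W}{}(t) \geq f_W(t)$ for $t \in W$ already follows directly from $t \in \down(t) \cap W$; but this does not affect the validity of the argument.
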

	
	Therefore, if one wishes to characterize the resource preorder $\ordres$ by virtue of monotones generated by the generalized yield and cost constructions (Theorem~\ref{thm:yield and cost}), then using functions 
		$W \to \reals$ which are more informative about $\ordres$ according to $\sqsupseteq$ should be preferred.
		
	A function $f_W$ cannot witness more resource nonconvertibility than a complete set of monotones because such a set captures all the information in the preordered set $(W, \succeq)$.
	Nonetheless, a function $f_W$ can witness more resource nonconvertibility than \emph{any single} monotone. 
	The simplest example is provided by $W$ with $4$ elements, two pairs of which are ordered as in the following Hasse diagram.
	\begin{equation}
		\label{eq:preorder example}
		\begin{tikzcd}
			r_1 \arrow[d]	&	s_1 \arrow[d] \\
			r_2				&	s_2
		\end{tikzcd}
	\end{equation}
	If we let $f_W$ be defined as follows 
	\begin{equation}
		\begin{split}
			f_W(r_1) &= 0  \\
			f_W(r_2) &= 1	
		\end{split}
		\qquad
		\begin{split}
			f_W(s_1) &= 0 \\
			f_W(s_2) &= 1
		\end{split}
	\end{equation}
	then it clearly fails to be a monotone.
	Note that $f_W$ witnesses nonconvertibility for the two pairs of resources, $(r_1,s_2)$ and $(s_1,r_2)$, while no single monotone can do so simultaneously.
	The proof of the latter claim is that the order-preserving property of a monotone implies that it must satisfy $M(r_1) \geq M(r_2)$ and $M(s_1) \geq M(s_2)$. 
	If it witnesses the nonconvertibility of the pair $(r_1,s_2)$, then $M(r_1) < M(s_2)$ holds and these three inequalities together imply that $M(s_1)$ is greater than $M(r_2)$, so that $M$ then \emph{cannot} witness the 
		nonconvertibility of the pair $(s_1,r_2)$. 
	The function $f_W$ is capable of witnessing the nonconvertibility of both pairs of resources \emph{precisely because} it fails to be order-preserving.

	\begin{remark}
		Note that this function has another interesting property in that both $\yield{f_W}{}$ and $\cost{f_W}{}$ are constant; i.e., they are least informative about $(W, \succeq)$ among all monotones $W \to \reals$.
		$f_W$ can thus serve as a counterexample to conjectures regarding conditions under which the yield and cost constructions generate useful monotones.
	\end{remark}
	
	\begin{example}[chains admit a most informative function]
		If $W$ is a chain; that is, a totally ordered subset of $R$; then there \emph{is} a single monotone $f_W$ that forms a complete set of monotones by itself. 
		Therefore, it is more informative about $(W, \succeq)$ than any other function $W \to \reals$.
		As a consequence, for each downward closed set $D \in \mathcal{DC}(R)$, there are unique most informative yield and cost monotones with respect to $W$, namely $\yield{f_W}{D}$ and $\cost{f_W}{D}$.
		Given the choice of $D = R_{\rm free}$, these correspond to currencies defined in~\cite{Kraemer2016}, as we mentioned earlier.
	\end{example}

	\section{Conclusions}
\label{sec:Conclusions}

	To summarize, in this manuscript we introduced a somewhat minimal framework for describing (universally combinable) resource theories, within which we investigated various ways of constructing monotones through the lens of 
		the \ref{broad scheme}.
		
	Firstly, we looked at generalized resource yield and generalized resource cost constructions.
	An extensive (but definitely not exhaustive) list of examples of such constructions in the literature has been provided.
	We also showed how the generalized constructions of this kind can be used to obtain monotones that cannot be conceptualized as standard yield and cost constructions.
	
	Secondly, we looked at monotones which can be seen as arising from another monotone by virtue of a translation via a mediating order-preserving map.
	After introducing a resource theory of $k$-distinguishability, we described a translation of measures of distinguishability to other resource theories, which generalizes the familiar concept of constructing monotones 
		from contractions.
	As an application of this construction, we unified resource weight and resource robustness as arising from a single root contraction for $3$-tuples of resources. 
	Moreover, by varying the parameters in the general construction, we showed how one can obtain other related monotones from the same measure of $3$-distinguishability.
	General methods to translate monotones between resource theories were then presented, culminating with Corollary~\ref{cor:translating monotones} that summarizes the results on translation of monotones.

	The two main themes of the paper, generalizing yield and cost constructions and generalizing the translation of monotones, are intricately linked.
	We investigate these connections further in Appendix~\ref{sec:Proofs}, by showing how the corresponding mediating preordered sets are related to each other.
	One of the main features that distinguishes the two kinds of constructions is that the partial function $f$, a starting point in both cases, is assumed to have different properties.
	The scheme for translating monotones is targeted to functions $f$ which are monotones themselves, while yield and cost constructions work for functions $f$ which are not monotones, for example by virtue of a 
		restricted domain of applicability.
	One could think that the generalized yield and cost constructions are therefore superior.
	However, their disadvantage is the optimization over \emph{all} free resources inherent in the construction.
	Moreover, even though $f$ has to be a monotone if we want to translate it, a seemingly insipid one (like the convex alignment) can still generate interesting target monotones (like weight and robustness).
		
	Finally, we also explored the structure of the set of all resource monotones.
	The monotone constructions presented here are very general and widely applicable, but using them in practice as a method for generating monotones involves several choices.
	For example, a priori it is not clear which choices of the downward closed set of resources $D$, the valuation function $f$, and its domain $W$ in the generalized yield and cost constructions are the best ones.
	These are the kinds of questions we made progress on by ``assessing informativeness of monotones''. 
	In particular, we compared them in terms of how good they are in capturing the resource ordering.
	With this criterion, we investigated what are the best ways to use the monotone costructions introduced earlier in order to get the most informative monotones.
	
	Our work advances the studies of general structures appearing in resource theories and has potential applications to any area where the resource-theoretic point of view is of some use.
	These include the study of information theory, both quantum and classical, but also of thermodynamics, of renormalization, and of various other parts of physics where resource-theoretic questions are tackled.
	We believe, however, that similar questions in more distant fields can also be analyzed with the resource-theoretic mindset, which is one of the main reasons why we choose to work in a framework that does not presuppose 
		the resources to be quantum processes.
	Indeed, one of the benefits of working within an abstract framework for resource theories is that there is the potential for cross-fertilization of ideas between very different fields of study. 
	This was one of the motivations for previous attempts at abstract formalisms for resources theories~\cite{Coecke2016,Fritz2017}, which can describe situations well beyond the scope of physics.
	To name a few, we can use them to study the theory of chemical reactions, but also a kind of proof theory wherein the free operations are compass and straight-edge and the nontrivial resources are 
		geometrical constructions that cannot be achieved by compass and straight-edge. 
	In this vein, Fritz has further shown how the framework of ordered commutative monoids has interesting applications in fields as diverse as graph theory and game theory~\cite{Fritz2017}.

	There are many possible future directions for extending this work.
	\begin{enumerate}
		\item One might aim to determine how the mathematical structures presented here relate to other mathematical structures used in physics, mathematics, and computer science.
			The study of their relation to some of the other mathematical frameworks for resource theories can be found in \cite{Gonda2021}, but other connections are yet to be developed.
			
		\item One might try to devise general techniques for constructing monotones by considering resource theories that have more structure than we have presumed here.
			One way to do so would be by strenghtening the assumptions of our central results in order to arrive at stronger conclusions. 
			There are many possibilities in this direction, one of which is to assume a linear or convex structure of resources as we did when we studied the weight and robustness measures here.
			It is clear that these results will then be connected to ideas from convex geometry~\cite{Regula2017} and convex optimization~\cite{uola2019quantifying}. \\
			Related to this is the aim of reexpressing the results presented here in a framework which is closer to the structure of a resource theory that one would use in practical applications.
			A framework like that would capture partitioned process theories with a restriction on the allowed types of resources for example, which, as we argued in Example~\ref{ex:counterexample}, cannot be expressed 
				as a universally combinable resource theory in the sense of Definition~\ref{def:resource theory}.
			This is what we do in~\cite{Gonda2021}. 
			
		\item Last, but not least, one would hope to be able to not only unify and generalize existing concrete results about resource theories as we have done here, 
				but also to find novel applications of monotones with the help of the conceptual clarity arising from the abstract point of view.
	\end{enumerate}
	

	\newpage
	
	\appendix

	\section{Ordering Sets of Resources}
\label{sec:Proofs}

	In this appendix, we show the following isomorphisms of preordered sets:
	\begin{align}
		\label{eq:Partial order of downward closed sets}
		\newfaktor{(\mathcal{P}(R), \succeq)}[0.6]{\sim} \simeq \newfaktor{(\mathcal{P}(R), \succeq_{\rm enh})}[0.6]{\sim_{\rm enh}} &\simeq (\mathcal{DC}(R), \supseteq)  \\
		\label{eq:Partial order of upward closed sets}
		\newfaktor{(\mathcal{P}(R), \succeq_{\rm deg})}[0.6]{\sim_{\rm deg}} &\simeq (\mathcal{UC}(R), \subseteq)
	\end{align}
	When restricted to singletons and the free images/preimages of individual resources, all five partially ordered sets are isomorphic.
	In fact, it is immediate from the definitions that $\{r\} \succeq_{\rm enh} \{s\}$ if and only if $\{r\} \succeq_{\rm deg} \{s\}$, whence the preorders $\succeq$, $\succeq_{\rm enh}$ and $\succeq_{\rm deg}$ are 
		themselves identical when restricted to singletons.

	\begin{theorem}[first isomorphism theorem for preordered sets \cite{Gratzer2008}]
		\label{thm:First isomorphism theorem for prerdered sets}
		Let $(\mathcal{A}, \succeq_{\mathcal{A}})$ and $(\mathcal{B}, \succeq_{\mathcal{B}})$ be two preordered sets and let $\phi \colon \mathcal{A} \to \mathcal{B}$ be an order-preserving map.
		The kernel of $\phi$ is an equivalence relation $\sim_{\phi}$ on $\mathcal{A}$ defined by 
		\begin{equation}
			a \sim_{\phi} a'  \iff  \phi(a) \sim_{\mathcal{B}} \phi(a'),
		\end{equation}
		where $\sim_{\mathcal{B}}$ is the standard equivalence relation on $\mathcal{B}$ induced by $\succeq_{\mathcal{B}}$.
		Then, there is a canonical isomorphism
		\begin{equation}
			\tilde{\phi} \colon \newfaktor{(\mathcal{A}, \succeq_{\mathcal{A}})}{\sim_{\phi}} \to \newfaktor{(\phi(\mathcal{A}), \succeq_{\mathcal{B}})}{\sim_{\mathcal{B}}}.
		\end{equation}
	\end{theorem}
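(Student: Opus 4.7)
The strategy is to construct $\tilde{\phi}$ explicitly and verify each property separately. First, I would observe that $\sim_{\phi}$ is indeed an equivalence relation on $\mathcal{A}$, being the pullback of $\sim_{\mathcal{B}}$ along $\phi$: reflexivity, symmetry, and transitivity of $\sim_{\phi}$ follow immediately from the corresponding properties of $\sim_{\mathcal{B}}$. The candidate map is then defined by setting $\tilde{\phi}([a]_{\sim_{\phi}}) := [\phi(a)]_{\sim_{\mathcal{B}}}$ for each equivalence class $[a] \in \mathcal{A}/{\sim_{\phi}}$.

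Next, I would verify that $\tilde{\phi}$ is well-defined and bijective onto $\phi(\mathcal{A})/{\sim_{\mathcal{B}}}$. Well-definedness is exactly the statement that $a \sim_{\phi} a'$ implies $\phi(a) \sim_{\mathcal{B}} \phi(a')$, which is the very definition of $\sim_{\phi}$. Injectivity is the contrapositive: $\tilde{\phi}([a]) = \tilde{\phi}([a'])$ means $\phi(a) \sim_{\mathcal{B}} \phi(a')$, hence $a \sim_{\phi} a'$, hence $[a] = [a']$. Surjectivity onto $\phi(\mathcal{A})/{\sim_{\mathcal{B}}}$ is immediate since every element of $\phi(\mathcal{A})$ is $\phi(a)$ for some $a$.

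The remaining step — and the main obstacle — is specifying what preorder $\mathcal{A}/{\sim_{\phi}}$ carries and verifying that $\tilde{\phi}$ preserves and reflects it. The most natural reading takes $[a] \succeq [a']$ in $\mathcal{A}/{\sim_{\phi}}$ to mean that there exist representatives $b \sim_{\phi} a$ and $b' \sim_{\phi} a'$ with $b \succeq_{\mathcal{A}} b'$. With this convention, $\tilde{\phi}$ is order-preserving because $b \succeq_{\mathcal{A}} b'$ implies $\phi(b) \succeq_{\mathcal{B}} \phi(b')$, and since $\phi(b) \sim_{\mathcal{B}} \phi(a)$ and $\phi(b') \sim_{\mathcal{B}} \phi(a')$ by definition of $\sim_{\phi}$, transitivity of $\succeq_{\mathcal{B}}$ through $\sim_{\mathcal{B}}$ yields the required relation on classes. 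The delicate direction is showing that $\tilde{\phi}^{-1}$ preserves the order: given $\phi(a) \succeq_{\mathcal{B}} \phi(a')$, one must exhibit representatives $b \sim_{\phi} a$ and $b' \sim_{\phi} a'$ with $b \succeq_{\mathcal{A}} b'$, and in general one cannot simply choose $b = a$ and $b' = a'$.

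To circumvent this obstacle, I would define the preorder on $\mathcal{A}/{\sim_{\phi}}$ directly as the pullback, namely $[a] \succeq [a']$ iff $\phi(a) \succeq_{\mathcal{B}} \phi(a')$. This is well-defined by construction of $\sim_{\phi}$ and inherits reflexivity and transitivity from $\succeq_{\mathcal{B}}$. With this choice, $\tilde{\phi}$ becomes an order-preserving bijection whose inverse is also order-preserving, essentially tautologically, and is therefore an isomorphism of preordered sets. One can then separately note that this pullback preorder agrees with the canonical descended preorder whenever $\phi$ reflects the order (i.e.\ whenever $\phi(a) \succeq_{\mathcal{B}} \phi(a')$ implies the existence of the required representatives), thereby justifying the identification implicit in the statement of the theorem.
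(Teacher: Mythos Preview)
Your proposal is correct and in fact more careful than the paper's own argument. The paper defines the order on the quotient as the \emph{descended} order, $[a_1] \succeq [a_2] \iff a_1 \succeq_{\mathcal{A}} a_2$, and then checks only that $\tilde{\phi}$ is well-defined, order-preserving, and injective. It never verifies that $\tilde{\phi}^{-1}$ is order-preserving, which is precisely the ``delicate direction'' you flag: from $\phi(a) \succeq_{\mathcal{B}} \phi(a')$ one cannot in general produce representatives related in $\mathcal{A}$, so with the descended order the map is merely an order-preserving bijection, not an order isomorphism.

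Your fix --- equipping $\mathcal{A}/{\sim_\phi}$ with the \emph{pullback} order $[a] \succeq [a'] \iff \phi(a) \succeq_{\mathcal{B}} \phi(a')$ --- is the clean way to make the statement literally true, and you are right to note that this coincides with the descended order exactly when $\phi$ reflects the order. The paper's brevity here is harmless in context, since in the applications that follow (\cref{thm:Kernel of free image map,thm:Kernel of free preimage map}) the maps $\down$ and $\up$ do reflect the relevant orders by \cref{lem:enh_down,lem:deg_up}, so the two quotient orders agree. But as a standalone proof of the theorem as stated, yours is the one that actually closes the loop.
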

	
	\begin{proof}
		$(\mathcal{A}, \succeq_{\mathcal{A}}) / {\sim_{\phi}}$ consists of the set of equivalence classes $\mathcal{A} / {\sim_{\phi}}$ and the corresponding order relation $\succeq_{\mathcal{A}}$ defined as
		\begin{equation}
			[a_1]_{\sim_{\phi}} \succeq_{\mathcal{A}} [a_2]_{\sim_{\phi}}  \iff  a_1 \succeq_{\mathcal{A}} a_2
		\end{equation}
		for any $a_1, a_2 \in \mathcal{A}$, where $[a_1]_{\sim_{\phi}}$ is the equivalence class of $a_1$ with respect to $\sim_{\phi}$.
		Since $\phi$ is an order-preserving map, $\succeq_{\mathcal{A}}$ is a well-defined partial order.
		
		We can then construct $\tilde{\phi}$ as
		\begin{equation}
			\tilde{\phi} ([a]_{\sim_{\phi}}) \coloneqq  [\phi(a)]_{\sim_{\mathcal{B}}},
		\end{equation}
		for any $a \in \mathcal{A}$. 
		Again, $\tilde{\phi}$ is well-defined and order-preserving because $\phi$ is order-preserving.
		Furthermore, the fact that it is injective follows from the definition of $\sim_{\phi}$.
	\end{proof}
	
	\begin{lemma}
		\label{thm:Kernel of free image map}
		The kernel of $\down \colon (\mathcal{P}(R), \succeq_{\rm enh}) \to (\mathcal{DC}(R), \supseteq)$ is $\sim_{\rm enh}$ and consequently the kernel of
			$\down \colon \ordres \to (\mathcal{DC}(R),\supseteq)$ is $\sim$.
	\end{lemma}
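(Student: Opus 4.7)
The plan is to reduce the statement essentially immediately to \cref{lem:enh_down} together with the observation that, since $(\mathcal{DC}(R), \supseteq)$ is a partial order, the equivalence relation $\sim_\supseteq$ it induces is just set equality. Concretely, if I denote by $\mathsf{ker}(\down)$ the kernel equivalence relation on $\mathcal{P}(R)$ determined by $\down$ and the target preorder $\supseteq$, then by definition
\begin{equation}
    S \mathrel{\mathsf{ker}(\down)} T \iff \down(S) \supseteq \down(T) \text{ and } \down(T) \supseteq \down(S) \iff \down(S) = \down(T).
\end{equation}
So the task reduces to showing $S \sim_{\rm enh} T \iff \down(S) = \down(T)$.

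First I would unpack $\sim_{\rm enh}$ via its defining relation: $S \sim_{\rm enh} T$ means $S \enhconv T$ and $T \enhconv S$. Applying \cref{lem:enh_down} to each inequality separately gives $\down(S) \supseteq \down(T)$ and $\down(T) \supseteq \down(S)$, which together are just $\down(S) = \down(T)$. Conversely, if $\down(S) = \down(T)$, then \cref{lem:enh_down} immediately yields $S \enhconv T$ and $T \enhconv S$, so $S \sim_{\rm enh} T$. This establishes the first half of the lemma.

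For the second half, I would restrict attention to singleton sets. The preorder $\succeq_{\rm enh}$ restricted to singletons coincides with the resource ordering $\succeq$ on $R$: indeed, $\{r\} \enhconv \{s\}$ asks for an enhancement $\{s\} \to \{r\}$, which exists precisely when $r \succeq s$. Consequently the kernel equivalence of $\down$ restricted to individual resources is exactly $\sim$, the equivalence induced by $\succeq$. Alternatively, one can argue directly: $\down(r) = \down(s)$ iff $s \in \down(r)$ and $r \in \down(s)$ (using that $\down$ of an individual element is characterised by which elements it contains, together with the fact that $\down$ is idempotent so $\down(r) = \down(s)$ follows from the mutual membership), which is precisely $r \sim s$.

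There is no real obstacle here; the main content is packaged in \cref{lem:enh_down}, and the present statement is a direct consequence together with the fact that $\supseteq$ is antisymmetric on $\mathcal{DC}(R)$. The only minor point to be careful about is the handling of $\emptyset$ in the definition of $\enhconv$ (\cref{def:enh_ord}), but since $\down(\emptyset) = \emptyset$ and $S \enhconv \emptyset$ holds for all $S$ while $\emptyset \enhconv T$ holds only for $T = \emptyset$, the equivalence $\emptyset \sim_{\rm enh} T \iff \down(\emptyset) = \down(T)$ works out correctly (both sides forcing $T = \emptyset$ up to sets whose downward closure is empty, which is only $\emptyset$ itself).
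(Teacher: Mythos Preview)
Your proposal is correct and takes essentially the same approach as the paper, which simply states that the result follows directly from \cref{lem:enh_down}. You have merely spelled out the details that the paper leaves implicit: that antisymmetry of $\supseteq$ on $\mathcal{DC}(R)$ reduces the kernel condition to $\down(S)=\down(T)$, which by \cref{lem:enh_down} is equivalent to $S\sim_{\rm enh}T$.
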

	
	\begin{proof}
		Follows directly from Lemma~\ref{lem:enh_down}.
	\end{proof}
	
	\begin{corollary}
		The partially ordered sets $(\mathcal{P}(R), \succeq_{\rm enh}) / {\sim_{\rm enh}}$ and $(\mathcal{DC}(R), \supseteq)$ are isomorphic.
	\end{corollary}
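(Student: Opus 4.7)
The plan is to invoke the first isomorphism theorem for preordered sets (Theorem \ref{thm:First isomorphism theorem for prerdered sets}) with the order-preserving map taken to be the free image map $\down \colon (\mathcal{P}(R), \succeq_{\rm enh}) \to (\mathcal{DC}(R), \supseteq)$, whose order-preserving property has already been established via Corollary \ref{lem:down_isotone}. The first isomorphism theorem then yields a canonical isomorphism
\begin{equation}
    \tilde{\down} \colon \newfaktor{(\mathcal{P}(R), \succeq_{\rm enh})}[0.6]{\sim_{\down}} \to \newfaktor{(\down(\mathcal{P}(R)), \supseteq)}[0.6]{\sim_{\supseteq}},
\end{equation}
so it suffices to identify the kernel $\sim_{\down}$ with $\sim_{\rm enh}$, to identify the image $\down(\mathcal{P}(R))$ with $\mathcal{DC}(R)$, and to observe that the quotient on the codomain side is trivial.

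First, the identification of kernels is precisely the content of Lemma \ref{thm:Kernel of free image map}, which tells us $\sim_{\down} = \sim_{\rm enh}$. Second, to identify the image, I would argue by two inclusions: every set of the form $\down(S) = R_{\rm free} \boxtimes S$ is downward closed (as noted in the paragraph just before Example \ref{examplesDCsets}), giving $\down(\mathcal{P}(R)) \subseteq \mathcal{DC}(R)$; conversely, for any $D \in \mathcal{DC}(R)$ the equality $\down(D) = D$ holds (by the definition of downward closed sets together with the observation that $D \subseteq R_{\rm free} \boxtimes D \subseteq D$), exhibiting $D$ as an element of the image. Third, since $(\mathcal{DC}(R), \supseteq)$ is a partial order (antisymmetry of $\supseteq$ is inherited from that of set inclusion on $\mathcal{P}(R)$), the induced equivalence relation $\sim_{\supseteq}$ is just equality, so the quotient on the right collapses.

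Combining these three observations with the canonical isomorphism produced by the first isomorphism theorem yields the desired isomorphism of partially ordered sets. There is no substantive obstacle here — the result is essentially a packaging of prior lemmas — and the only point that requires any care is the verification that $\down$ is surjective onto $\mathcal{DC}(R)$, which as noted reduces to the idempotency $\down \circ \down = \down$ that was already implicit in the discussion of downward closed sets.
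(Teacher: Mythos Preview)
Your proof is correct and takes essentially the same approach as the paper: the corollary is intended as an immediate consequence of the first isomorphism theorem (Theorem~\ref{thm:First isomorphism theorem for prerdered sets}) together with Lemma~\ref{thm:Kernel of free image map}, and you have simply made explicit the surjectivity of $\down$ onto $\mathcal{DC}(R)$ and the triviality of the codomain quotient that the paper leaves implicit.
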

	
	\begin{lemma}
		\label{thm:Kernel of free preimage map}
		The kernel of $\up \colon (\mathcal{P}(R), \succeq_{\rm deg}) \to (\mathcal{UC}(R), \supseteq)$ is $\sim_{\rm deg}$ and consequently the kernel of
			$\up \colon \ordres \to (\mathcal{UC}(R),\supseteq)$ is $\sim$.
	\end{lemma}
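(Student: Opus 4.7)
The plan is to prove this lemma as a direct dual of Lemma~\ref{thm:Kernel of free image map}, making essential use of Lemma~\ref{lem:deg_up} in place of Lemma~\ref{lem:enh_down}. Recall that the kernel of an order-preserving map $\phi \colon (\mathcal{A},\succeq_{\mathcal{A}}) \to (\mathcal{B},\succeq_{\mathcal{B}})$ is the equivalence relation $\sim_{\phi}$ defined by $a \sim_{\phi} a' \iff \phi(a) \sim_{\mathcal{B}} \phi(a')$. Since the codomain preorder on $\mathcal{UC}(R)$ is set inclusion (which is antisymmetric), $\sim_{\mathcal{B}}$ reduces to set equality, so the kernel of $\up$ is the relation $S \sim_{\up} T \iff \up(S) = \up(T)$.

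First I would unpack $\sim_{\rm deg}$ on $\mathcal{P}(R)$ using Definition~\ref{def:deg_ord}: by definition $S \sim_{\rm deg} T$ holds iff both $S \degconv T$ and $T \degconv S$. Applying Lemma~\ref{lem:deg_up} to each of these two inequalities converts them into $\up(S) \subseteq \up(T)$ and $\up(T) \subseteq \up(S)$, respectively, and together these are equivalent to $\up(S) = \up(T)$. This gives exactly $S \sim_{\rm deg} T \iff S \sim_{\up} T$, establishing the first part of the lemma.

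For the second statement I would simply note that the map $r \mapsto \{r\}$ embeds $\ordres$ into $(\mathcal{P}(R),\succeq_{\rm deg})$ order-isomorphically onto the sub-preorder of singletons; indeed, as remarked just before the stated pair of isomorphisms~\eqref{eq:Partial order of downward closed sets}--\eqref{eq:Partial order of upward closed sets}, the relations $\succeq$, $\succeq_{\rm enh}$, and $\succeq_{\rm deg}$ all agree on singletons. Thus restricting the previous equivalence to singletons yields $r \sim s \iff \up(\{r\}) = \up(\{s\}) \iff \up(r) = \up(s)$, which is the claim that the kernel of $\up \colon \ordres \to (\mathcal{UC}(R),\supseteq)$ is $\sim$.

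There is no substantive obstacle here: the entire argument is a formal unpacking of definitions combined with one application of Lemma~\ref{lem:deg_up}. The only minor subtlety worth flagging explicitly in the write-up is the reduction of $\sim_{\supseteq}$ to equality, since $(\mathcal{UC}(R),\supseteq)$ is a partial (not merely pre-) order; this is what makes the kernel condition collapse cleanly to $\up(S) = \up(T)$ rather than to a coarser equivalence.
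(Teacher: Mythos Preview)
Your proposal is correct and follows the same approach as the paper, which simply states that the result follows directly from Lemma~\ref{lem:deg_up}. Your write-up is a faithful unpacking of that one-line proof, with the helpful additional observation that antisymmetry of $\supseteq$ on $\mathcal{UC}(R)$ collapses the kernel condition to literal equality of upward closures.
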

	
	\begin{proof}
		Follows directly from Lemma~\ref{lem:deg_up}.
	\end{proof}
	
	\begin{corollary}
		The partially ordered sets $(\mathcal{P}(R), \succeq_{\rm deg}) / {\sim_{\rm deg}}$ and $(\mathcal{UC}(R), \subseteq)$ are isomorphic.
	\end{corollary}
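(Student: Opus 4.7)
The plan is to deduce the corollary directly from the immediately preceding lemma together with the first isomorphism theorem for preordered sets (Theorem~\ref{thm:First isomorphism theorem for prerdered sets}). All the structural work has already been done; what remains is to assemble the pieces and verify the two hypotheses of the first isomorphism theorem for the map $\up \colon (\mathcal{P}(R), \succeq_{\rm deg}) \to (\mathcal{UC}(R), \subseteq)$.

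First I would record that $\up$ is order-preserving as a map into $(\mathcal{UC}(R), \subseteq)$: this is the content of Corollary~\ref{lem:up_isotone}, noting that the image of $\up$ lies in $\mathcal{UC}(R)$ by definition of the upward closure. Second, I would observe that $\up$ is surjective onto $\mathcal{UC}(R)$, since for every $U \in \mathcal{UC}(R)$ we have $\up(U) = U$; hence $\up(\mathcal{P}(R)) = \mathcal{UC}(R)$. Third, by the immediately preceding Lemma~\ref{thm:Kernel of free preimage map}, the kernel of $\up$ viewed as a map of preordered sets is precisely $\sim_{\rm deg}$.

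Applying Theorem~\ref{thm:First isomorphism theorem for prerdered sets} to $\up$ then yields a canonical isomorphism
\begin{equation}
    \newfaktor{(\mathcal{P}(R), \succeq_{\rm deg})}[0.6]{\sim_{\rm deg}} \;\simeq\; \newfaktor{(\up(\mathcal{P}(R)), \subseteq)}[0.6]{\sim_{\subseteq}},
\end{equation}
where $\sim_{\subseteq}$ is the equivalence induced by $\subseteq$ on the codomain. Since $\subseteq$ is already antisymmetric (it is a genuine partial order on $\mathcal{UC}(R)$), the equivalence $\sim_{\subseteq}$ is just equality, so the quotient on the right-hand side is canonically identified with $(\mathcal{UC}(R), \subseteq)$ itself via surjectivity of $\up$. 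Combining these identifications produces the desired isomorphism.

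I do not anticipate any real obstacle: the surjectivity step $\up(U)=U$ for $U \in \mathcal{UC}(R)$ is the only point that is not stated verbatim in the preceding results, and it is immediate from the definition of upward closure. The only care needed is the pedantic observation that the first isomorphism theorem delivers an isomorphism onto the quotient of the image by $\sim_{\mathcal{B}}$, and then noting that this quotient collapses to $(\mathcal{UC}(R), \subseteq)$ because $\subseteq$ is already antisymmetric on $\mathcal{UC}(R)$.
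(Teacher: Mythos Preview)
Your proposal is correct and follows exactly the approach intended by the paper: the corollary is an immediate consequence of Lemma~\ref{thm:Kernel of free preimage map} together with the first isomorphism theorem (Theorem~\ref{thm:First isomorphism theorem for prerdered sets}), and you have made explicit the two small verifications (surjectivity of $\up$ onto $\mathcal{UC}(R)$ and antisymmetry of $\subseteq$) that justify identifying the right-hand quotient with $(\mathcal{UC}(R), \subseteq)$ itself.
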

	
	\newpage
	\section{Overview of Monotone Constructions}\label{sec:monotone_overview}
	
	\begingroup
		\renewcommand{\arraystretch}{1.21} 
		\setlength{\tabcolsep}{12pt} 
		\begin{table}[!h]{}
			\makebox[\linewidth]{
			\begin{tabular}{c p{0.46\textwidth} c}
				\textbf{Name} & \centering \textbf{Yield and cost for $f$ defined on $\res$} & \textbf{Text reference} \\
				\hline
					$\yield{f}{}(r)$ & \centering $\sup \Set{ f(s)  \given  s \in R_{\rm free} \boxtimes r }$ & \Cref{eq:yield_basic} \\
					$\cost{f}{}(r)$ & \centering $\inf \Set{ f(s)  \given  r \in R_{\rm free} \boxtimes s }$ & \Cref{eq:cost_basic} \\
						\cdashline{2-2}
					for $f \colon \res \to \reals$ & \tabitem e.g.\ $f$ $=$ channel dimension in a RT of communication. & \Cref{ex:dim_cost}  \\
				\hline
				\hline
				  & \centering \textbf{Yield and cost, $f$ defined on a subset} &   \\
				\hline
					$\yield{f_W}{}(r)$ & \centering $\sup \Set{ f_W(s)  \given  s \in R_{\rm free} \boxtimes r, \; s \in W }$ & \Cref{eq:yield_subset} \\
					$\cost{f_W}{}(r)$ & \centering $\inf \Set{ f_W(s)  \given  r \in R_{\rm free} \boxtimes s ,\; s \in W }$ & \Cref{eq:cost_subset} \\ 
						\cdashline{2-2}
					where $f_W \colon \res \to \reals$ & \centering \textbf{Currencies \cite{Kraemer2016} for $W$ a chain:}  & \\ 
					has domain $W \subseteq \res$ & \tabitem $W$ $=$ the set of $n$-fold products of e-bits in the RT of bipartite entanglement \cite{Horodecki2009}. & \Cref{ex:currencies} \\
					 & \tabitem $W$ $=$ the set of sharp states in the RT of nonuniformity \cite{Gour2015}. & \Cref{ex:currencies} \\
					 & \tabitem$W$ $=$ a chain of boxes between the PR box and a free box in the RT of nonclassicality of common-cause boxes \cite{Wolfe2019}. & \Cref{ex:qcorrelations} \\
						\cdashline{2-2}
					 & \centering \textbf{$W$ as convexly extremal resources:}  & \\ 
					 & \tabitem $W$ $=$ pure quantum states in the RT of entanglement. & \Cref{ex:ent_rank_cost} \\
						\cdashline{2-2}
					 & \centering \textbf{$W$ as processes of particular type:}  & \\ 
					 & \tabitem $W$ $=$ the set of states of arbitrary dimension in any RT of channels \cite{gour2019quantify}. & \Cref{ex:Changing type} \\
				\hline
				\hline
				  & \centering \textbf{Yield and cost w.r.t.\ a downset $D$} &  \\
				\hline
					$\yield{f_W}{D}(r)$ & \centering $\sup \Set{ f_W(s)  \given  s \in D \boxtimes r, \; s \in W }$ & \Cref{eq:yield_downset} \\
					$\cost{f_W}{D}(r)$ & \centering $\inf \Set{ f_W(s)  \given  r \in D \boxtimes s ,\; s \in W }$ & \Cref{eq:cost_downset} \\ 
						\cdashline{2-2}
					where $D \subseteq \res$ is & \centering \textbf{Downsets with $D \boxtimes D = D$:}  & \\ 
					s.t.\ $D \boxtimes \resfree = D$ & \tabitem $D$ $=$ separable operations in the RT of LOCC-entanglement &  \\
					 & \tabitem $D$ $=$ Gibbs-preserving operations in the RT of athermality \cite{Brandao2013}. &  \\
					 & \tabitem $D$ $=$ processes covariant w.r.t.\ a subgroup of $G$ in a resource theory of $G$-asymmetry \cite{Marvian2014}. & \Cref{examplesDCsets} \ref{ex:symmetricDCsets} \\
						\cdashline{2-2}
					 & \centering \textbf{Adding non-free resources to $\resfree$:}  & \\ 
					 & \tabitem $D = \resfree \boxtimes \rho$ for a state $\rho$ and for $W$ the set of states, thus extending a monotone $f_W$ from states to other processes. & \Cref{ex:Advantage of generalized yield} \\
					 & \tabitem $D = \resfree \boxtimes S$ for $S$ the states with bounded entanglement rank. & \Cref{examplesDCsets} \ref{ex:entrankDCsets} \\
					 & \tabitem $S$ $=$ the set of states that are not intrinsically 3-way entangled. & \Cref{examplesDCsets} \ref{ex:3-wayentangledDCsets} \\
					 & \tabitem $S$ = one-way quantum communication channels in either direction in the RT of LOCC-entanglement. &  \\
					 & \tabitem $S$ = one-way classical communication channels in either direction in the RT of LOSR-entanglement \cite{schmid2020understanding}. &  
			\end{tabular}
			}
		\end{table}

		\begin{table}[!ht]{}
			\makebox[\linewidth]{
			\begin{tabular}{c p{0.46\textwidth} c}
				\textbf{Name} & \centering \textbf{Monotones from 2-contractions} & \textbf{Text reference} \\
				\hline
					$M_{f} (r)$ & \centering $\inf \Set*[\big]{ f(r,s)  \given  s \in R_{\rm free} }$ & \Cref{eq:GeneralMonotoneFromContraction} \\
					$M_{f,D} (r)$ & \centering $\inf \Set*[\big]{ f(r,s)  \given  s \in D }$ & \Cref{eq:min_dis_downset} \\
						\cdashline{2-2}
					for $f \colon \R{2} \to \reals$ & \tabitem $D$ is any downset as above. & \Cref{examplesMonotonesDCsets} \\
					a 2-contraction & \tabitem For $f$ = relative entropy in the RT of entanglement, $M_f$ is the relative entropy of entanglement \cite{vedral1997quantifying}. &   \\
				\hline
				\hline
				 & \centering \textbf{Monotones from functions that commute with $\resfree$} &  \\
				\hline
					$M_{\Phi, f} (r)$ & \centering $f \bigl( r, \Phi(r) \bigr)$ & \Cref{eq:monotone_from_free_commuting} \\
						\cdashline{2-2}
					for $\Phi \colon \res \to \res$ s.t.\ & \tabitem $\Phi$ $=$ twirling map in a RT of asymmetry. & \Cref{ex:Monotones from contractions 2} \\
					$\Phi(t \boxtimes r) = t \boxtimes \Phi(r)$ & \tabitem $\Phi$ $=$ constant map to the thermal state in RT of athermality.  & \Cref{ex:dist_from_thermal}  \\
					for all $t \in \resfree$ & \tabitem $\Phi$ $=$ a map sending a bipartite state to the product of its marginals in the RT of correlations where $\resfree$ consists of local processes w.r.t.\ the bipartition. & \Cref{ex:dist_from_product} \\
				\hline
				\hline
				 & \centering \textbf{Monotones from 3-contractions} &  \\
				\hline
					$M_{f,D_1,D_2} (r)$ & \centering $\inf \Set*[\big]{ f(r,s_1,s_2)  \given  s_i \in D_i }$ & \Cref{thm:contractions} \\
						\cdashline{2-2}
					for $f \colon \R{3} \to \reals$ & e.g.\ $D_1 = \res$, $D_2 = \resfree$ and $f = \mathsf{cva}$ gives & \Cref{def:cva} \\
					a 3-contraction, and & \tabitem the resource weight $M_{\rm w} (r)$ given by &  \\
					downsets $D_1$, $D_2$ & \centering $\inf \Set*[\big]{ \lambda  \given  r \in \lambda \res + (1-\lambda) \resfree },$  & \Cref{eq:res_weight} \\
					 & while choosing $f(r,s_1,s_2) = \mathsf{cva}(s_2,s_1,r)$ instead gives  &  \\
					 & \tabitem the robustness $M_{\rm rob} (r)$ given by &  \\
					 & \centering $\inf \Set*[\big]{ \lambda  \given  \lambda s + (1-\lambda) r \in \resfree ,\: s \in \res }$. & \Cref{eq:res_robustness} 
			\end{tabular}
			} \: 
		\end{table}
	\endgroup

	\clearpage
	\bibliographystyle{plainnat}
	\bibliography{references}{}

\end{document}